\newenvironment{mathlist}
{\begin{enumerate}[label={\upshape(\roman*)}, align=left, widest=iii, leftmargin=*]}
{\end{enumerate}\ignorespacesafterend}
\newtheorem*{rep@theorem}{\rep@title}
\newcommand{\newreptheorem}[2]{%
\newenvironment{rep#1}[1]{%
 \def\rep@title{#2 \ref*{##1}}%
 \begin{rep@theorem}}%
 {\end{rep@theorem}}}
\newtheorem{theorem}{Theorem}[section]
\newtheorem{proposition}[theorem]{Proposition}
\newtheorem{lemma}[theorem]{Lemma}
\newtheorem{corollary}[theorem]{Corollary}
\theoremstyle{definition}
\newenvironment{definition}
  {\pushQED{\qed}\definitionX}
  {\popQED\enddefinitionX}
\newenvironment{remark}
  {\pushQED{\qed}\remarkX}
  {\popQED\enddefinitionX}
\newenvironment{example}
  {\pushQED{\qed}\exampleX}
  {\popQED\enddefinitionX}
\newcommand\VarGiven{\nonscript\:\delimsize\vert\nonscript\:\mathopen{}}
\newcommand{\Ball}[2]{#1_{[#2]}}
\newcommand{\Cat}[1]{\mathrm{Cat}\mleft(#1\mright)}
\newcommand{\FatCat}[1]{\mathrm{FCat}\mleft(#1\mright)}
\newcommand{\lowprime}{\mkern-.5mu\raise0.4ex\hbox{$\scriptstyle\prime$}}
\newcommand{\HIGH}{\cellcolor[gray]{0.9}}
\newcommand{\NS}[1]{\textcolor{gray}{#1}}
\newcommand{\revision}{}
\title{\vspace{-1.3ex}Revisiting Shao and Sokal's $B_2$ index\linebreak of phylogenetic balance}
\begin{document}

\author[1,2]{François Bienvenu}
\author[1,3]{Gabriel Cardona}
\author[1]{Celine Scornavacca}

\affil[1]{Institut des Sciences de l'Evolution de Montpellier
(Université de Montpellier, CNRS,\newline IRD, EPHE),
F-34095 Montpellier, France}
\affil[2]{UMR AGAP
(Université de Montpellier, CIRAD, INRAE, L'institut Agro),\newline
F-34398 Montpellier, France}
\affil[3]{Department of Mathematics and Computer Science,
University of the Balearic Islands,\newline {Ctra.\,Valldemossa km\,7.5},
E-07120 Palma, Spain}

\maketitle

\begin{abstract}
Measures of phylogenetic balance, such as the Colless and Sackin indices, play
an important role in phylogenetics. Unfortunately, these indices are
specifically designed for phylogenetic trees, and do not extend naturally to
phylogenetic networks (which are increasingly used to describe reticulate
evolution). This led us to consider a lesser-known balance index, whose
definition is based on a probabilistic interpretation that is equally
applicable to trees and to networks.  This index, known as the $B_2$ index,
was first proposed by Shao and Sokal in 1990. Surprisingly, it does not seem
to have been studied mathematically since. Likewise, it is used only
sporadically in the biological literature, where it tends to be viewed as
arcane. In this paper, we study mathematical properties of $B_2$ such as its
expectation and variance under the most common models of random trees and its
extremal values over various classes of phylogenetic networks. We also assess
its relevance in biological applications, and find it to be comparable to
that of the Colless and Sackin indices.  Altogether, our results call for a
reevaluation of the status of this somewhat forgotten measure of phylogenetic
balance.
\end{abstract}

\tableofcontents

\vspace{2ex} 

\section{Introduction}

\subsection{Biological context} \label{secBiological}

Whether it is to compare them, to perform simple statistical tests or
to identify general trends or patterns, it is often useful for biologists
to study trees through the lens of one or more summary statistics. In
phylogenetics, many of the most prominent summary statistics aim at capturing
the same intuitive idea: that some trees look more ``symmetric'' than others.
These statistics are collectively known as \emph{balance indices}.

Among the multitude of balance indices that have been proposed over the years
(see e.g.~\cite[Chapter~33]{Felsenstein2003}), two stand out by
their historical importance -- and are still by far
the most widely used today: the Colless index and the Sackin index.
The Colless index, introduced by Colless in~\cite{Colless1982},
is specific to rooted binary trees. It is defined as
\begin{equation} \label{eqColless}
  \mathrm{Colless}(T) \;=\;
  \sum_{i \in I} \Abs{\lambda_1(i) - \lambda_2(i)}\,, 
\end{equation}
where the sum runs over the internal vertices of~$T$ and
$\Set{\lambda_1(i), \lambda_2(i)}$ is the number of leaves in each of the two
subtrees descended from~$i$.
The Sackin index -- which, contrary to what its name suggests, was
introduced by Shao and Sokal in \cite{Shao1990} -- is defined for any rooted
tree by the formula
\begin{equation} \label{eqSackin}
  \mathrm{Sackin}(T) \;=\; \sum_{\ell \in L} \delta_\ell \,, 
\end{equation}
where the sum runs over the leaves of~$T$ and
$\delta_\ell$ denotes the depth of~$\ell$ (i.e.\ the number of edges of the
path joining it to the root).
Note that although we refer to them as balance indices, the Colless and
Sackin indices actually measure the \emph{im}balance of a phylogeny:
the higher they are, the less balanced the phylogeny.

One of the problems of the Colless and Sackin indices, which was the starting
point of this work, is that there is no single, natural way to extend their
definition to phylogenetic networks. Although hardly a concern until recently,
this is bound to become a major issue as the mounting evidence of
the major roles played by phenomena such as gene transfers and hybridization
forces biologists to abandon trees in favor of
networks \cite{Huson2006, Bapteste2013}.

\pagebreak 

While it is relatively easy to come up with network statistics
that reduce to the Colless\,/\,Sackin index in trees, it is hard to favor
one over the other -- or even to see why they should conform with our intuition
of what ``balance'' is.
This led us to use a different approach and consider a lesser-known
-- and to some extent forgotten~-- measure of balance known as
Shao and Sokal's $B_2$ index.  Surprisingly given its very natural
interpretation and the abundance of mathematical papers studying the properties
of other balance indices \cite[to name but a few]{Rogers1994,Rogers1996,
Blum2005,Blum2006a, Coronado2020,Coronado2020b, Cardona2013}, it seems that the
mathematical properties of this balance index have never been studied before.
Meanwhile, the current consensus in the biological literature seems to be that
$B_2$ is not as useful as other balance indices; but on closer inspection this
idea can mostly be traced to a single
study~\cite{Agapow2002}. Moreover, the specific assumptions made in
that study limit the scope of its conclusions.

The aim of this paper is to fill in the current gap in
the literature around $B_2$, in particular concerning its mathematical
properties. Our main contribution is therefore a series of propositions
and theorems about $B_2$,
but we also include a statistical analysis that strongly suggests that
the biological relevance of this index may have been underestimated when
compared to that of the Colless and Sackin indices -- and thus calls for more
empirical work on the subject.

\subsection{Definition and basic properties of $B_2$}

In this section, we recall the intuition behind Shao and Sokal's $B_2$ index
and give its formal definition in the context of phylogenetic networks.
We then list some of its elementary properties.
For this, we first need to introduce some vocabulary.

\begin{definition} \label{defRootedPhylogeny}
A \emph{rooted phylogeny} is a directed acyclic graph
that has exactly one vertex with no incoming edges. This vertex is called
the \emph{root} of the phylogeny, and the vertices with no outgoing edges
are called its \emph{leaves}.
\end{definition}

An intuitive idea in order to measure the ``balance'' of a rooted phylogeny is
to send water from its root, then let that water trickle down the
edges and accumulate in the leaves: the more evenly the water ends up being
distributed among the leaves, the more balanced the phylogeny.

In order to formalize this idea, we consider a simple forward random walk
started from the root -- that is, at each step we follow one of the outgoing
edges of the current vertex, uniformly at random, until we get trapped in
a leaf.  In a finite phylogeny, the stationary distribution of this
random walk is a probability distribution $(p_\ell)_{\ell \in L}$
on the leaf-set~$L$ of the phylogeny. To quantify the uniformity
of this distribution, we compute its Shannon entropy. This gives us the
following definition, which is due to Shao and Sokal \cite{Shao1990}.

\begin{definition} \label{defB2}
The $B_2$ index of a finite rooted phylogeny $N$ is defined as
\[
  B_2(N) = - \sum_{\ell \in L} p_\ell \log_2 p_\ell \,,
\]
where the sum runs over the leaves and $p_\ell$ is the probability that
the simple forward random walk started from the root ends in $\ell$.
\end{definition}

\begin{remark} \label{remChoiceLog}
Using base-2 logarithms in this definition is more of a convention than a
mathematical necessity. However, we will see that this turns out to be
convenient when working with binary trees, which are prominent in biology.
\end{remark}

Note that, although technically valid, Definition~\ref{defB2} is not relevant
in the case of infinite rooted phylogenies.  Indeed, the random walk can then
follow infinite paths without ever reaching a leaf. As a result, the parts of
the phylogeny that do not subtend any leaf are not accounted for by this
definition (think of the infinite binary tree, whose $B_2$ index would be 0).

Although this may not seem relevant for biological applications, from a
mathematical point of view it is useful to define $B_2$ for 
infinite phylogenies (for instance, to give a meaning to its expected value under
models that can produce infinite phylogenies, such as Galton--Watson
trees; or to simplify the study of its limiting behaviour in large phylogenies).
We do so in the context of locally finite phylogenies, i.e.\ phylogenies where
every vertex has a finite degree.

\begin{definition} \label{defB2Infinite}
The $B_2$ index of a locally finite rooted phylogeny $N$ is defined as
\[
  B_2(N) = \lim_{k\to\infty} B_2(\Ball{N}{k}) \,, 
\]
where $\Ball{N}{k}$ denotes the ball of radius $k$ centered at the root in $N$,
that is, the subgraph of $N$ induced by the vertices whose distance to
the root is at most~$k$.
\end{definition}

In particular, Definition~\ref{defB2Infinite} ensures that if $(N_i)$ is a
sequence of rooted phylogenies that converges in distribution to~$N$ (in the
local topology -- see e.g.~\cite{Curien2018}), then $B_2(N_i)$ converges in
distribution to $B_2(N)$.

\begin{example}
Let $\mathrm{CB}(h)$ be the complete binary tree with height~$h$, i.e.\
the fully symmetric binary tree with $n = 2^h$ leaves. Then,
\[
  B_2(\mathrm{CB}(h)) \;=\; \log_2 n \;=\; h\,. \qedhere
\]
\end{example}

\begin{example}
Let $\Cat{n}$ be the caterpillar with $n$ leaves (sometimes also known
as the comb), i.e.\ the rooted binary tree with $n$ leaves where every
internal node has at least one child that is a leaf. Then,
\[
  B_2(\Cat{n)} \;=\; 2 - 2^{-n + 2} \,. \qedhere
\]
\end{example}

Before closing this section and listing our main results,
let us already point out some properties of
$B_2$ that follow immediately from its definition.

\begin{proposition} \label{propRangeB2}
Let $N$ be a finite rooted phylogeny with $n$ leaves. Then,
\[
  0 \;\leq\; B_2(N) \;\leq\; \log_2 n  \,.
\]
\end{proposition}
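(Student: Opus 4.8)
The plan is to recognize that, by Definition~\ref{defB2}, $B_2(N)$ is exactly the Shannon entropy of the distribution $(p_\ell)_{\ell\in L}$, and then to invoke the two classical extremal bounds on Shannon entropy. Before anything else I would check that $(p_\ell)_{\ell\in L}$ really is a probability distribution: since $N$ is finite and its underlying graph is a directed acyclic graph with a single source (the root), the forward random walk reaches a sink after finitely many steps almost surely, so $p_\ell \geq 0$ for every leaf and $\sum_{\ell\in L} p_\ell = 1$. Throughout I would adopt the usual convention $0\log_2 0 = 0$, justified by $\lim_{x\to 0^+} x\log_2 x = 0$, so that leaves not reached by the walk contribute nothing to the sum.

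For the lower bound, I would argue term by term: since $0 \le p_\ell \le 1$, each summand satisfies $-p_\ell\log_2 p_\ell \ge 0$, and summing over $\ell$ yields $B_2(N)\ge 0$. (Equality holds precisely when the walk is trapped in a single leaf with probability~$1$.)

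For the upper bound, I would use the concavity of $\log_2$ through Jensen's inequality. Writing $S = \{\ell : p_\ell > 0\}$ and treating $(p_\ell)_{\ell\in S}$ as weights summing to~$1$,
\[
  B_2(N) = \sum_{\ell\in S} p_\ell \log_2\frac{1}{p_\ell}
  \;\le\; \log_2\Bigl(\sum_{\ell\in S} p_\ell\cdot\frac{1}{p_\ell}\Bigr)
  = \log_2 \Abs{S} \;\le\; \log_2 n,
\]
where the last inequality holds because $\Abs{S}\le \Abs{L} = n$. Alternatively, the same bound follows from Gibbs' inequality (nonnegativity of the Kullback--Leibler divergence) by comparing $(p_\ell)$ with the uniform distribution on the $n$ leaves, which gives $0 \le \sum_{\ell} p_\ell\log_2(n\, p_\ell) = \log_2 n - B_2(N)$.

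Since these are the textbook extremal bounds for Shannon entropy, I do not expect any genuine obstacle. The only points requiring a little care are the $0\log_2 0$ convention and the fact that the walk need not reach every leaf, so that the support $S$ may be a strict subset of $L$ -- which only strengthens the upper bound.
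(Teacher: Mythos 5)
Your proof is correct and takes essentially the same route as the paper, which simply cites the classical extremal bounds on Shannon entropy; you have just written out the standard details (nonnegativity of each term and Jensen's inequality for the upper bound) that the paper leaves implicit.
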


\begin{proof}
This follows immediately from the definition of $B_2$ as a Shannon entropy.
These bounds are tight, but they can be slightly improved when considering more
restricted classes of phylogenies. This is discussed in
Section~\ref{secExtrem}.
\end{proof}

\begin{proposition} \label{propB2BinaryTrees}
If $T$\! is a rooted binary tree, then letting $\delta_{\ell}$ denote the depth
of~$\ell$ (i.e.\ the number of edges of the path joining it to the root)
we have
\[
  B_2(T) \;=\; \sum_{\ell \in L} \delta_\ell \, 2^{-\delta_\ell} \,.
\]
\end{proposition}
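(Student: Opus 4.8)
The plan is to compute the absorption probabilities $p_\ell$ explicitly and then substitute them into the Shannon-entropy formula of Definition~\ref{defB2}. The whole argument hinges on one structural observation: in a rooted binary tree every internal vertex has out-degree exactly~$2$. Consequently, at each step the simple forward random walk follows either of the two available edges with probability~$1/2$, independently of the past.

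Since $T$ is a tree, there is a \emph{unique} directed path from the root to any given leaf~$\ell$, and by definition its length is the depth~$\delta_\ell$. To reach~$\ell$, the walk must make the correct choice at each of the $\delta_\ell$ internal vertices along this path; as these choices are independent and each has probability~$1/2$, I obtain
\[
  p_\ell \;=\; 2^{-\delta_\ell}\,.
\]
As a consistency check, this recovers $\sum_{\ell \in L} p_\ell = 1$, which is the Kraft equality for full binary trees, so the $p_\ell$ do form the claimed distribution on $L$.

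It then remains only to plug this into Definition~\ref{defB2}:
\[
  B_2(T) \;=\; -\sum_\ell p_\ell \log_2 p_\ell \;=\; -\sum_\ell 2^{-\delta_\ell} \log_2 2^{-\delta_\ell} \;=\; \sum_\ell \delta_\ell\, 2^{-\delta_\ell}\,,
\]
where the last equality uses $\log_2 2^{-\delta_\ell} = -\delta_\ell$. This is precisely the point at which the base-$2$ convention of Remark~\ref{remChoiceLog} pays off, turning the logarithm into the integer depth and leaving the stated closed form.

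There is no genuine obstacle here: the only thing requiring care is the claim that every internal vertex has out-degree exactly~$2$ (so that each step of the walk is a fair coin flip and the choices along the root-to-leaf path multiply to $2^{-\delta_\ell}$), which is exactly the content of the phrase \emph{rooted binary tree}. Everything else is a direct substitution.
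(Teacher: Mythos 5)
Your proof is correct and follows the same route as the paper: the key observation that the walk makes $\delta_\ell$ fair binary choices along the unique root-to-leaf path, giving $p_\ell = 2^{-\delta_\ell}$, followed by direct substitution into the entropy formula. The paper's proof is just a more terse version of yours.
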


\begin{remark}
This was already pointed out by Shao and Sokal in \cite{Shao1990}, and in
fact in subsequent works using $B_2$ this expression is almost invariably
used as its definition, without reference to its probabilistic
interpretation.
\end{remark}

\begin{proof}
To obtain this expression from Definition~\ref{defB2}, it suffices to note
that, in a binary tree, the random walk has exactly $\delta_{\ell}$
``left\,/\,right'' decisions to make in order to get to $\ell$. As a result,
$p_\ell = 2^{-\delta_\ell}$ and the proposition follows.
\end{proof}

The next propositions are simple observations which we state as formal
propositions to avoid having to re-detail them several times.
We group them here because they are of constant use
throughout the various sections of this document, and because we think they
give some useful intuition about $B_2$.
Readers who are less interested in the technical details may skip the rest of
this section and go directly to Section~\ref{secMainResults}, where we outline
our main results.

\begin{proposition} \label{propGrafting}
Let $N$ and $N'$ be two rooted phylogenies, and let $N''$ be the rooted
phylogeny obtained by grafting $N'$\! on a leaf \,$\ell^* \in N$, i.e.\ by
making the vertices of $N$ that point to $\ell^*$ point to the root
of $N'$ instead. Then,
\[
  B_2(N'') \;=\; B_2(N) \;+\; p_{\ell^*} B_2(N')\,, 
\]
where $p_{\ell^*}$ denotes the probability of reaching $\ell^*$ in~$N$.
\end{proposition}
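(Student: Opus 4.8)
The plan is to work directly from the probabilistic definition of $B_2$ as a Shannon entropy of the stationary distribution of the simple forward random walk. The key observation is that the random walk in $N''$ can be decomposed into two phases: the walk first explores the part of $N''$ coming from $N$, and only upon reaching the grafting point $\ell^*$ does it continue into the copy of $N'$. First I would describe the leaf-set of $N''$: it is $(L \setminus \{\ell^*\}) \cup L'$, where $L$ and $L'$ are the leaf-sets of $N$ and $N'$ respectively. The crucial point is that the random walk in $N''$ is unaffected by the grafting until it reaches $\ell^*$, because the out-edges of every vertex of $N$ other than those formerly pointing to $\ell^*$ are unchanged, and the vertices that pointed to $\ell^*$ now point to the root of $N'$ with the same transition probabilities.

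Next I would compute the hitting probabilities $q$ of the leaves of $N''$ in terms of the data of $N$ and $N'$. For a leaf $\ell \in L \setminus \{\ell^*\}$, its probability $q_\ell$ in $N''$ equals its probability $p_\ell$ in $N$, since the walk reaches such a leaf along exactly the same paths and with the same probabilities as in $N$. For a leaf $\ell' \in L'$, the walk must first reach $\ell^*$ (which happens with probability $p_{\ell^*}$, the hitting probability of $\ell^*$ in $N$), and then, starting afresh from the root of $N'$, reach $\ell'$ with probability $p'_{\ell'}$; by the Markov property these two events multiply, so $q_{\ell'} = p_{\ell^*}\, p'_{\ell'}$.

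With these probabilities in hand, the computation is a routine manipulation of the entropy sum. Writing $B_2(N'') = -\sum_{\ell} q_\ell \log_2 q_\ell$ and splitting the sum over $L \setminus \{\ell^*\}$ and $L'$, the first part contributes $-\sum_{\ell \neq \ell^*} p_\ell \log_2 p_\ell$, and the second part contributes $-\sum_{\ell'} p_{\ell^*} p'_{\ell'} \log_2\bigl(p_{\ell^*} p'_{\ell'}\bigr)$. Expanding the logarithm of the product in the second sum and using $\sum_{\ell'} p'_{\ell'} = 1$, one collects a term $-p_{\ell^*} \log_2 p_{\ell^*}$ together with $p_{\ell^*} B_2(N')$. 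The isolated term $-p_{\ell^*} \log_2 p_{\ell^*}$ then combines with the first part to reconstitute exactly $B_2(N)$, and the remaining $p_{\ell^*} B_2(N')$ gives the claimed formula.

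I do not expect a serious obstacle here, as the result is essentially a clean consequence of the Markov property and the additivity of entropy under this hierarchical decomposition. The only point requiring a little care is to handle the case where $N'$ is infinite, so that $B_2(N')$ is defined through the limit of Definition~\ref{defB2Infinite}; in that case I would argue that grafting commutes with taking balls around the root (up to the finitely many steps needed to reach $\ell^*$), so that the identity passes to the limit. For finite phylogenies, the argument above is complete as stated.
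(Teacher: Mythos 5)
Your proof is correct and follows essentially the same route as the paper's: identify the hitting probabilities $q_\ell = p_\ell$ for $\ell \in L \setminus \{\ell^*\}$ and $q_{\ell'} = p_{\ell^*} p'_{\ell'}$ for $\ell' \in L'$ via the Markov property, then split the entropy sum and use $\sum_{\ell'} p'_{\ell'} = 1$ to recombine the terms. The extra care you take with the infinite case via Definition~\ref{defB2Infinite} is a reasonable addition that the paper leaves implicit.
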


\begin{proof}
Let $L$ and $L'$ be the respective leaf-sets of $N$ and $N'$, and 
let $p_\ell$ and $p'_\ell$ denote the probability of reaching a leaf~$\ell$
in each of these phylogenies. Then,
\begin{align*}
  B_2(N'') \;&=\;
  -\sum_{\substack{\ell \in L\\ \ell \neq \ell^*}} p_\ell \log_2 p_\ell
  \;-\sum_{\ell \in L'} p_{\ell^*} p'_\ell \log_2 (p_{\ell^*} p'_\ell) \\
  \;&=\;
  -\sum_{\ell \in L} p_\ell \log_2 p_\ell + p_{\ell^*} \log_2 p_{\ell^*}
  - p_{\ell^*} \sum_{\ell \in L'} p'_\ell \log_2 p'_\ell
  - p_{\ell^*}\log_2 p_{\ell^*} \sum_{\ell \in L'} p'_\ell  \\[1ex]
  \;&=\;
  B_2(N) \;+\; p_{\ell^*} B_2(N')\,. \qedhere
\end{align*}
\end{proof}

Let us point out two particularly useful consequences 
of Proposition~\ref{propGrafting}.

\begin{corollary} \label{corGraftingCherry}
Let $N^*$\! be the rooted phylogeny obtained by grafting a cherry
(that is, two leaves with the same parent)
on a leaf~$\ell$ of a rooted phylogeny~$N$.
Then, $B_2(N^*) = B_2(N) + p_{\ell}$.
\end{corollary}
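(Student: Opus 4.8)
The plan is to derive Corollary~\ref{corGraftingCherry} as an immediate special case of Proposition~\ref{propGrafting}. The key observation is that grafting a cherry on a leaf $\ell$ amounts to grafting, in the sense of Proposition~\ref{propGrafting}, a specific small phylogeny $N'$ onto $\ell$: namely, the phylogeny consisting of a single root with two outgoing edges leading to two leaves. So the strategy is first to identify this $N'$, then to compute $B_2(N')$, and finally to substitute into the grafting formula.

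First I would set $N' = \mathrm{CB}(1)$, the complete binary tree of height one (a root with two leaf-children). Applying either Proposition~\ref{propB2BinaryTrees} (each of the two leaves has depth $1$, giving $2 \cdot (1 \cdot 2^{-1}) = 1$) or the earlier example $B_2(\mathrm{CB}(h)) = h$ with $h = 1$, I obtain $B_2(N') = 1$. Alternatively, this is just the entropy of the uniform distribution on two outcomes, which is $\log_2 2 = 1$.

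Next I would invoke Proposition~\ref{propGrafting} with this choice of $N'$. The grafting operation described there, which redirects the edges pointing to $\ell^*$ towards the root of $N'$, is exactly the operation of attaching a cherry at $\ell$. Writing $\ell^* = \ell$ and substituting $B_2(N') = 1$ into the formula
\[
  B_2(N'') \;=\; B_2(N) \;+\; p_{\ell^*}\, B_2(N')
\]
yields $B_2(N^*) = B_2(N) + p_\ell \cdot 1 = B_2(N) + p_\ell$, which is the claimed identity.

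Since this is purely a matter of specializing an already-proven general result, I do not anticipate any genuine obstacle. The only point requiring minor care is confirming that the cherry-grafting operation coincides with the grafting operation of Proposition~\ref{propGrafting}; this is immediate once one notes that the root of the cherry plays the role of the former leaf $\ell$ and that the probability of reaching $\ell$ in $N$ is preserved as the probability mass that gets redistributed between the two new leaves.
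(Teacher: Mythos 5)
Your proposal is correct and follows exactly the route the paper intends: the corollary is stated as an immediate consequence of Proposition~\ref{propGrafting}, obtained by taking $N'$ to be the two-leaf binary tree with $B_2(N') = 1$. Nothing is missing.
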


\begin{corollary} \label{corRootSplitRecursion}
Let $N'$ and $N''$ be two rooted phylogenies, and let $N = N' \oplus N''$ 
be the rooted phylogeny obtained by creating a new root and making it
point to the roots of $N'$ and $N''$. Then,
\[
  B_2(N) \;=\; \frac{1}{2}\big(B_2(N') + B_2(N'') \big) \;+\; 1\,.
\]
\end{corollary}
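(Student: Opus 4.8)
The plan is to apply Proposition~\ref{propGrafting} twice, using $N' \oplus N''$ as a grafting of both $N'$ and $N''$ onto a suitable base phylogeny, and then simplify.

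First, I would identify the right base phylogeny: let $B$ be the ``cherry'' consisting of a single root pointing to two leaves $\ell_1$ and $\ell_2$. Then $N = N' \oplus N''$ is precisely the phylogeny obtained from $B$ by grafting $N'$ on $\ell_1$ and $N''$ on $\ell_2$. The key numerical input is that in $B$, the simple forward random walk chooses each outgoing edge of the root with probability $\tfrac12$, so $p_{\ell_1} = p_{\ell_2} = \tfrac12$; moreover $B_2(B) = -\tfrac12\log_2\tfrac12 - \tfrac12\log_2\tfrac12 = 1$.

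Next, I would graft one network at a time and invoke Proposition~\ref{propGrafting} for each step. Grafting $N'$ onto $\ell_1$ gives an intermediate phylogeny $B_1$ with $B_2(B_1) = B_2(B) + \tfrac12\, B_2(N')$. Grafting $N''$ onto $\ell_2$ (whose reaching probability is unchanged at $\tfrac12$, since the two branches of $B$ are independent) then yields $B_2(N) = B_2(B_1) + \tfrac12\, B_2(N'')$. Combining the two steps gives
\[
  B_2(N) \;=\; 1 \;+\; \tfrac12\, B_2(N') \;+\; \tfrac12\, B_2(N'')\,,
\]
which is exactly the claimed identity.

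There is essentially no hard part here: the result is a direct specialization of Proposition~\ref{propGrafting}, and the only thing to verify is the bookkeeping that the two grafting operations are independent (grafting on $\ell_1$ does not alter $p_{\ell_2}$, and vice versa) and that $B_2(B) = 1$. If one prefers to avoid the two-step grafting, the identity can equally be derived directly from the definition by observing that $p_\ell = \tfrac12 p'_\ell$ for leaves $\ell$ coming from $N'$ and $p_\ell = \tfrac12 p''_\ell$ for those from $N''$, then expanding $-\sum_\ell p_\ell \log_2 p_\ell$ and using $\sum p'_\ell = \sum p''_\ell = 1$; this is the same computation that appears in the proof of Proposition~\ref{propGrafting}, so invoking that proposition is the cleaner route.
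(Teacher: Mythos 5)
Your proof is correct and follows exactly the paper's approach: the paper's own proof is the one-line instruction to apply Proposition~\ref{propGrafting} twice, grafting $N'$ and $N''$ onto the two leaves of the rooted binary tree with two leaves, which is precisely what you carry out (with the additional, correct bookkeeping that $B_2$ of the cherry is $1$ and each leaf has reaching probability $\tfrac12$).
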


\begin{proof}
Use Proposition~\ref{propGrafting} twice to graft $N'$ and $N''$ on the
leaves of the \revision{rooted} binary tree with two leaves.
\end{proof}

\enlargethispage{2ex} 

\begin{proposition} \label{propDeltaB2}
Let $N$ and $N'$ be two rooted phylogenies on the same leaf-set such that the
probabilities $p_\ell$ and $p'_\ell$ of reaching $\ell$
are the same in $N$ and in $N'$ for every leaf $\ell$, with the possible
exception of two fixed leaves $\ell_1$ and $\ell_2$. Then,
\[
  \mathrm{sgn}\big(B_2(N') - B_2(N)\big) \;=\;
  \mathrm{sgn}\big((p_{\ell_1} - p'_{\ell_1})(p_{\ell_1} - p'_{\ell_2})\big)\,, 
\]
where $\mathrm{sgn}(x) \in \Set{-1, 0, +1}$ denotes the sign of $x$.
\end{proposition}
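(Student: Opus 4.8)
The plan is to reduce the two-leaf comparison to a single-variable entropy function and exploit its concavity. First I would record the only global constraint available: both $(p_\ell)_{\ell\in L}$ and $(p'_\ell)_{\ell\in L}$ are probability distributions on the common leaf-set, hence each sums to $1$. Since they agree off $\Set{\ell_1,\ell_2}$, subtracting the two normalisation identities gives
\[
  p_{\ell_1} + p_{\ell_2} \;=\; p'_{\ell_1} + p'_{\ell_2} \;=:\; s \,.
\]
Writing $f(x) = -x \log_2 x$ (with the convention $f(0)=0$) for the entropy contribution of a single leaf, every term of $B_2(N') - B_2(N)$ cancels except those attached to $\ell_1$ and $\ell_2$, leaving
\[
  B_2(N') - B_2(N) \;=\; \big(f(p'_{\ell_1}) + f(p'_{\ell_2})\big) - \big(f(p_{\ell_1}) + f(p_{\ell_2})\big) \,.
\]

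Next I would collapse this into one variable. Because $p_{\ell_2} = s - p_{\ell_1}$ and $p'_{\ell_2} = s - p'_{\ell_1}$, I introduce $g(x) = f(x) + f(s-x)$ on $[0,s]$, so that
\[
  B_2(N') - B_2(N) \;=\; g(p'_{\ell_1}) - g(p_{\ell_1}) \,.
\]
The point is that $g$ is strictly concave (a sum of strictly concave functions) and symmetric about $s/2$, where it attains its maximum; hence $g(x)$ is a strictly decreasing function of the distance $|x - s/2|$. Consequently $g(p'_{\ell_1}) - g(p_{\ell_1})$ has the same sign as $(p_{\ell_1} - s/2)^2 - (p'_{\ell_1} - s/2)^2$, i.e.\ it is positive precisely when $p'_{\ell_1}$ is closer to $s/2$ than $p_{\ell_1}$ is.

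It then remains to match this expression with the advertised product. Factoring the difference of squares and substituting $s = p'_{\ell_1} + p'_{\ell_2}$ yields
\[
  \big(p_{\ell_1} - \tfrac{s}{2}\big)^2 - \big(p'_{\ell_1} - \tfrac{s}{2}\big)^2
  \;=\; (p_{\ell_1} - p'_{\ell_1})\,(p_{\ell_1} + p'_{\ell_1} - s)
  \;=\; (p_{\ell_1} - p'_{\ell_1})\,(p_{\ell_1} - p'_{\ell_2}) \,,
\]
which is exactly the quantity in the statement; taking signs concludes the argument.

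The argument is short, so there is no genuine obstacle beyond arranging it correctly: the crux is realising that fixing the sum $s$ turns the two-leaf problem into the single concave, symmetric function $g$, and then spotting the difference-of-squares identity that translates ``closeness to $s/2$'' into the product $(p_{\ell_1}-p'_{\ell_1})(p_{\ell_1}-p'_{\ell_2})$. The only points needing mild care are the degenerate cases: when $p_{\ell_1} = p'_{\ell_1}$ (then $p_{\ell_2} = p'_{\ell_2}$ and both sides vanish, matching $\mathrm{sgn}(0)=0$), and when a probability equals $0$ (handled by $0\log_2 0 = 0$ together with the fact that $g$ stays concave and strictly decreasing in $|x - s/2|$ on the whole of $[0,s]$).
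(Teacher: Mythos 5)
Your proof is correct. Both arguments rest on the same two facts --- that every term of $B_2(N')-B_2(N)$ cancels except those attached to $\ell_1$ and $\ell_2$, and that $f(x)=-x\log x$ is strictly concave --- but the way you extract the sign differs from the paper's. The paper sets $\Delta = p'_{\ell_1}-p_{\ell_1} = p_{\ell_2}-p'_{\ell_2}$ and rewrites the difference as $\big(f(p_{\ell_1}+\Delta)-f(p_{\ell_1})\big) - \big(f(p'_{\ell_2}+\Delta)-f(p'_{\ell_2})\big)$, i.e.\ as a difference of two increments of $f$ over intervals of equal length, and concludes from the monotonicity of the difference quotient of a strictly concave function. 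You instead fold the two leaves into the single function $g(x)=f(x)+f(s-x)$, which is strictly concave and symmetric about $s/2$, observe that $g$ is strictly decreasing in $|x-s/2|$, and recover the stated product by a difference of squares. Your route makes the geometric content completely explicit --- it is in effect a proof of the paper's remark that $B_2$ decreases exactly when the pair $(p_{\ell_1},p_{\ell_2})$ becomes more spread out --- at the cost of a few extra lines; the paper's version is more compact but leaves the translation from the difference-quotient inequality to the stated sign formula to the reader. A small point in your favour: you derive $p_{\ell_1}+p_{\ell_2}=p'_{\ell_1}+p'_{\ell_2}$ explicitly from normalisation, whereas the paper only encodes this implicitly in the definition of $\Delta$. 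Your handling of the degenerate and boundary cases ($p'_{\ell_1}=p_{\ell_1}$, probabilities equal to $0$) is also sound.
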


\begin{remark}
Perhaps a more intuitive way to understand Proposition~\ref{propDeltaB2}  is
to note that $(p_{\ell_1} - p'_{\ell_1})(p_{\ell_1} - p'_{\ell_2})$ has the
same sign as $|p_{\ell_1} - p_{\ell_2}| - |p'_{\ell_1} - p'_{\ell_2}|$.
Thus, $B_2(N') < B_2(N)$ if and only if $p'_{\ell_1}$ and $p'_{\ell_2}$ are
more spread out than $p_{\ell_1}$ and $p_{\ell_2}$.
\end{remark}

\begin{proof}
Letting $f\colon x \mapsto -x\log x$ and
$\Delta = p'_{\ell_1} - p_{\ell_1} = p_{\ell_2} - p'_{\ell_2}$, we have
\begin{align*}
  B_2(N') - B_2(N) \;=\;&\; f(p'_{\ell_1}) + f(p'_{\ell_2})
  - f(p_{\ell_1}) - f(p_{\ell_2})\\[0.5ex]
  \;=\;&\; \big(f(p_{\ell_1} + \Delta) - f(p_{\ell_1})\big) \;-\;
  \big(f(p'_{\ell_2} + \Delta) - f(p'_{\ell_2})\big)\,, 
\end{align*}
and the proposition follows from the strict concavity of $f$
(recall that $f$ is strictly concave if and only if $(x, y) \mapsto
(f(x) - f(y)) / (x - y)$ is decreasing in $x$ and in~$y$).
\end{proof}

\subsection{Main results} \label{secMainResults}

In Section~\ref{secExtrem}, we study the range of $B_2$ over
several classes of rooted phylogenies. This basic information is particularly
relevant if one wants to compare the~$B_2$ index of phylogenies that have a
different number of leaves, or belong to different classes (e.g, comparing
reticulated and non-reticulated phylogenies). We show in
Theorem~\ref{thmRangeRTCNs} that for
every temporal tree-child network $N$ with $n$ leaves (and
in particular for every binary tree),
\[
  2 - 2^{-n + 2} \;\leq\; B_2(n) \;\leq\; \Floor{\log_2 n} \;+\; 
  \frac{n - 2^{\Floor{\log_2 n}}}{2^{\Floor{\log_2 n}}} \,.
\]
Moreover, in the special case of binary trees, we fully characterize the
trees that attain these bounds. Notably, the only binary tree that minimizes
$B_2$ is the caterpillar tree -- in agreement with the conventional idea
that the caterpillar tree should be the least balanced tree.

Although the range of $B_2$ on binary trees is more narrow
than that of other
balances indices, such as the Colless and the Sackin indices (whose range length
is asymptotically $n^2 / 2$; see \cite{Coronado2020, Fischer2018}), 
this should not give the impression that $B_2$ is a ``coarser'' measure
of balance. In fact, \revision{it is exactly the opposite}: we show in
Proposition~\ref{propDistinctValuesB2} of Appendix~\ref{appDistinctValuesB2}
that $B_2$ takes at least $2^{\Floor{n/2} - 1}$ distinct values
on the set of binary trees with $n$ leaves -- whereas the Colless and Sackin
index, being integer-valued, can take at most $\Theta(n^2)$ different values.
As a result, the average number of trees of size $n$ that have the
same $B_2$ index is exponentially smaller than the average number of
trees of size $n$ that have the same Colless\,/\,Sackin index, meaning
that $B_2$ is better able to discriminate between trees.

\begin{figure}[h!]
  \centering
  \captionsetup{width=0.70\linewidth}
  \includegraphics[width=0.55\linewidth]{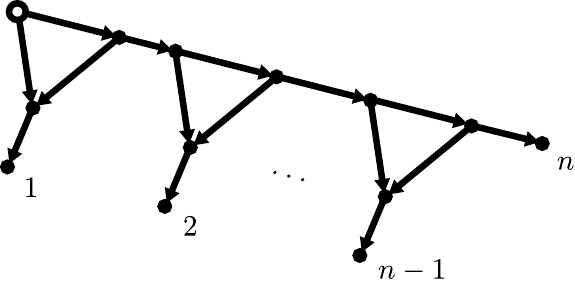}
  \caption{The fat caterpillar with $n$ leaves, $\FatCat{n}$.
  $\FatCat{n + 1}$ is obtained by grafting $\FatCat{2}$ on the $n$-th
  leaf of $\FatCat{n}$.}  
\label{figFatCat}
\end{figure}

To close Section~\ref{secExtrem}, we show in Theorem~\ref{thmMinTCN}
that the minimum of $B_2$ on the set of tree-child network is
$(\tfrac{8}{3} - \log_2(3))\mleft(1 - 4^{-n+1}\mright)$, and that
the only tree-child network that attains it is the so-called ``fat
caterpillar'', represented in Figure~\ref{figFatCat}. This result is
a relatively straightforward consequence of a series of lemmas of independent
interest that characterize the effect of various local modifications
of a rooted phylogeny on its $B_2$ index.

Section~\ref{secRandomTrees} is devoted to the study of the mean and variance
of $B_2$ for general families of random trees.
Theorem~\ref{thmExpecGW} gives an explicit, simple expression for the
expected value of the $B_2$ index of a time-inhomogeneous
Galton--Watson tree~$T_{[k]}$ stopped after $k$ generations.
In the time-homogeneous case, this expression reduces~to
\[
  \Expec{B_2(T_{[k]})} \;=\;
  \frac{\eta}{1 - \alpha} \big(1 - \alpha^k\big)\,, 
\]
with $\alpha = \Prob{Y > 0}$ and
$\eta = \Expec*{\normalsize}{\log_2(Y) \, \Indic{Y > 0}}$, where $Y$ is the
offspring distribution and $\alpha$ should be less than $1$.
In particular, for Galton--Watson trees with $2\,\mathrm{Bernoulli}(p)$
offspring distribution (that is, two offspring with probability~$p$
and~0 with probability~$1-p$), which are the most relevant Galton--Watson
trees in phylogenetics, this gives
\[
  \Expec{B_2(T_{[k]})} \;=\;
  \frac{p}{1 - p} \big(1 - p^k\big)\,.
\]
For those Galton--Watson trees it is also possible to get an explicit
expression for the variance of $B_2$ (see e.g.\ Proposition~\ref{propVarBinaryGW}).

In the second part of Section~\ref{secRandomTrees}, we study
Markov branching trees. Theorem~\ref{thmMarkovBranching} gives recurrence
relations to study the mean and variance of $B_2$ under any Markov branching
model. Applying these allows us to show in Theorems~\ref{thmYule}
and~\ref{thmUniform} that under the
ERM\,/\,Yule model with $n$ leaves,
\[
  \Expec{B_2(T^{\scriptscriptstyle \mathsf{ERM}}_n)} \;=\;
  \sum_{k = 1}^{n-1} \frac{1}{k}
  \quad\text{and}\quad
  \Var{B_2(T^{\scriptscriptstyle \mathsf{ERM}}_n)} \;
  \tendsto{n\to\infty}\; 2 - \pi^2/6
\]
and that under the PDA model with $n$ leaves,
\[
  \Expec{B_2(T^{\scriptscriptstyle \mathsf{PDA}}_n)} \;=\;
  3\,\frac{n - 1}{n + 1}
  \quad\text{and}\quad
  \Var{B_2(T^{\scriptscriptstyle \mathsf{PDA}}_n)} \;
  \tendsto{n\to\infty}\;\frac{4}{9} \,.
\]
In particular, since
$\Expec{B_2(T^{\scriptscriptstyle \mathsf{ERM}}_n)} \sim \ln n$ and
$\Expec{B_2(T^{\scriptscriptstyle \mathsf{PDA}}_n)} \sim 3$, this shows
that the ERM model generates trees that are very balanced whereas the
PDA model generates trees that are very unbalanced -- in agreement with what
we obtain using the Colless and Sackin indices \cite{Blum2006a}.

To complement this theoretical study and to evaluate the relevance of $B_2$ in
real-world applications, in
Section~\ref{secStats} we estimate the ``statistical power'' of $B_2$ when it
comes to distinguishing various types of trees from one another. We then
compare it to that of other balance indices~-- following in fact the very
approach that led Agapow and Purvis to dismiss $B_2$ as a relevant measure of
phylogenetic balance~\cite{Agapow2002}.  Our conclusions, however, are very
different: our analysis shows that the performance of each balance index varies
widely depending on the specific context in which it is used, and none of the
balance indices that we test stands out as consistently better than the others.
Neither did $B_2$ perform significantly worse than other indices:
in fact, averaged over all the scenarios that we consider (which were
selected independently of the output of our analysis),
$B_2$ happens to be the index with the best overall
performance.

Our analysis also includes a comparison of the statistical power of
pairs of balance indices. This comparison strongly supports the idea that $B_2$
better complements the Colless and Sackin indices than they complement each
other. In fact, it even suggests that, taken jointly,
the Colless and Sackin indices might the least informative pair of balance
statistics -- presumably because of their strong correlation.

A synthetic comparison of $B_2$ to other balance indices and
a discussion of its current status in phylogenetics are given in
Section~\ref{secConclusion}.

\section{Extremal values of $B_2$} \label{secExtrem}

We have seen in Proposition~\ref{propRangeB2} that, for any rooted phylogeny,
$B_2$ is non-negative and at most $\log_2 n$, where $n$ is the number of leaves
of the phylogeny. Moreover, if we consider the whole class of rooted
phylogenies, then these bounds cannot be improved, as shown by the phylogenies
depicted in Figure~\ref{figRangeB2}. But what about more restricted classes of
rooted phylogenies? In this section, we answer this question for biologically
relevant classes of rooted phylogenies: binary trees; temporal tree-child
networks; and general tree-child networks.

\vspace{1ex} 
\begin{figure}[h!]
  \centering
  \captionsetup{width=0.85\linewidth}
  \includegraphics[width=0.65\linewidth]{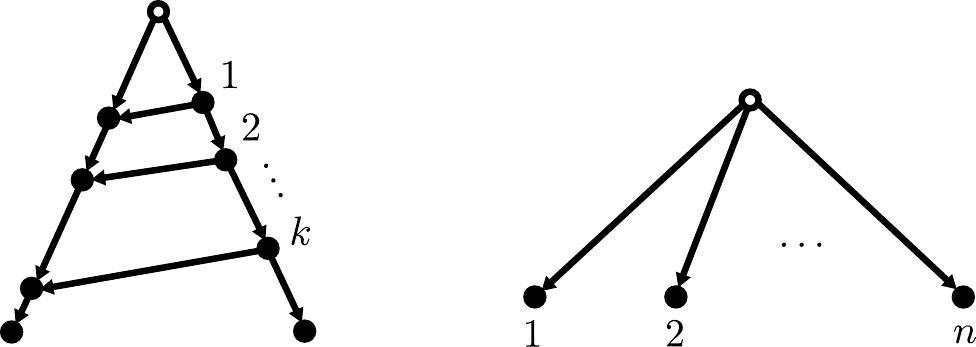}
  \caption{Examples of phylogenies showing that the bounds
  $0 \leq B_2 \leq \log_2 n$ are tight: the phylogeny on the left has
  $B_2 \to 0$ as $k \to +\infty$ and the one on the left has $B_2 = \log_2 n$.
  Note that for $n = 1$, we always have $B_2 = 0$.} 
\label{figRangeB2}
\end{figure}

Tree-child networks are a class of rooted phylogenies that were introduced
in~\cite{Cardona2009TCBB}. As of today, they are arguably the most widely
studied class of phylogenetic networks. Let us briefly recall their definition. 

\begin{definition} \label{defTCN}
A rooted phylogeny is said to be \emph{binary} if the root has outdegree~2
and every other internal vertex has either:
\begin{itemize}
\item indegree 1 and outdegree 2, in which case it is called a
  \emph{tree vertex};
\item indegree 2 and outdegree 1, in which case it is called a
  \emph{reticulation}. \qedhere
\end{itemize}
\end{definition}

\begin{definition} \label{defTCN}
A (binary) \emph{tree-child network} is a binary rooted phylogeny such
that every internal vertex has at least one child that is a tree vertex or a
leaf.
\end{definition}

Before studying the range of $B_2$ over the class of
tree-child networks, let us focus on specific subclasses that are both
particularly relevant from a biological point of view and easier to tackle from
a mathematical one.

\subsection{Binary trees and temporal tree-child networks}

Let us start with the simple case of rooted binary trees.

\begin{theorem} \label{thmExtremBinaryTrees}
Let $T$ be a rooted binary tree with $n$ leaves. Then,
\[
  2 - 2^{-n + 2} \;\leq\; B_2(T) \;\leq\; \Floor{\log_2 n} \;+\; 
  \frac{n - 2^{\Floor{\log_2 n}}}{2^{\Floor{\log_2 n}}} \,.
\]
Moreover, these bounds are sharp and
\begin{mathlist}
\item The caterpillar tree $\Cat{n}$ is the only rooted binary tree with
  $n$ leaves that minimizes $B_2$.
\item The rooted binary trees that maximize $B_2$
  are exactly the trees such that $\max \Abs{\delta_\ell - \delta_{\ell'}} \leq 1$,
  where the maximum is taken over every pair of leaves and $\delta_\ell$
  denotes the depth of leaf $\ell$.
\end{mathlist}
\end{theorem}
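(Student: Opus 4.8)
The plan is to handle the two bounds separately, working throughout with the explicit formula $B_2(T) = \sum_{\ell} \delta_\ell\, 2^{-\delta_\ell}$ from Proposition~\ref{propB2BinaryTrees}, and keeping in mind the structural identity $\sum_{\ell} 2^{-\delta_\ell} = 1$ (automatic, since the $p_\ell = 2^{-\delta_\ell}$ form a probability distribution).

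For the \emph{lower} bound I would induct on $n$ via the root-split recursion of Corollary~\ref{corRootSplitRecursion}. Writing $T = T_1 \oplus T_2$ with $T_i$ carrying $n_i$ leaves and $n_1 + n_2 = n$, the recursion gives $B_2(T) = \tfrac12\big(B_2(T_1) + B_2(T_2)\big) + 1$; feeding in the inductive bound $B_2(T_i) \ge 2 - 2^{-n_i + 2}$ reduces the task to minimizing $3 - 2^{-n_1+1} - 2^{-n_2+1}$ over $n_1 + n_2 = n$. Since $n_1 \mapsto 2^{-n_1+1} + 2^{-n_2+1}$ is strictly convex, it is maximized at the endpoints $n_1 \in \{1, n-1\}$, giving the bound $2 - 2^{-n+2}$, which matches the value of $B_2(\Cat n)$ already computed. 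Uniqueness then follows by tracking equality cases: strict convexity forces the split $\{1, n-1\}$, and the inductive equality forces the $(n-1)$-leaf subtree to be a caterpillar, so $T$ is a single leaf $\oplus\, \Cat{n-1}$, i.e.\ $\Cat n$ itself. The base cases $n \in \{1, 2\}$ are immediate.

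For the \emph{upper} bound I would show that any maximizer has all leaf depths within $1$ of one another, by exhibiting a strictly improving local move whenever this fails. Suppose $a := \min_\ell \delta_\ell$ and $b := \max_\ell \delta_\ell$ satisfy $b \ge a + 2$. A deepest leaf $w$ has a sibling that must itself be a leaf (otherwise that sibling would have descendants below depth $b$), so $w$ lies in a cherry; pick a leaf $u$ of depth $a$, necessarily distinct from this cherry. I would first \emph{remove} the cherry at $w$ (turning its parent, at depth $b-1$, into a leaf) and then \emph{graft} a fresh cherry onto $u$. By Corollary~\ref{corGraftingCherry}, applied once in reverse and once forward, this changes $B_2$ by exactly $2^{-a} - 2^{-(b-1)} > 0$ while preserving the number of leaves. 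Hence no tree of depth range $\ge 2$ is a maximizer, so every maximizer satisfies $\max_{\ell,\ell'}|\delta_\ell - \delta_{\ell'}| \le 1$.

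It then remains to compute the common value of $B_2$ on trees of depth range $\le 1$ and confirm that these are exactly the maximizers. For such a tree, writing $d = \lfloor \log_2 n \rfloor$, the identity $\sum_\ell 2^{-\delta_\ell} = 1$ together with the leaf count $n$ forces precisely $2^{d+1} - n$ leaves at depth $d$ and $2n - 2^{d+1}$ at depth $d+1$; substituting this multiset into $\sum_\ell \delta_\ell\, 2^{-\delta_\ell}$ yields $d + n\,2^{-d} - 1$, the claimed maximum. Because the improving move shows every tree can be pushed to depth range $\le 1$ without decreasing $B_2$, this value is indeed the maximum and the depth-range-$\le 1$ trees are exactly the maximizers. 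I expect the main obstacle to be bookkeeping rather than ideas: checking that the local move is well-defined (the cherry and the shallow leaf are genuinely distinct and the parent is not the root) and carefully tracking the equality cases in the lower-bound induction to secure \emph{uniqueness} of the caterpillar.
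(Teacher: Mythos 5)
Your proof is correct, and it diverges from the paper's in a substantive way. The paper derives both characterizations from a single tool: the cherry-moving operation (ungraft a cherry whose parent is $v$ and regraft it onto a leaf $\ell$, changing $B_2$ by $2^{-\delta_\ell}-2^{-\delta_v}$ via Corollary~\ref{corGraftingCherry}), combined with the assertion that repeated cherry moves connect any two binary trees with $n$ leaves. Your upper-bound argument is essentially this same move specialized to a deepest cherry and a shallowest leaf, but it is grounded more carefully: instead of appealing to connectivity of the cherry-move graph, you note that any tree with depth range at least $2$ admits a strictly improving move and that all trees with depth range at most $1$ share the same value $\lfloor\log_2 n\rfloor + (n-2^{\lfloor\log_2 n\rfloor})/2^{\lfloor\log_2 n\rfloor}$, which closes the global-optimality argument (you only need to add the one-line remark that a maximizer exists because there are finitely many trees with $n$ leaves). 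Your lower bound is genuinely different: the paper again uses cherry moves and asserts rather tersely that the caterpillar is the only tree with no improving move, whereas you induct on $n$ through the root-split recursion of Corollary~\ref{corRootSplitRecursion} and minimize $3 - 2^{-n_1+1} - 2^{-n_2+1}$ over the split sizes by strict convexity. This buys a fully self-contained uniqueness proof -- strict convexity pins the split to $\{1, n-1\}$ and the inductive hypothesis pins the large subtree to $\Cat{n-1}$, so $T = \Cat{n}$ -- at the cost of carrying the uniqueness statement inside the induction hypothesis, which you do implicitly and correctly.
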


\begin{remark}
\revision{The caterpillar tree is also the only binary tree that maximizes
the Sackin index and the only binary tree that maximizes the Colless index.}

\revision{The binary trees that maximize $B_2$ are also exactly the
binary trees that minimize the Sackin index, see \cite{Fischer2018}.
A binary tree that minimizes the Colless index also maximizes
the $B_2$ index, but the converse is not true: there are binary trees that
maximize $B_2$ but do not minimize the Colless index. See~\cite{Coronado2020}
for more on this.}
\end{remark}

\begin{proof}[Proof of Theorem~\ref{thmExtremBinaryTrees}]
Let $T$ be a rooted binary tree. \revision{Consider a cherry of $T$ with
parent $v$ (that is, both children of $v$ are leaves) and a leaf $\ell \in T$
that is not a child of $v$}. Then,
by Corollary~\ref{corGraftingCherry}, moving the cherry from~$v$ to~$\ell$
yields a binary tree~$T'$ such that
\[
  B_2(T') - B_2(T) \;=\; 2^{-\delta_\ell} - 2^{-\delta_v} .
\]
Since it is possible to turn any binary tree into any other binary tree by
repeatedly moving cherries, it follows that:
\begin{mathlist}
\item $T$ minimizes $B_2$ if and only if it does not have a cherry with
parent~$v$ and a leaf~$\ell$ not in that cherry
such that $\delta_\ell > \delta_v$. The
caterpillar is the only such binary tree.
\item $T$ maximizes $B_2$ if and only if it does not have a cherry with
parent~$v$ and leaf~$\ell$ such that $\delta_\ell < \delta_v$, i.e.\ if and only
if the maximum difference of depth between any two leaves is at most 1.
\end{mathlist}

Finally, to compute $B_2$ for the trees that maximize it, note that if
$n$ is not a power of~2 then these trees are obtained by grafting a cherry on
$k = n - 2^{\lfloor \log_2 n\rfloor}$ of the leaves of the complete binary tree
with $2^{\lfloor \log_2 n\rfloor}$ leaves. The upper bound
then follows from Corollary~\ref{corGraftingCherry}.
\end{proof}

Let us now turn to temporal tree-child networks.
A temporal phylogeny is a phylogeny that is constrained to be compatible with
the output of a time-embedded evolutionary process. This idea is formalized
as follows.

\begin{definition} \label{defTemporal}
A rooted binary phylogeny is \emph{temporal} if there exists a function~$t$ on
its vertex set such that, for every edge $\vec{uv}$,
if $v$ is a reticulation then $t(u) = t(v)$; otherwise, $t(u) < t(v)$.
This function~$t$ is then known as a \emph{temporal labeling}.
\end{definition}

\begin{remark}
An alternative, perhaps more intuitive way to define temporal tree-child
networks is through the notion of \emph{ranked tree-child network},
or RTCNs~\cite{Bienvenu2020}.
As explained in Figure~\ref{figConstructionRTCNs}, 
RTCNs are the phylogenies generated by
sequentially grafting cherries on leaves and tridents on pairs
of leaves, keeping track of the step of the construction at which
each internal vertex was added and making it an integral part of the resulting
object.
Discarding this information and keeping only the underlying rooted phylogeny
always yields a temporal tree-child network. Moreover, every temporal
tree-child network is the underlying rooted phylogeny of some RTCN.
As a result, temporal tree-child networks and RTCNs are interchangeable
for most purposes, and one can think about them in terms of the
diagrams represented in Figure~\ref{figConstructionRTCNs}. 
\end{remark}

\vspace{1ex} 

\begin{figure}[h!]
  \centering
  \captionsetup{width=0.85\linewidth}
  \includegraphics[width=0.85\linewidth]{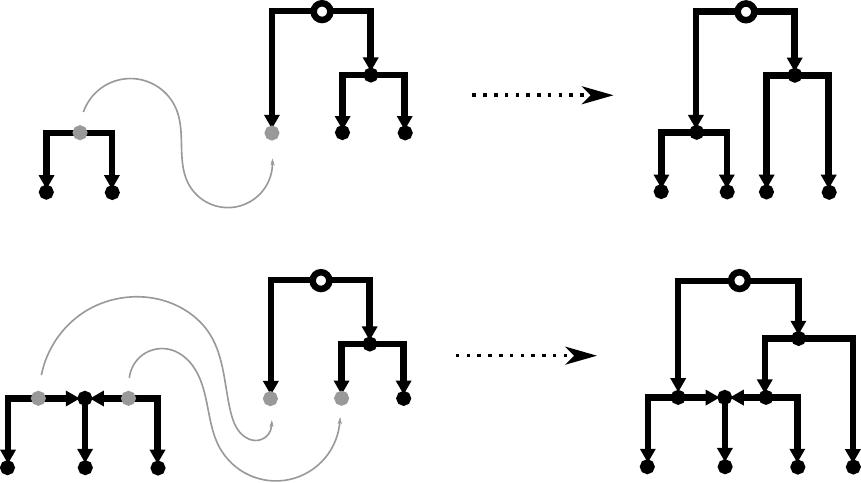}
  \caption{The operations of grafting a cherry (top) and grafting a trident (bottom).
  Sequentially performing these operations and keeping track of all the information
  of the construction, as done here through the vertical layout
  of the vertices, yields RTCNs. Temporal tree-child networks are
  obtained by discarding this vertical layout.}  
\label{figConstructionRTCNs}
\end{figure}

As it turns out, the range of $B_2$ is the same in \revision{binary trees} and
in temporal tree-child networks, as the next theorem shows.
The difference with binary trees, however, is that the caterpillar (resp.\ the binary
trees such that $\max \Abs{\delta_\ell - \delta_{\ell'}} \leq 1$) are not
the only temporal tree-child networks that minimize (resp.\ maximize)
$B_2$, as shown by the examples given in Figure~\ref{figExtremalRTCNs}.

\vspace{1ex} 

\begin{figure}[h!]
  \centering
  \captionsetup{width=0.75\linewidth}
  \includegraphics[width=0.65\linewidth]{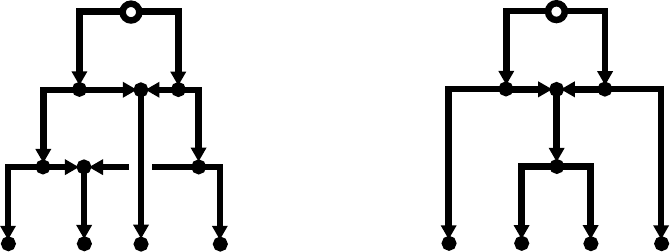}
  \caption{Examples of temporal tree-child networks that minimize (left) or
  maximize (right) $B_2$ despite being different from the trees described in
  Theorem~\ref{thmExtremBinaryTrees}.}  
\label{figExtremalRTCNs}
\end{figure}

\begin{theorem} \label{thmRangeRTCNs}
For every temporal tree-child network with $n$ leaves,
\[
  2 - 2^{-n + 2} \;\leq\; B_2(T) \;\leq\; \Floor{\log_2 n} \;+\; 
  \frac{n - 2^{\Floor{\log_2 n}}}{2^{\Floor{\log_2 n}}} \,.
\]
\end{theorem}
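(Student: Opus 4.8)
The plan is to establish the two inequalities for an arbitrary temporal tree-child network $N$ with $n$ leaves, using from the outset that both extremal values are \emph{attained within the class}: since every binary tree is a temporal tree-child network, Theorem~\ref{thmExtremBinaryTrees} already exhibits the caterpillar $\Cat{n}$ (lower value) and the maximally balanced trees (upper value) as temporal tree-child networks realizing the bounds. It therefore suffices to show that no temporal tree-child network can go below $2 - 2^{-n+2}$ or above the stated upper value. Throughout I will use that $B_2(N)$ depends only on the leaf-reaching distribution $(p_\ell)$ -- which here is always a dyadic probability vector -- and that $B_2$ is a Schur-concave function of this vector: this is exactly the ``spreading decreases $B_2$'' content of Proposition~\ref{propDeltaB2}. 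Comparisons between networks thus reduce to \emph{majorization} comparisons between probability vectors of equal total mass.

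The key input is the effect of the two grafting operations of the RTCN construction (Figure~\ref{figConstructionRTCNs}) on the distribution. Grafting a cherry replaces a mass $p$ by the pair $(p/2,\,p/2)$ (so $\Delta B_2 = p$, Corollary~\ref{corGraftingCherry}), while grafting a trident replaces two masses $p_1, p_2$ by the triple $\big(p_1/2,\, p_2/2,\, (p_1+p_2)/2\big)$. A short calculation (sorting and comparing partial sums, and using that $f\colon x\mapsto -x\log_2 x$ is concave with $f(0)=0$, hence subadditive) shows that \emph{both} operations flatten the distribution in the majorization order: each step produces a vector majorized by the previous one (suitably padded with a zero). I would then prove the two sharp facts that imply the theorem: among all achievable $n$-leaf distributions, the caterpillar distribution $\big(2^{-1},2^{-2},\dots,2^{-(n-1)},2^{-(n-1)}\big)$ is the majorization-maximum, and the balanced-tree distribution is the majorization-minimum. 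By Schur-concavity these give $B_2(N)\ge B_2(\Cat{n}) = 2 - 2^{-n+2}$ and $B_2(N)\le \Floor{\log_2 n} + \big(n - 2^{\Floor{\log_2 n}}\big)\big/2^{\Floor{\log_2 n}}$ respectively, matching Theorem~\ref{thmExtremBinaryTrees}.

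I would establish these two extremal majorization facts by induction on the number of construction steps, maintaining the invariant that after $k$ steps the $k$-leaf distribution $d_k$ satisfies $b_k \preceq d_k \preceq c_k$, where $c_k$ and $b_k$ denote the caterpillar and balanced distributions on $k$ leaves (these being the ``steepest'' and ``flattest'' cherry-grafting schedules). The cherry case is the same phenomenon as in the tree proof; the \textbf{main obstacle is the trident step}, which is the genuinely new ingredient beyond binary trees and where the argument must be done carefully, since the trident simultaneously halves two masses and \emph{merges} them into a third. I expect the delicate point to be that one cannot simply reduce a network to a tree by sign-definite local moves: resolving a reticulation by rerouting it to one parent and suppressing the other changes $B_2$ by an amount whose sign depends on the entropies of the subtrees hanging below the trident's three prongs, so reticulations can a priori push $B_2$ in either direction. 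This is precisely why the global majorization framing is needed, and why the trident step of the induction -- showing it preserves both $d_{k}\preceq c_{k}$ and $b_{k}\preceq d_{k}$ as $c_k,b_k$ themselves flatten with growing $k$ -- is the crux of the proof.
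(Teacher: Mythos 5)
Your plan takes a genuinely different route from the paper's: the paper works top-down, repeatedly picking the reticulation with maximal temporal label (so that its three ``prongs'' subtend reticulation-free subnetworks) and resolving it in two ways so as to sandwich $B_2(N)$ between two temporal tree-child networks with one fewer reticulation, eventually trapping $B_2(N)$ between two binary trees and invoking Theorem~\ref{thmExtremBinaryTrees}. You instead build the network bottom-up through the RTCN construction and try to carry the two-sided majorization invariant $b_k \preceq d_k \preceq c_k$. The gap sits exactly where you flag it, the trident step, but it is not merely delicate: the half of the invariant needed for the upper bound, $b_k \preceq d_k$, is \emph{false}. Start from the cherry $(1/2,1/2)$, graft a trident on both leaves to get $(1/2,1/4,1/4)$, graft a trident on the leaves of mass $1/2$ and $1/4$ to get $(3/8,1/4,1/4,1/8)$, then graft a cherry on the leaf of mass $3/8$. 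One checks directly that this is a binary, tree-child, temporal network with five leaves, and its leaf distribution is $(1/4,1/4,3/16,3/16,1/8)$. Its third partial sum is $11/16$, strictly smaller than the third partial sum $12/16$ of $b_5=(1/4,1/4,1/4,1/8,1/8)$, so $b_5 \not\preceq d_5$ and the Schur-concavity argument for the upper bound cannot be run.

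Because Schur-concavity says nothing about incomparable vectors, this does not look repairable: the entropy of $(1/4,1/4,3/16,3/16,1/8)$ is $\tfrac{23}{8}-\tfrac{3}{8}\log_2 3 \approx 2.281$, which exceeds the claimed upper bound $2+\tfrac14=2.25$ for $n=5$ (reticulations create non-power-of-two probabilities such as $3/16$ that can sit closer to uniform than any dyadic tree distribution). In other words the example defeats not only your invariant but, apparently, the upper bound of Theorem~\ref{thmRangeRTCNs} itself; you should verify it independently, but if it holds up, no strategy will close that half of the statement. (The corresponding step in the paper's own proof, the construction of $N''$, rests on the assertion that when $B_2(N_w)>B_2(N_u)$ swapping $N_u$ and $N_w$ increases $B_2$; by Proposition~\ref{propGrafting} that swap changes $B_2$ by $(p_w-p_u)(B_2(N_u)-B_2(N_w))<0$, i.e.\ it \emph{decreases} it, and this is exactly where the example slips through.) The lower-bound half of your plan, namely that every achievable distribution is majorized by the caterpillar distribution, equivalently that the $j$ largest leaf probabilities sum to at most $1-2^{-j}$, is consistent with everything I can compute and matches the half of the paper's argument that does go through; but it is also the half you have not actually proved, since you would still need to verify that a trident applied to masses $p_1,p_2$ preserves that partial-sum bound.
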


\begin{proof}
We will show that for any temporal tree-child network~$N$
it is possible to find two binary trees~$T'$ and $T''$ with the same number
of leaves as $N$ and such that $B_2(T') \leq B_2(N) \leq B_2(T'')$.

Assume that $N$ is not a tree, and let then $r$ be a reticulation with maximal
temporal labeling. Denote the siblings of $r$ \revision{(that is, the two
other children of each of its two parents)} by $u$ and $v$, and its child by~$w$.
Note that, since no reticulation has a greater temporal labeling than~$r$,
the phylogenies $N_u$, $N_v$ and $N_w$ subtended by $u$, $v$ and $w$ do
not contain any reticulations -- and therefore are disjoint.
This situation is represented in Figure~\ref{figProofRangeRTCNs}.
Let us show that it is possible, by removing~$r$, 
to obtain two temporal tree-child networks $N'$ and $N''$ such that
$B_2(N') \leq B_2(N) \leq B_2(N'')$.

\vspace{1ex} 

\begin{figure}[h!]
  \centering
  \captionsetup{width=0.95\linewidth}
  \includegraphics[width=0.95\linewidth]{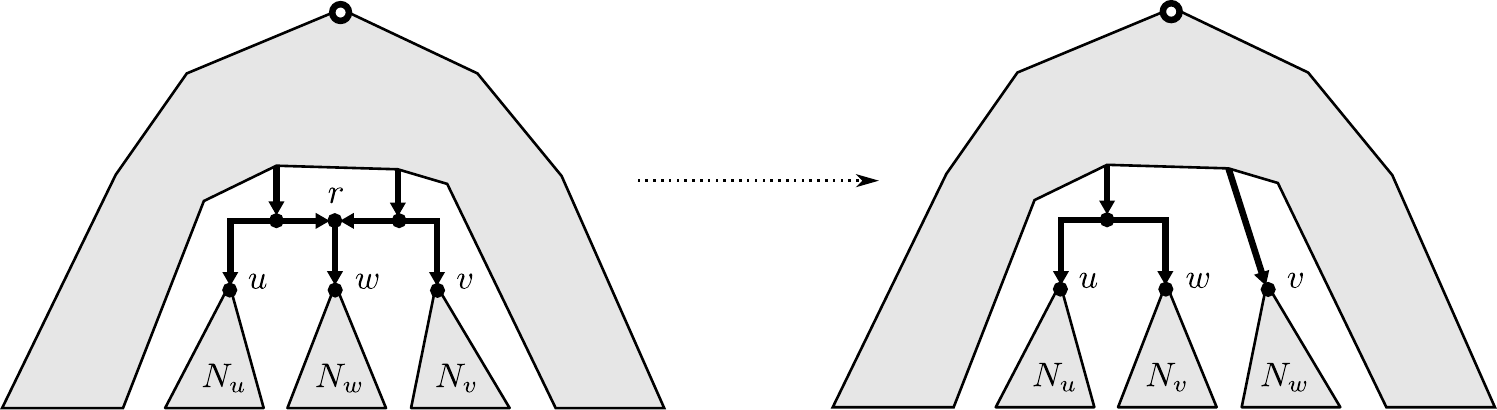}
  \caption{Left, the configuration described in the proof of
  Theorem~\ref{thmRangeRTCNs}, where the (possibly empty)
  sub-phylogenies $N_u$, $N_v$ and $N_w$ are cut from each other and from the
  rest of the phylogeny. Right, the result of the transformation that consists in
  removing the reticulation and swapping $N_v$ and $N_w$.}  
\label{figProofRangeRTCNs}
\end{figure}

Let us start with $N'$. Let $p_u$, $p_v$ and
$p_w = p_u + p_v$ be the probabilities that the simple forward random walk
goes through $u$, $v$ and $w$, respectively, and assume that $p_u \leq p_v$.
Also assume without loss of generality that $B_2(N_w) \leq B_2(N_v)$. Indeed,
if that is not the case, then swap $N_v$ and $N_w$, and let~$\tilde{N}$ be the
resulting temporal tree-child network. By Proposition~\ref{propGrafting}, 
\[
  B_2(\tilde{N}) - B_2(N) \;=\; (p_w - p_v) (B_2(N_v) - B_2(N_w)) \;<\; 0\,,
\]
and we can carry on with $\tilde{N}$ instead of~$N$.
Now, remove the edge going from the parent of~$v$ to the parent of~$w$, and
merge~$v$ and its parent; then swap $N_v$ and~$N_w$, and let $N'$ be the
resulting temporal tree-child network. This transformation is depicted in
Figure~\ref{figProofRangeRTCNs}. It can be seen as a succession of three
steps: ungrafting $N_v$ and~$N_w$; removing the reticulation; and
regrafting $N_v$ and~$N_w$.
As a result,
\[
  B_2(N') - B_2(N) \;=\; \Delta_{\mathrm{ungraft}}
  \;+\; \Delta_{\mathrm{remove}}
  \;+\; \Delta_{\mathrm{regraft}}\,, 
\]
where, by Proposition~\ref{propGrafting},
\[
  \Delta_{\mathrm{ungraft}} \;+\; \Delta_{\mathrm{regraft}} \;=\;
  (p_w - 2p_v) (B_2(N_v) - B_2(N_w)) \;\leq\; 0
\]
and, by Proposition~\ref{propDeltaB2},
$\Delta_{\mathrm{remove}} \leq 0$ since $(p_w - p_v)(p_w - 2p_v) \leq 0$.

To obtain $N''$, we use the same transformation but swapping the roles
of $u$ and~$v$. Still
with $p_u \leq p_v$, this time we can assume that $B_2(N_w) \leq B_2(N_u)$
before the transformation -- since, otherwise, swapping $N_u$ and~$N_w$ would
increase $B_2$. Replacing $v$ by $u$ in the expressions above, we see that
this time we get a temporal tree-child network $N''$ such that $B_2(N'') \geq B_2(N)$.

Finally, to obtain two binary trees~$T'$ and $T''$ such that
$B_2(T') \leq B_2(N) \leq B_2(T'')$, it suffices to apply the transformation
described above repeatedly, until all reticulations have been removed.
This concludes the proof.
\end{proof}


Let us now see what happens when we relax the temporal constraint and
consider the whole class of tree-child networks.

\subsection{General tree-child networks}

\begin{theorem} \label{thmMinTCN}
Let $N$ be a tree-child network with $n$ leaves. Then,
\[
  B_2(N) \;\geq\; \mleft(\tfrac{8}{3} - \log_2(3)\mright)\mleft(1 - 4^{-n+1}\mright) \,.
\]
Moreover, this bound is sharp and the only tree-child network that minimizes
$B_2$ is the so-called ``fat caterpillar'' represented in Figure~\ref{figFatCat}.
\end{theorem}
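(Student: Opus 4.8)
The plan is to show that the fat caterpillar minimizes $B_2$ over tree-child networks and to compute its value. The first step is to compute $B_2(\FatCat{n})$ directly using the recursive structure. Since $\FatCat{n+1}$ is obtained by grafting $\FatCat{2}$ on the last leaf of $\FatCat{n}$, I would apply Proposition~\ref{propGrafting}: I need the probability $p^*$ of reaching that leaf. In the fat caterpillar each reticulation is passed with probability depending on the depth, and one can check that the relevant leaf is reached with probability $4^{-n+1}$ or similar. Setting up the resulting recursion $B_2(\FatCat{n+1}) = B_2(\FatCat{n}) + p^* \cdot B_2(\FatCat{2})$ and computing $B_2(\FatCat{2}) = \tfrac{8}{3} - \log_2 3$ as a base case, one solves the geometric recursion to obtain the closed form $(\tfrac{8}{3} - \log_2 3)(1 - 4^{-n+1})$. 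This part is a routine calculation once the grafting probabilities are pinned down.

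The harder part is the lower bound together with uniqueness. I would look for a local move on an arbitrary tree-child network $N$ that strictly decreases $B_2$ unless $N$ is already the fat caterpillar, mirroring the cherry-moving argument used for binary trees in Theorem~\ref{thmExtremBinaryTrees}. The introduction promises ``a series of lemmas of independent interest that characterize the effect of various local modifications,'' so the strategy is to assemble these local operations into a reduction scheme: starting from any tree-child network, repeatedly apply $B_2$-decreasing moves that simplify its structure, and argue that no move can decrease $B_2$ below the fat caterpillar's value. The role of Proposition~\ref{propDeltaB2} (spreading out the leaf probabilities decreases $B_2$) and Proposition~\ref{propGrafting} will be central: I would want to show that any configuration unlike the fat caterpillar admits a rearrangement that makes the stationary distribution more spread out, hence strictly lowers $B_2$.

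Concretely, I would proceed by strong induction on $n$. In the inductive step I would locate a convenient ``outermost'' reticulation or cherry in $N$ and show, via the local lemmas, that $N$ can be transformed into a smaller tree-child network $N_0$ with one or two fewer leaves such that $B_2(N) \geq B_2(N_0) + (\text{known contribution})$, with equality only when the removed piece matches the recursive step of the fat caterpillar. The induction hypothesis then gives $B_2(N_0) \geq B_2(\FatCat{n-k})$, and combining the two inequalities recovers the bound while tracking the equality cases. The main obstacle I anticipate is the case analysis needed to handle every possible local configuration at the chosen vertex: unlike binary trees, tree-child networks allow both tree vertices and reticulations, and I would need to verify that in each configuration the proposed move both stays within the class of tree-child networks and does not increase $B_2$. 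Ensuring that the equality analysis is tight enough to force uniqueness of the fat caterpillar — rather than merely bounding $B_2$ — will require the most care, since I must rule out every alternative minimizer at each stage of the reduction.
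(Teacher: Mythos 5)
Your outline of the first step (computing $B_2(\FatCat{n})$ by grafting) matches the paper, though note that your base case is off: $B_2(\FatCat{2}) = 2 - \tfrac{3}{4}\log_2 3$, not $\tfrac{8}{3} - \log_2 3$ (the latter is the limiting value of $B_2(\FatCat{n})$ as $n \to \infty$, i.e.\ the prefactor in the closed form; with your base case the geometric sum would not produce the stated bound). The real problem, however, is that the second part of your proposal is a plan rather than a proof: the entire content of the theorem lies in identifying \emph{which} local moves decrease $B_2$, and you leave exactly that to ``case analysis I anticipate.'' The paper's argument rests on three specific facts that your proposal does not contain: (1) a minimizer must have the maximum number $n-1$ of reticulations, which follows from Lemma~\ref{lemmaAddEdge} --- for any vertex with two non-reticulation children, adding a reticulation edge in one of the two possible directions yields leaf probabilities averaging to the old ones, so by strict concavity at least one direction strictly decreases $B_2$; (2) every reticulation's child must be a leaf (Lemma~\ref{lemmaInternalReticulations}, a similar averaging argument followed by a reapplication of Lemma~\ref{lemmaAddEdge} to restore strictness); and (3) among the resulting one-component, maximally reticulated networks, an uncrossing swap of reticulation parents combined with Proposition~\ref{propDeltaB2} forces the nested configuration of the fat caterpillar. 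Nothing in your proposal points toward (1), which is the least obvious and most essential step: without knowing that the minimizer is reticulation-saturated, the structure of the minimizer is not pinned down at all.

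Your proposed framework --- strong induction on $n$, peeling off an ``outermost'' reticulation or cherry to get a smaller tree-child network $N_0$ with $B_2(N) \geq B_2(N_0) + (\text{known contribution})$ --- also faces a structural obstacle that the paper's argument deliberately avoids. A general tree-child network does not decompose as ``smaller tree-child network with a piece grafted on a leaf,'' because reticulation edges can connect the would-be outer piece to distant parts of the network, so Proposition~\ref{propGrafting} does not apply across the cut; and even when it does apply, the contribution $p_{\ell^*} B_2(N')$ depends on the global structure of $N_0$ through $p_{\ell^*}$, so the induction hypothesis on $B_2(N_0)$ alone does not close the argument. The paper instead takes a global minimizer (which exists since, by Lemma~\ref{lemmaSaturatedTCN}, there are finitely many tree-child networks with $n$ leaves) and rules out every non-fat-caterpillar configuration by a strictly improving local move; no induction on $n$ is needed. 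To repair your proposal you would need to either supply the three lemmas above and switch to the variational framing, or find a genuinely different decomposition that is compatible with reticulations crossing the cut --- which you have not indicated how to do.
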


The first ingredient in our proof of Theorem~\ref{thmMinTCN} is the
following observation about rooted phylogenies.

\begin{lemma} \label{lemmaAddEdge}
Let $N$ be a rooted phylogeny and let $u$, $v$ and $w$ be three vertices of~$N$\!
such that $u$ is a parent of $v$\! and $w$,
and neither of these two vertices is an ancestor of the other.
Denote by $N'$ and $N''$ the rooted phylogenies obtained by adding an edge
between $\vec{uv}$ and $\vec{uw}$, in one direction for $N'$ and in the
other one for $N''$, as shown below.
\begin{center}
  \includegraphics[width=0.75\linewidth]{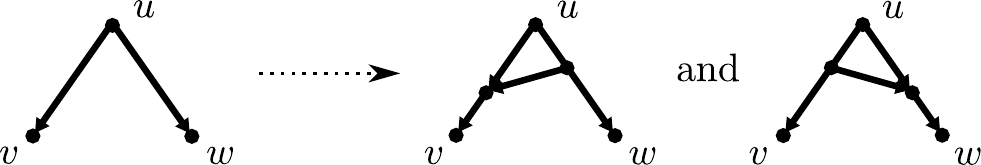}
\end{center}
Then,
\[
  B_2(N') \;+\; B_2(N'') \;\leq\; 2\, B_2(N) \,.\\[1ex]
\]
In particular, $\min\Set{B_2(N'), B_2(N'')} \leq  B_2(N)$.
Moreover, these inequalities are strict if and only if
there exists a leaf~$\ell$ such that $P_v(\ell) \neq P_w(\ell)$, where
$P_x(\ell)$ denote the probability that the simple
random walk started from $x$ ends in~$\ell$.
\end{lemma}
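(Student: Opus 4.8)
The plan is to work directly with the leaf distributions of $N$, $N'$ and $N''$ and to reduce the inequality to the concavity of the entropy function $f\colon x\mapsto -x\log_2 x$, applied one leaf at a time. The crucial preliminary observation is that, because the simple forward random walk leaves each vertex along a uniformly chosen out-edge, the two edges $\vec{uv}$ and $\vec{uw}$ carry \emph{exactly the same} amount of probability flow, namely $\phi = p_u/\deg^+(u)$, where $p_u$ is the probability of reaching $u$. First I would record this fact, together with the remark that since neither $v$ nor $w$ is an ancestor of the other, any given trajectory of the walk visits at most one of $v$ and $w$. Consequently, writing $P_v(\ell)$ and $P_w(\ell)$ for the downstream probabilities appearing in the statement and $s(\ell)$ for the probability of reaching $\ell$ while visiting neither $v$ nor $w$, we have
\[
  p_\ell \;=\; \phi\,P_v(\ell) \;+\; \phi\,P_w(\ell) \;+\; s(\ell)\,.
\]

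Next I would analyse the effect of the modification. Adding the edge in the $N'$-direction subdivides $\vec{uv}$ and $\vec{uw}$ and reroutes a flow of $\phi/2$ from $v$ towards $w$, so that the probability of reaching $v$ decreases by $\phi/2$ and the probability of reaching $w$ increases by $\phi/2$, while neither the downstream distributions $P_v, P_w$ nor the background term $s$ is affected. Writing $D(\ell) = P_w(\ell) - P_v(\ell)$, this gives $p'_\ell = p_\ell + \tfrac{\phi}{2}D(\ell)$ and, by the symmetric computation for the opposite direction, $p''_\ell = p_\ell - \tfrac{\phi}{2}D(\ell)$. Thus $p_\ell$ is exactly the midpoint of $p'_\ell$ and $p''_\ell$ for every leaf, and summing the leafwise concavity inequality over all leaves yields
\[
  B_2(N') + B_2(N'') \;=\; \sum_\ell\Bigl(f\bigl(p_\ell + \tfrac{\phi}{2}D(\ell)\bigr) + f\bigl(p_\ell - \tfrac{\phi}{2}D(\ell)\bigr)\Bigr) \;\leq\; 2\sum_\ell f(p_\ell) \;=\; 2\,B_2(N)\,.
\]
The bound $\min\{B_2(N'),B_2(N'')\}\leq B_2(N)$ is then immediate, since a minimum is at most an average.

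For the strictness statement I would invoke the strict concavity of $f$ on $[0,\infty)$: the $\ell$-th summand vanishes if and only if $D(\ell)=0$, that is $P_v(\ell)=P_w(\ell)$ (note $\phi>0$, because every vertex of a rooted phylogeny is reachable from the root, so $p_u>0$). Hence both inequalities are strict exactly when there exists a leaf with $P_v(\ell)\neq P_w(\ell)$, and otherwise $p'_\ell=p''_\ell=p_\ell$ for all $\ell$ and equality holds throughout.

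I expect the main obstacle to be bookkeeping rather than conceptual: one must check carefully that, in the general network setting, rerouting the flow along $\vec{uv}$ and $\vec{uw}$ leaves the terms $P_v$, $P_w$ and $s$ genuinely unchanged — in particular that any extra in-flow reaching $v$ or $w$ through parents other than $u$, and the probability $s(\ell)$ of avoiding both $v$ and $w$, do not secretly depend on the two modified edges. Once it is established that the modification is confined to those sibling edges and that they carry equal flow $\phi$, the argument is nothing more than the one-variable concavity of $f$.
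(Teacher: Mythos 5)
Your proof is correct and follows essentially the same route as the paper's: decompose the probability of reaching each leaf into the contributions passing through $v$, through $w$, and through neither, observe that the modification makes $p_\ell$ the exact midpoint of $p'_\ell$ and $p''_\ell$, and conclude by leafwise strict concavity of $f$. The only slip is the displayed decomposition $p_\ell = \phi\,P_v(\ell) + \phi\,P_w(\ell) + s(\ell)$, whose coefficients should be the total through-probabilities $p_v$ and $p_w$ (which may exceed $\phi$ when $v$ or $w$ has parents other than $u$, as the paper writes it) --- exactly the bookkeeping point you flag at the end, and one that leaves the midpoint identity, hence the conclusion, unaffected.
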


\begin{remark}
If $N$ is a tree-child network, then for every vertex~$x$ there exists a
leaf~$\ell$ such that every internal vertex of the path from $x$ to $\ell$ is a
tree vertex. Letting $v$ and $w$ be the vertices of the statement of the
proposition, there is thus always at least one leaf subtended by $v$ that
cannot be reached from $w$ (and vice-versa). As a result, 
the inequalities of Lemma~\ref{lemmaAddEdge} are always strict.
Note however that for $N'$ and $N''$ to be tree-child networks,
$v$ and $w$ should not be reticulations.
\end{remark}

\begin{proof}
Let $N'$ be the phylogeny obtained by making the new edge point towards
the incoming edge of $v$. For any vertex $x$ and any leaf~$\ell$,
let $p_x$ denote the probability that the simple random walk started from the
root of $N$ goes through $x$ and
$Q(\ell)$ the probability that it ends in $\ell$ without going through $v$
or~$w$. Finally, let~$P_x(\ell)$ denote the probability that the simple
random walk started from $x$ ends in~$\ell$.
Since neither of $v$ and $w$ is an ancestor of the other, the random walk
either goes through~$v$, or it goes through~$w$, or it avoids both of them.
Therefore, the probability of reaching $\ell$ in $N$ is
\[
  p_\ell \;=\; p_v P_v(\ell) \;+\; p_w P_w(\ell) \;+\; Q(\ell) \,.
\]
Similarly, the probabilities of reaching $\ell$ in $N'$ and in $N''$ are
respectively
\[
  p'_\ell \;=\; (p_v + p_u/4) P_v(\ell) \;+\; (p_w - p_u/4)
  P_w(\ell) \;+\; Q(\ell)
\]
and
\[
  p''_\ell \;=\; (p_v - p_u/4) P_v(\ell) \;+\; (p_w + p_u/4)
  P_w(\ell) \;+\; Q(\ell) \,.
\]
As a result, we have
\[
  p'_\ell \;+\; p''_\ell \;=\; 2\, p_\ell \,, 
\]
and it follows from the strict concavity of $f: x \mapsto -x \log x$ that
\[
  f(p'_\ell) \;+\; f(p''_\ell) \;\leq\;
  2\,f(p_\ell) \,,
\]
where the inequality is strict if and only if $p'_\ell \neq p''_\ell$, i.e.\ if
and only if $P_v(\ell) \neq P_w(\ell)$.
Summing these inequalities over the leaves, we thus get
\[
  B_2(N') \;+\; B_2(N'') \;\leq\; 2\, B_2(N) \,,
\]
with a strict inequality if and only
there exist a leaf~$\ell$ such that
$P_v(\ell) \neq P_w(\ell)$. This concludes the proof.
\end{proof}

Before giving the second ingredient of our proof of Theorem~\ref{thmMinTCN},
let us recall a standard fact about tree-child networks. We also recall its
proof for the sake of completeness.

\begin{lemma} \label{lemmaSaturatedTCN}
A tree-child network with $n$ leaves has at most $n - 1$ reticulations.
If it does not have $n - 1$ reticulations, then it has a vertex with two
non-reticulation children.
\end{lemma}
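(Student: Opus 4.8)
The plan is to prove both claims with a single edge-counting argument combined with the tree-child constraint. First I would classify the internal vertices of $N$ according to Definition~\ref{defTCN}: the root (indegree $0$, outdegree $2$), the $t$ tree vertices (indegree $1$, outdegree $2$), and the $r$ reticulations (indegree $2$, outdegree $1$); the remaining vertices are the $n$ leaves, each of indegree $1$. Counting the edges of $N$ by out-degrees gives $2 + 2t + r$, while counting them by in-degrees gives $t + 2r + n$. Equating these two expressions yields the identity $t = n + r - 2$.

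The key observation is that the tree-child property forbids a reticulation from being the child of another reticulation: a reticulation has outdegree $1$, and by Definition~\ref{defTCN} its unique child must be a tree vertex or a leaf. Hence every edge entering a reticulation originates from the root or from a tree vertex, and there are exactly $2r$ such edges. On the other hand, the tree-child property forces each of the $1 + t$ root-or-tree vertices to have \emph{at most} one reticulation child, since at least one of its two children must be a non-reticulation. Therefore these $1 + t$ vertices contribute at most $1 + t$ edges into reticulations in total, which gives $2r \le 1 + t$.

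Substituting $t = n + r - 2$ into $2r \le 1 + t$ immediately yields $r \le n - 1$, proving the first claim. For the second claim I would examine the case of strict inequality: if $r < n - 1$ then $2r < 1 + t$, so the total number $2r$ of edges entering reticulations is strictly smaller than the number $1 + t$ of root-or-tree vertices. Since each such vertex contributes at most one such edge, a pigeonhole argument shows that at least one root-or-tree vertex contributes none, i.e.\ has both of its children among the tree vertices and the leaves. This is precisely a vertex with two non-reticulation children.

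I do not anticipate a serious obstacle; the only points requiring care are the bookkeeping of in- and out-degrees (in particular treating the root separately from the tree vertices, since it has indegree $0$) and the justification that reticulations never have reticulation children, which is what guarantees that all $2r$ in-edges of reticulations emanate from the $1 + t$ root-or-tree vertices and thus makes the bound $2r \le 1 + t$ valid.
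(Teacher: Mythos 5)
Your proof is correct and follows essentially the same route as the paper's: the handshaking identity $t = n + r - 2$ (the paper writes $r + n - 1 = t$ with the root counted among the tree vertices) combined with the bound $2r \le 1 + t$ coming from the fact that every in-edge of a reticulation emanates from a root-or-tree vertex and each such vertex has at most one reticulation child, followed by the same pigeonhole argument for the second claim. No issues.
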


\begin{proof}
By the hand-shaking lemma, every rooted binary phylogeny satisfies
\[
  r + n - 1 = t\,, 
\]
where $r$ is the number of reticulations, $n$ the number of leaves and
$t$ the number of tree vertices, including the root.
In the case of a tree-child network, since every reticulation has two
tree-vertex-or-root parents that it does not share with any other
reticulation, we also have
\[
  2\, r \;\leq\; t \,,
\]
and it follows that $r \leq n - 1$.

Now, assume that a tree-child network has less that $n - 1$ reticulations.
It then has $t > 2\,r$ vertices with two children. Since a reticulation
is shared by two of these $t$ vertices,
at least one of them has no reticulation child.
\end{proof}

The second ingredient in our proof of Theorem~\ref{thmMinTCN} is the
following property, which is specific to tree-child networks.

\begin{lemma} \label{lemmaInternalReticulations}
Let $N$ be a tree-child network. If $N$ has a
reticulation whose child is not a leaf, then there exists a tree-child network
$N^\star$\! with the same number of leaves as $N$ and such that
$B_2(N^\star) < B_2(N)$.
\end{lemma}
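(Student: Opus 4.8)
The plan is to use the grafting formula of Proposition~\ref{propGrafting}: I will detach the subnetwork that dangles below the offending reticulation and re-hang it on a leaf that the random walk reaches with \emph{strictly smaller} probability, which by the grafting identity lowers $B_2$.

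Let $r$ be a reticulation whose child $w$ is not a leaf. Since $r$ has a single child and $N$ is tree-child, $w$ must be a tree vertex or a leaf, hence a tree vertex; write $N_w$ for the subnetwork subtended by $w$ and $s = B_2(N_w)$. The two tree-paths issued from the two children of $w$ end at distinct leaves (they cannot merge, as a tree-path contains no reticulation and its internal vertices have indegree~$1$), and both endpoints are reached from $w$ with positive probability, so the leaf-distribution of $N_w$ is not a point mass and $s>0$. Now let $a,b$ be the two parents of $r$ and $a'$ the other child of $a$; in a tree-child network a reticulation cannot have a reticulation parent, so $a$ has outdegree~$2$. Writing $p_x$ for the probability that the walk started at the root reaches $x$, both parents are reachable, so $p_w = p_r = \tfrac12 p_a + \tfrac12 p_b > \tfrac12 p_a$. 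Tree-childness forces the non-reticulation child $a'$ to be a tree vertex or a leaf, hence of indegree~$1$, so $p_{a'} = \tfrac12 p_a$. Following tree-children from $a'$ to a leaf $\ell'$, every vertex strictly below $a'$ on this path has indegree~$1$ and is reached with exactly half of its parent's probability; hence $p_{\ell'} = 2^{-m} p_{a'} \le p_{a'} < p_w$ for some $m \ge 0$. The same indegree-$1$ chain shows that $\ell'$ is reached only through $a'$, so $\ell'$ is not a descendant of $w$.

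Let $N_0$ be obtained from $N$ by contracting $N_w$ to a single leaf located at $w$, and let $N^\star$ be obtained from $N_0$ by grafting $N_w$ onto the leaf $\ell'$; in $N^\star$ the reticulation $r$ then points to a leaf. Both are tree-child networks with $n$ leaves, because grafting a network rooted at the tree vertex $w$ onto $\ell'$ leaves the parent of $\ell'$ with a tree-vertex child, while $r$ now has a leaf child. Since $\ell'$ is not a descendant of $w$, detaching $N_w$ changes neither $p_w$ nor $p_{\ell'}$, so Proposition~\ref{propGrafting} gives $B_2(N) = B_2(N_0) + p_w\, s$ and $B_2(N^\star) = B_2(N_0) + p_{\ell'}\, s$. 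Subtracting, $B_2(N^\star) - B_2(N) = (p_{\ell'} - p_w)\, s < 0$, which is exactly what we want.

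The main obstacle is that the identity $B_2(N) = B_2(N_0) + p_w\, s$ requires $N_w$ to be \emph{pendant}, i.e.\ reachable from the rest of $N$ only through $w$; otherwise contracting $N_w$ severs edges that enter it from outside. Tree vertices and leaves of $N_w$ automatically have their unique parent inside $N_w$, so pendantness can only fail through a reticulation of $N_w$ that has a parent outside $N_w$. I would secure it by choosing $r$ so that $\mathrm{desc}(w)$ is inclusion-minimal among all reticulations with a non-leaf child: a witnessing internal reticulation with a non-leaf child would have a strictly smaller descendant set and contradict minimality, leaving only the case of an internal reticulation with a leaf child and an external parent. Disposing of that last case cleanly — or replacing the whole-subnetwork move by an equivalent purely local surgery near $r$ — is the delicate step I expect to require the most care.
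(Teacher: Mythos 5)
There is a genuine gap, and it is exactly the one you flag at the end: the grafting identity $B_2(N) = B_2(N_0) + p_w\,s$ is only valid when the sub-phylogeny $N_w$ is pendant, and your minimality argument does not secure this. Choosing $r$ so that $\mathrm{desc}(w)$ is inclusion-minimal among reticulations with a non-leaf child only rules out internal reticulations of $N_w$ that themselves have a non-leaf child; it says nothing about a reticulation $r_2 \in \mathrm{desc}(w)$ whose child is a leaf but one of whose parents lies outside $\mathrm{desc}(w)$. This case is not vacuous. For a concrete witness, take the root $\rho$ with children $a,b$; let $a$ have children $r$ and $a'$, let $b$ have children $r$ and a leaf; let $r$ have child $w$, let $w$ have children $q_1$ and a leaf, let $q_1$ and $a'$ each have a leaf child and share the reticulation $r_2$, whose child is a leaf. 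This is tree-child, $r$ is the unique reticulation with a non-leaf child (hence trivially minimal), and $N_w$ contains $r_2$, which is fed from $a'$ outside $\mathrm{desc}(w)$. Detaching $N_w$ is then not an ungrafting in the sense of Proposition~\ref{propGrafting}, and the two identities your conclusion rests on do not hold. The surrounding bookkeeping (the strict inequality $p_{\ell'} < p_w$, the positivity of $s$, the non-descendance of $\ell'$) is fine, but without pendantness the argument does not close.

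The paper avoids this obstruction by staying entirely local around $r$: writing $u$ for the tree-vertex child of $r$ with children $v,w$ and $e,e'$ for the incoming edges of $r$, it deletes $r$ and $u$ and rewires $\Set{e,e'}$ to $\Set{v,w}$ in the two possible ways, producing networks $N'$ and $N''$ with $p_\ell = \tfrac12(p'_\ell + p''_\ell)$ for every leaf; concavity of $x \mapsto -x\log x$ then gives $B_2(N')+B_2(N'')\leq 2\,B_2(N)$, so one of the two does not increase $B_2$. Since this inequality need not be strict, the paper finishes by observing that the chosen network has fewer than $n-1$ reticulations, hence (Lemma~\ref{lemmaSaturatedTCN}) a vertex with two non-reticulation children, to which Lemma~\ref{lemmaAddEdge} applies and yields a strict decrease. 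No pendant subnetwork is ever moved, so the ``escaping reticulation'' problem never arises. If you want to salvage your global-regrafting strategy, you would need a separate surgery for the boundary reticulations with leaf children, at which point you are essentially reconstructing a local argument of this kind anyway.
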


\begin{proof}
Assume that $r$ is a reticulation whose child~$u$ is a tree vertex. Let
$v$ and $w$ be the children of $u$, and $e$ and $e'$ the two incoming edges of
$r$. Finally, let $N'$ and $N''$ be the tree-child networks obtained by removing
$r$ and $u$ from $N$, and:
\begin{itemize}
  \item in the case of $N'$, making~$e$ point to~$v$ and~$e'$ point to~$w$;
  \item in the case of $N''$, making~$e$ point to~$w$ and~$e'$ point to~$v$.
\end{itemize}
This construction is illustrated in
Figure~\ref{figProofLemmaInternalReticulations}.
Note that $N'$ and $N''$ are indeed tree-child networks, because $N$ is
a tree-child network and, therefore, neither the parents nor the siblings
of~$r$ are reticulations.

\begin{figure}[h!]
  \centering
  \captionsetup{width=0.95\linewidth}
  \includegraphics[width=0.75\linewidth]{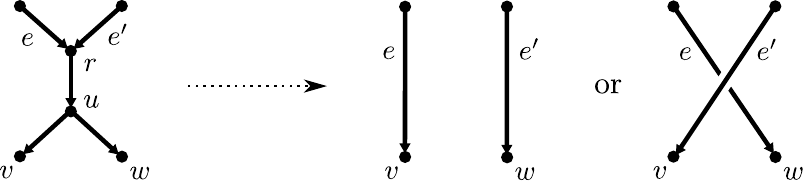}
  \caption{The transformation of $N$ in to $N'$ and $N''$}  
\label{figProofLemmaInternalReticulations}
\end{figure}

Now, let $X = (X_k)$ denote the simple random walk started from the
root of~$N$.
Since $N$ has no directed cycles, $X$ visits $u$ at most one time.
When this happens, it goes 
through $\vec{uv}$ with probability $1/2$ and
through $\vec{uw}$ with probability~$1/2$. Thus, if we let 
$\tilde{X}$ be the random walk induced
by $X$ on $N\setminus\Set{r, u}$; $X'$ and $X''$ be simple random walks
started from the root of $N'$ and $N''$, respectively; and
$Y \sim \mathrm{Bernoulli}(1/2)$ be independent of $(X', X'')$, then
\[
  \tilde{X} \;\sim\; Y X' \;+\; (1 - Y) X'' \,.
\]
In particular, the probabilities $p_\ell$, $p'_\ell$ and $p''_\ell$ of reaching
$\ell$ in $N$, $N'$ and $N''$ satisfy
\[
  p_\ell \;=\; \frac{1}{2}\big( p'_\ell \;+\; p''_\ell\big) \,.
\]
Therefore, by the same concavity argument as in the proof of
Lemma~\ref{lemmaAddEdge},
\[
  B_2(N') \;+\; B_2(N'') \;\leq\; 2\, B_2(N) \,.
\]
Note however that this time the inequality is not strict, because if 
the probability of going through $e$ is the same as the probability of
going through $e'$, then $p'_\ell = p''_\ell$ for every leaf.
In order to get a strict inequality, choose $\tilde{N} \in \Set{N', N''}$
such that $B_2(\tilde{N}) \leq B_2(N)$, and note that $\tilde{N}$ has
strictly less than $n - 1$ reticulations, since it has one reticulation less
than $N$. As a result, by Lemma~\ref{lemmaSaturatedTCN} $\tilde{N}$ has at
least one vertex with two non-reticulation children; and we can
apply Lemma~\ref{lemmaAddEdge} to get a tree-child network $N^\star$ such
that
\[
  B_2(N^\star) \;<\; B_2(\tilde{N}) \;\leq\; B_2(N)\,,
\]
finishing the proof.
\end{proof}

With Lemmas~\ref{lemmaAddEdge} and~\ref{lemmaInternalReticulations}, we
can now prove Theorem~\ref{thmMinTCN} -- i.e.\ show that the fat caterpillar
is the only tree-child network that minimizes $n$.

\begin{proof}[Proof of Theorem~\ref{thmMinTCN}]
Let $N$ be a tree-child network with $n$ leaves that minimizes~$B_2$.
Then, $N$ has $n - 1$ reticulations (otherwise by
Lemma~\ref{lemmaSaturatedTCN} it would have a vertex with two non-reticulated
children and we could use Lemma~\ref{lemmaAddEdge} to contradict
its minimality). Moreover, by Lemma~\ref{lemmaInternalReticulations}
the children of each of these reticulations are all leaves. As a result,
the tree vertices of $N$ are aligned on a single path, as represented in
Figure~\ref{figSinglePath}.A (to see this, start from the root and
follow the edges that point to tree vertices until a leaf is reached; since
no reticulation subtends a tree vertex, the path thus obtained contains all
tree vertices).

\begin{figure}[h!]
  \centering
  \captionsetup{width=0.95\linewidth}
  \includegraphics[width=0.95\linewidth]{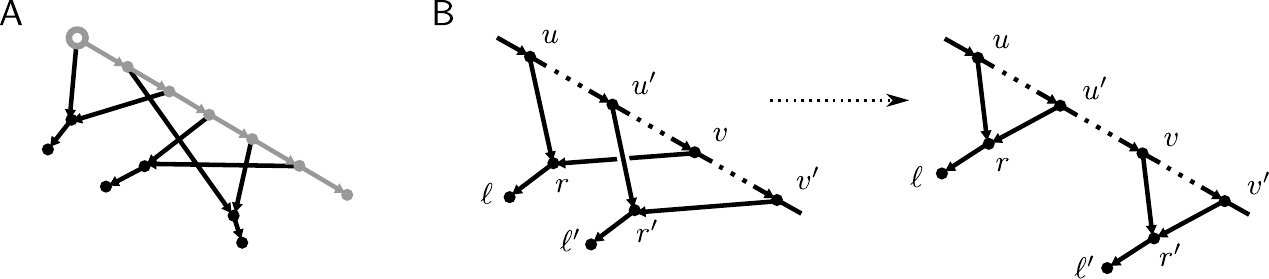}
  \caption{\textsf{A}, an example of a tree-child network that has $n-1$
  reticulations for $n$ leaves and where the child of every reticulation is a
  leaf (such tree-child networks are sometimes known as \emph{one-component
  tree-child networks}~\cite{CardonaZhang2020}). The path containing all tree
  vertices is highlighted. \textsf{B}, the swapping of edges described in the
  proof, for one particular configuration of $u, u', v$ and $v'$ where
  $u$ is an ancestor of $v'$ and $u'$ is an ancestor of $v$.}
\label{figSinglePath}
\end{figure}

Now, assume that $N$ has two reticulations $r$ and $r'$ with parents
$u, v$ and $u', v'$, respectively, such that $u$ is an ancestor of $v'$
and $u'$ is an ancestor of $v$. In that case, let $N'$ be the tree-child
network obtained by replacing the edges
$\vec{vr}$ and $\vec{u\lowprime r\lowprime}$ by $\vec{u \lowprime r}$ and
$\vec{vr\lowprime}$. 
Letting $\ell$ denote the child of $r$ and $\ell'$ that of $r'$, we have:
in $N$, $(p_\ell,\, p_{\ell'}) = (p_u + p_v,\, p_{u'} + p_{v'})$; and, in
$N'$, $(p'_\ell,\, p'_{\ell'}) = (p_u + p_{u'},\, p_v + p_{v'})$. Thus,
by Proposition~\ref{propDeltaB2}, 
$B_2(N') - B_2(N)$ has the same sign as
\[
  (p_v - p_{u'})(p_u - p_{v'}) \;<\; 0 \,.
\]
This shows that if $N$ minimizes $B_2$ then it does not have two reticulations
$r$ and $r'$ such that one parent of $r$ is an ancestor of $r'$ and one
parent of $r'$ is an ancestor of $r$ -- in other words, it is the fat caterpillar
$\FatCat{n}$.

Finally, the $B_2$ index of $\FatCat{n}$ can be computed by noting that
$B_2(\FatCat{2}) = 2 - \frac{3}{4} \log_2(3)$ and that $\FatCat{n+1}$ is
obtained by grafting $\FatCat{2}$ on the leaf with probability $4^{-n+1}$ of
$\FatCat{n}$. Using Proposition~\ref{propGrafting} and a little algebra then
yields the lower bound in Theorem~\ref{thmMinTCN}.
\end{proof}

\vspace{2ex} 

\section{Properties of $B_2$ in random trees} \label{secRandomTrees}

In this section, we study the mean and variance of $B_2$ for two of the most
prominent families of random trees: Galton--Watson
trees and Markov branching trees.
\revision{To put the results on Markov branching trees in context,
the reader is referred to Chapter~I of Luc\'{i}a Rotger's PhD dissertation
\cite{LuciaRotger}, which contains a discussion of similar results for other
balance indices}.

\subsection{Galton--Watson trees} \label{secGW}

We consider general Galton--Watson trees whose generations are
indexed by $t \geq 0$, generation~$0$ being the root, 
and where the number of offspring of individual~$i$ from generation~$t$ is
$Y_t(i) \sim Y_t$. The number $Z_t$ of individuals in generation~$t$ is
therefore given by $Z_0 = 1$ and
\[
  Z_{t + 1} \;=\; \sum_{i = 1}^{Z_t} Y_t(i) \,.
\]
The natural filtration of the process is
{$\mathcal{F}_t = \sigma(Y_s(i) : i \geq 1, 0 \leq s \leq t - 1)$}.


\begin{theorem} \label{thmExpecGW}
Let $T$\! be a time-inhomogeneous Galton--Watson tree with offspring distribution
$Y_t$ in generation~$t$, and let
\[
  \alpha_t = \Prob{Y_t > 0}
  \quad\text{and}\quad
  \eta_t = \Expec{\log_2(Y_t) \, \Indic{Y_t > 0}}.
\]
Then, letting \,$\Ball{T}{k}$ denote the restriction of \,$T$\! to 
generations $t = 0, \ldots, k$,
\[
  \Expec{B_2(\Ball{T}{k})} \;=\;
  \sum_{t = 0}^{k - 1} \eta_t \prod_{s = 0}^{t - 1} \alpha_s \,.
\]
In particular, if \,$T$\! is time-homogeneous, so that $\alpha_t = \alpha$ and
$\eta_t = \eta$ for every~$t$,
\[
  \Expec{B_2(\Ball{T}{k})} \;=\;
  \begin{dcases}
  \frac{\eta}{1 - \alpha} \big(1 - \alpha^k\big)
    & \text{if } \alpha < 1 \\[0.5ex]
  \; \eta k
    & \text{if } \alpha = 1. 
  \end{dcases}
\]
\end{theorem}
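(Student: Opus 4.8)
The plan is to set up a one-step recursion on the number of generations by exploiting the branching property of Galton--Watson trees, and then to unroll it. The algebraic engine is a multifurcating version of Corollary~\ref{corRootSplitRecursion}: if $N$ is a rooted phylogeny whose root has exactly $m \geq 1$ children subtending the sub-phylogenies $N_1, \ldots, N_m$, then
\[
  B_2(N) \;=\; \log_2 m \;+\; \frac{1}{m}\sum_{j=1}^{m} B_2(N_j)\,.
\]
I would obtain this from the results already available: start from the star $S_m$ (a root with $m$ leaf-children), for which $B_2(S_m) = \log_2 m$ since the walk reaches each leaf with probability $1/m$, and graft $N_j$ onto its $j$-th leaf. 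Grafting on one leaf leaves the reaching probabilities of the other leaves untouched, so Proposition~\ref{propGrafting} applies at each step with $p_{\ell^*} = 1/m$, and the successive contributions sum to $\tfrac{1}{m}\sum_j B_2(N_j)$.

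Next I would condition on the offspring number of the starting vertex. Write $E_t^{(k)}$ for the expected $B_2$ of a Galton--Watson tree that uses the offspring distributions $Y_t, Y_{t+1}, \ldots$ and is restricted to depth $k$, so that the quantity sought is $E_0^{(k)}$. If the root has $Y_t = 0$ offspring it is a leaf and contributes $B_2 = 0$; if $Y_t = m \geq 1$, then by the branching property the $m$ subtrees are i.i.d.\ copies of a Galton--Watson tree started one generation later and restricted to depth $k-1$, each of mean $E_{t+1}^{(k-1)}$. Applying the decomposition identity and linearity gives, for $m \geq 1$,
\[
  \Expec{B_2 \mid Y_t = m} \;=\; \log_2 m \;+\; E_{t+1}^{(k-1)}\,,
\]
where the crucial point is that the factor $1/m$ and the $m$ summands cancel to leave $E_{t+1}^{(k-1)}$ \emph{independently of $m$}. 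Averaging over $Y_t$ (the term $m = 0$ contributing $0$) then yields the recursion
\[
  E_t^{(k)} \;=\; \sum_{m \geq 1}\Prob{Y_t = m}\bigl(\log_2 m + E_{t+1}^{(k-1)}\bigr)
  \;=\; \eta_t \;+\; \alpha_t\, E_{t+1}^{(k-1)}\,,
\]
since $\sum_{m\geq1}\Prob{Y_t=m}\log_2 m = \eta_t$ and $\sum_{m\geq1}\Prob{Y_t=m} = \alpha_t$.

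Finally I would unroll this recursion down to the base case $E_\cdot^{(0)} = 0$ (a tree of depth $0$ is a single leaf), which telescopes to
\[
  E_0^{(k)} \;=\; \sum_{t=0}^{k-1} \eta_t \prod_{s=0}^{t-1} \alpha_s
  \;+\; \Bigl(\prod_{s=0}^{k-1}\alpha_s\Bigr) E_k^{(0)}
  \;=\; \sum_{t=0}^{k-1} \eta_t \prod_{s=0}^{t-1} \alpha_s\,,
\]
with the usual convention that the empty product equals $1$. The time-homogeneous formula then drops out by setting $\alpha_t = \alpha$, $\eta_t = \eta$ and summing the geometric series $\eta\sum_{t=0}^{k-1}\alpha^t$, splitting into the cases $\alpha < 1$ and $\alpha = 1$. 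The main thing to get right is the probabilistic bookkeeping rather than the algebra: one must treat $Y_t = 0$ carefully so that the undefined $\log_2 0$ is harmlessly killed by the indicator (matching $B_2 = 0$), confirm that the subtree values are independent of $Y_t$ given its value, and verify the $m$-independent cancellation; a mild hypothesis such as $\eta_t < \infty$ makes the final finite sum finite, and since $B_2 \geq 0$ everywhere all the expectations are well defined in $[0,\infty]$ by monotone convergence.
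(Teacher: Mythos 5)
Your proof is correct, but it takes a genuinely different route from the paper's. You peel off the \emph{first} generation: you condition on the root's offspring number, invoke the branching property to identify the subtrees as independent copies started one generation later, and use a multifurcating root-split identity (correctly derived from Proposition~\ref{propGrafting} via the star) to get the recursion $E_t^{(k)} = \eta_t + \alpha_t E_{t+1}^{(k-1)}$, which you then unroll. The paper instead peels off the \emph{last} generation: it writes the increment $B_2(\Ball{T}{t+1}) - B_2(\Ball{T}{t}) = \sum_{i=1}^{Z_t} P_t(i)\log_2(Y_t(i))\Indic{Y_t(i)>0}$, conditions on the filtration $\mathcal{F}_t$ using independence of the generation-$t$ offspring variables, and identifies $\Expec{\sum_i P_t(i)}$ as the probability $\prod_{s<t}\alpha_s$ that the walk survives to generation $t$. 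The two recursions are mirror images and both telescope to the same sum. Your top-down conditioning is closer in spirit to the paper's treatment of Markov branching trees (Theorem~\ref{thmMarkovBranching} via Corollary~\ref{corRootSplitRecursion}) and arguably requires less probabilistic bookkeeping; the paper's bottom-up increment formula has the advantage of generalizing directly to second-moment computations (it is reused verbatim in the proof of Proposition~\ref{propVarBinaryGW}), which a pure root-conditioning argument handles less cleanly. One small remark: the integrability caveat at the end is not really needed, since $\log_2 m \geq 0$ for $m \geq 1$ means all quantities are well defined in $[0,\infty]$ and Tonelli applies throughout.
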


\begin{remark} \label{remInfiniteGW}
Recall that, by Definition~\ref{defB2Infinite} of $B_2$ for infinite trees,
we have $B_2(T) = \lim_{k} B_2(\Ball{T}{k})$. In time-homogeneous
Galton--Watson trees, there is thus a dichotomy between $\alpha < 1$, where
$\Expec{B_2(T)} = \eta / (1 - \alpha)$ is always finite, including in
the supercritical case (where this implies that 
$\Expec{B_2(T) \given \Abs{T} = +\infty}$ is finite);
and $\alpha = 1$, where we always have $\Expec{B_2(T)} = +\infty$
(except in the degenerate case $Y = 1$ a.s., where $T$ is an infinite path
and $B_2(T) = 0$).
\end{remark}

\begin{proof}
Let $P_t(i)$ be the probability that the random walk goes through
individual~$i$ in generation~$t$ (note that this is a random
variable). A bit of book-keeping shows that
\begin{equation} \label{eqProofThmExpecGW}
  B_2(\Ball{T}{t + 1}) \;=\; B_2(\Ball{T}{t}) \;+\;
  \sum_{i = 1}^{Z_t} P_t(i) \log_2\!\big(Y_t(i)\big) \, \Indic{Y_t(i) > 0} \,.
\end{equation}
Moreover, since $Z_t$ and $(P_t(i) : i \geq 1)$ are $\mathcal{F}_t$-measurable
and $(Y_t(i) : i \geq 1)$ is independent of $\mathcal{F}_t$, with
$Y_t(i) \sim Y_t$ for all $i$, 
\[
  \Expec{\,\sum_{i = 1}^{Z_t} P_t(i) \log_2\!\big(Y_t(i)\big) \,
  \Indic{Y_t(i) > 0} \given \mathcal{F}_t} \;=\;
  \Expec{\log_2(Y_t) \,\Indic{Y_t > 0}}\, \sum_{i = 1}^{Z_t} P_t(i)\,.
\]
As a result, taking expectations in~\eqref{eqProofThmExpecGW} we get
\[
  \Expec{B_2(\Ball{T}{t + 1})} \;=\; \Expec{B_2(\Ball{T}{t})} \;+\;
  \Expec{\log_2(Y_t) \,\Indic{Y_t > 0}}\,
  \Expec{\sum_{i = 1}^{Z_t} P_t(i)}\,.
\]
Now, observe that $\sum_{i = 1}^{Z_t} P_t(i)$ is the probability that
the random walk reaches generation~$t$, conditional on $\mathcal{F}_t$.
Its expected value is therefore the total probability that the
random walk reaches generation~$t$, which is also the probability
that it does not get trapped in a leaf for some generation~$s < t$.
As a result,
\[
  \Expec{\sum_{i = 1}^{Z_t} P_t(i)} \;=\;
  \prod_{s = 0}^{t - 1} \Prob{Y_s > 0} \,.
\]
Writing $\alpha_t = \Prob{Y_t > 0}$ and
$\eta_t = \Expec{\log_2(Y_t) \, \Indic{Y_t > 0}}$, we therefore have
\[
  \Expec{B_2(\Ball{T}{t + 1})} \;=\; \Expec{B_2(\Ball{T}{t})} \;+\;
  \eta_t \, \prod_{s = 0}^{t - 1} \alpha_s \,,
\]
and the theorem follows from the fact that
$\Expec*{\normalsize}{B_2(\Ball{T}{0})} = 0$.
\end{proof}

To close this section, we give an expression for the variance of $B_2$ in
binary Galton--Watson trees. In order to present a simple expression -- and
because this is what we need in the rest of this document -- we focus on the
critical case.

\begin{proposition} \label{propVarBinaryGW}
Let $T$\! be a critical binary Galton--Watson tree, that is, assume that
the offspring distribution is $\Prob{Y = 0} = \Prob{Y = 2} = 1/2$. Then,
\[
  \Var{B_2(\Ball{T}{k})} \;=\;
  \frac{4}{3} - 2^{-k+2} + 4^{-k}\mleft(k + \frac{8}{3}\mright)\,.
\]
As a result, $\Var{B_2(T)} = 4/3$.
\end{proposition}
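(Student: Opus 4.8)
The plan is to compute the variance by tracking the conditional distribution of $B_2$ along generations, mirroring the first- and second-moment bookkeeping already used in the proof of Theorem~\ref{thmExpecGW}. Let me reason about what we need.

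We want $\Var{B_2(\Ball{T}{k})}$ for the critical binary Galton--Watson tree, where $\alpha = \Prob{Y>0} = 1/2$ and $\eta = \Expec{\log_2(Y)\Indic{Y>0}} = \tfrac{1}{2}\log_2 2 = \tfrac{1}{2}$. From Theorem~\ref{thmExpecGW} we already know $\Expec{B_2(\Ball{T}{k})} = \tfrac{\eta}{1-\alpha}(1-\alpha^k) = 1 - 2^{-k}$. So the mean is in hand and the task reduces to computing the second moment.

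Let me think about the cleanest route.

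First I would reuse the increment decomposition. In the binary case, the recursion~\eqref{eqProofThmExpecGW} simplifies considerably: the only term in the increment sum comes from individuals with two offspring, each contributing $P_t(i)\log_2(2) = P_t(i)$. So writing $\Delta_t := B_2(\Ball{T}{t+1}) - B_2(\Ball{T}{t})$, we have $\Delta_t = \sum_{i} P_t(i)\Indic{Y_t(i)=2}$, a sum over generation-$t$ individuals. This is considerably more tractable than the general case, which is exactly why the proposition restricts to binary trees.

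\emph{Approach via $\Var{B_2} = \sum_t \Var{\Delta_t} + 2\sum_{s<t}\Cov{\Delta_s,\Delta_t}$.} I would write $B_2(\Ball{T}{k}) = \sum_{t=0}^{k-1}\Delta_t$ and expand the variance as a double sum over increments. The key quantities to control are $\Expec{\Delta_s \Delta_t}$. The natural tool is conditioning on $\mathcal{F}_t$ for $s \le t$: since $\Delta_s$ is $\mathcal{F}_{s+1}$-measurable and hence $\mathcal{F}_t$-measurable for $s<t$, we get $\Expec{\Delta_s\Delta_t} = \Expec{\Delta_s\,\Expec{\Delta_t \mid \mathcal{F}_t}}$. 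Here $\Expec{\Delta_t \mid \mathcal{F}_t} = \eta \sum_i P_t(i) = \tfrac{1}{2} R_t$, where $R_t := \sum_i P_t(i)$ is the (random) probability that the walk reaches generation~$t$. So the entire computation funnels through the moments of $R_t$ and the joint moments of the increments with $R_t$. I would therefore set up a small system tracking $\Expec{R_t}$, $\Expec{R_t^2}$, and the cross-terms $\Expec{\Delta_s R_t}$; each of these should satisfy a clean linear recursion driven by the branching, because conditioning on one more generation multiplies reach-probabilities by independent Bernoulli-type factors. The diagonal terms $\Var{\Delta_t}$ I would handle by computing $\Expec{\Delta_t^2 \mid \mathcal{F}_t}$ directly: expanding the square gives a diagonal piece $\sum_i P_t(i)^2 \Expec{\Indic{Y=2}} = \tfrac{1}{2}\sum_i P_t(i)^2$ and an off-diagonal piece $\tfrac{1}{4}\sum_{i\neq j}P_t(i)P_t(j)$, so I also need to track $S_t := \sum_i P_t(i)^2$, the probability that two independent walks both reach the \emph{same} generation-$t$ individual.

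The main obstacle I anticipate is controlling the second-moment quantity $\Expec{S_t}$ and, relatedly, $\Expec{R_t^2}$. Both require understanding how \emph{two} simultaneous random walks behave: $\Expec{S_t}$ is the probability that two walks coalesce at a common generation-$t$ vertex, while $\Expec{R_t^2}$ involves the two walks reaching generation $t$ without necessarily coinciding. These quantities satisfy recursions that couple "both walks at the same vertex" with "walks at distinct vertices," and getting the transition probabilities right (a vertex with two children splits two coincident walks into the same child with probability $1/2$ each or different children) is where the real care lies. Once these recursions are solved explicitly—they will be simple geometric-type recursions in $t$—assembling $\Expec{B_2^2} = \sum_{s,t}\Expec{\Delta_s\Delta_t}$ and subtracting $(1-2^{-k})^2$ should yield, after routine algebra, the stated closed form $\tfrac{4}{3} - 2^{-k+2} + 4^{-k}(k + \tfrac{8}{3})$. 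The appearance of the linear-in-$k$ factor $4^{-k}k$ is a signature that one of the coalescence recursions produces a repeated root (a term like $t\cdot 4^{-t}$), which I would expect precisely from the two-walk coalescence probability $\Expec{S_t}$. Finally, taking $k\to\infty$ kills the $2^{-k}$ and $4^{-k}$ terms and leaves $\Var{B_2(T)} = 4/3$, consistent with Definition~\ref{defB2Infinite}.

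As a sanity check before committing to this route, I would verify the recursions against small cases ($k=1$, where $B_2$ is $0$ or $1$ each with probability... actually $\Ball{T}{1}$ has the root with $Y_0$ offspring, giving $B_2 = \Indic{Y_0 = 2}$, so variance $1/4$, which must match $\tfrac{4}{3} - 2 + 4^{-1}(1+\tfrac{8}{3}) = \tfrac{4}{3} - 2 + \tfrac{11}{12} = \tfrac{1}{4}$, confirming the formula) and $k=2$. A clean alternative worth mentioning is a \emph{self-similarity} argument à la Corollary~\ref{corRootSplitRecursion}: conditioning on the root's offspring, with probability $1/2$ the tree is trivial ($B_2=0$) and with probability $1/2$ it is $\Ball{T'}{k-1}\oplus\Ball{T''}{k-1}$ for independent copies, whence $B_2(\Ball{T}{k}) \stackrel{d}{=} \Indic{Y_0=2}\bigl(\tfrac12 B_2(\Ball{T'}{k-1}) + \tfrac12 B_2(\Ball{T''}{k-1}) + 1\bigr)$. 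This gives a direct recursion for both $\Expec{B_2(\Ball{T}{k})}$ and $\Expec{B_2(\Ball{T}{k})^2}$ in terms of the previous generation, turning the whole problem into solving a pair of scalar linear recurrences. I expect this second route to be faster and less error-prone, and I would likely present it as the main argument, reserving the increment-decomposition viewpoint as intuition; the main obstacle there shrinks to correctly handling the cross-term $\Expec{B_2(\Ball{T'}{k-1}) B_2(\Ball{T''}{k-1})} = \Expec{B_2(\Ball{T}{k-1})}^2$ (by independence) when squaring, after which solving the resulting second-moment recurrence with its inhomogeneous geometric forcing produces the $4^{-k}k$ term automatically.
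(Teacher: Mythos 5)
Your designated main argument --- the root-split self-similarity recursion --- is correct and takes a genuinely different route from the paper. The paper works generation by generation: it keeps the increment identity $B_{t+1} = B_t + 2^{-(t+1)}Z_{t+1}$ from the proof of Theorem~\ref{thmExpecGW}, proves $\Expec{Z_t^2} = t+1$, and then closes the system by separately computing the cross-moment $\Expec{B_t Z_t} = 3 - 2^{-t}(t+3)$ before solving a first-order recurrence for $\Expec{B_t^2}$. Your recursion $B_2(\Ball{T}{k}) \stackrel{d}{=} \Indic{Y_0=2}\bigl(\tfrac12 B_2(\Ball{T'}{k-1}) + \tfrac12 B_2(\Ball{T''}{k-1}) + 1\bigr)$ (valid by Corollary~\ref{corRootSplitRecursion}, with the indicator independent of the subtrees) sidesteps that cross-moment entirely, since $\Expec{B' B''} = \Expec{B'}\,\Expec{B''} = (1-2^{-k+1})^2$ by independence; one checks that it yields $s_k = \tfrac14 s_{k-1} + \tfrac74 - 3\cdot 2^{-k} + 4^{-k}$, whose solution $\tfrac73 - 6\cdot 2^{-k} + 4^{-k}(k+\tfrac{11}{3})$ matches the paper's intermediate formula exactly, the $k\,4^{-k}$ term arising from the resonance of the $4^{-k}$ forcing with the homogeneous solution rather than from walk coalescence as you conjecture. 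What the paper's route buys is uniformity with Theorem~\ref{thmExpecGW} (the same filtration bookkeeping works for general, even time-inhomogeneous, offspring laws); what yours buys is brevity and robustness, reducing everything to a scalar linear recurrence. One remark on your first (increment-based) sketch: the "two-walk coalescence" machinery you anticipate for $\Expec{R_t^2}$ and $\Expec{S_t}$ is unnecessary here, because in a binary Galton--Watson tree $P_t(i) = 2^{-t}$ is \emph{deterministic}, so $R_t = 2^{-t}Z_t$ and $S_t = 4^{-t}Z_t$, and all the required quantities collapse to moments of $Z_t$ --- which is precisely the simplification the paper exploits.
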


The proof of this proposition relies on standard calculations similar to that
of Theorem~\ref{thmExpecGW}. It can be found in
Section~\ref{appProofVarBinaryGW} of the Appendix.

\vspace{1ex} 

\subsection{Markov branching trees}

Markov branching trees are a general class of random binary trees that were
introduced by Aldous in \citep{Aldous1996}. They have since become
prominent in phylogenetics, where they are mainly known through the
$\beta$-splitting models, a one-parameter family of models that can
generate a wide variety of random tree shapes. In particular, the $\beta$-splitting
models include the ERM model (which generates trees that have the same shape
as Yule trees), the PDA model (which generates uniform leaf-labeled rooted binary
trees) and the AB model (which generates trees that resemble real-world
phylogenies~\cite{Blum2006}).

A Markov branching tree is described by a family
$\mathbf{q} = (q_n)_{n \geq 2}$ of probability distributions, known as the
\emph{root-split distributions},
such that $q_n$ is symmetric on $\Set{1, \ldots, n - 1}$, that is,
$q_n(k) = q_n(n - k)$. If we do not worry about labels, which are irrelevant
for our purposes, a Markov branching tree $T$ with $n$ leaves can be generated
as follows:
\begin{enumerate}
  \item Sample a random variable $K$ according to $q_n$.
  \item Let the two subtrees of $T$ be independent Markov branching trees with
    $K$ and $n - K$ leaves, respectively.
\end{enumerate}
For a complete introduction to Markov branching trees, see
e.g.~\cite{Lambert2017}.

\vspace{2ex} 

\begin{theorem} \label{thmMarkovBranching}
Let $T_n$ be a Markov branching tree with $n$ leaves and
root-split distributions $\mathbf{q} =(q_n)$.
Let $\mu_n = \Expec{B_2(T_n)}$, $s_n = \Expec{B_2(T_n)^2}$ and
$v_n = \Var{B_2(T_n)}$. Then, letting $K_n \sim q_n$, we have the following
recurrence relations:
\begin{mathlist}
\item $\mu_n = \Expec{\mu_{K_n}} + 1$
\item $s_n = \frac{1}{2}\,\Expec{s_{K_n}} +
  \frac{1}{2}\,\Expec{\mu_{K_n}\, \mu_{n - K_n}}
  + 2\,\Expec{\mu_{K_n}} + 1$
\item $v_n = \frac{1}{4} \Var{\mu_{K_n} + \mu_{n - K_n}} +
  \frac{1}{2}\,\Expec{v_{K_n}}$
\end{mathlist}
with the initial conditions $\mu_1 = s_1 = v_1 = 0$.
\end{theorem}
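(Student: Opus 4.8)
The plan is to exploit the recursive definition of Markov branching trees together with Corollary~\ref{corRootSplitRecursion}, which describes the effect of the root-split operation~$\oplus$ on $B_2$. By construction, $T_n$ is obtained by sampling $K_n \sim q_n$ and attaching to a common root two subtrees $T'$ and $T''$ that, conditionally on $K_n$, are independent Markov branching trees with $K_n$ and $n - K_n$ leaves, respectively. Writing $B' = B_2(T')$ and $B'' = B_2(T'')$, Corollary~\ref{corRootSplitRecursion} gives the pointwise identity
\[
  B_2(T_n) \;=\; \tfrac{1}{2}\bigl(B' + B''\bigr) + 1 \,.
\]
Every assertion will then follow by conditioning on $K_n$ and using that, given $K_n$, the variables $B'$ and $B''$ are independent, with $\Expec{B' \given K_n} = \mu_{K_n}$, $\Expec{B'' \given K_n} = \mu_{n - K_n}$, and the analogous identities for their conditional second moments ($s_{K_n}$, $s_{n - K_n}$) and conditional variances ($v_{K_n}$, $v_{n - K_n}$).

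For the mean~(i), I would take expectations in the identity above, conditioning on $K_n$, to obtain $\mu_n = \tfrac{1}{2}\Expec{\mu_{K_n} + \mu_{n - K_n}} + 1$; the symmetry $q_n(k) = q_n(n - k)$ forces $\Expec{\mu_{K_n}} = \Expec{\mu_{n - K_n}}$, which collapses this to $\mu_n = \Expec{\mu_{K_n}} + 1$. For the second moment~(ii), I would square the identity, expand $\bigl(\tfrac{1}{2}(B' + B'') + 1\bigr)^2$, and take conditional expectations term by term: the cross term $B'B''$ contributes $\mu_{K_n}\,\mu_{n - K_n}$ by conditional independence, the squared terms contribute $s_{K_n}$ and $s_{n - K_n}$, and the linear terms contribute $\mu_{K_n}$ and $\mu_{n - K_n}$. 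A final appeal to the symmetry of $q_n$ folds the $n - K_n$ terms into the $K_n$ terms and yields the stated formula.

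For the variance~(iii), the cleanest route is the law of total variance applied to the conditioning on~$K_n$. Conditionally on $K_n = k$, the independent variables $B'$ and $B''$ have variances $v_k$ and $v_{n - k}$, so $B_2(T_n)$ has conditional variance $\tfrac{1}{4}(v_k + v_{n - k})$ and conditional mean $\tfrac{1}{2}(\mu_k + \mu_{n - k}) + 1$. Averaging the conditional variance over $K_n$ and using symmetry gives the within-groups term $\tfrac{1}{4}\Expec{v_{K_n} + v_{n - K_n}} = \tfrac{1}{2}\Expec{v_{K_n}}$, while taking the variance of the conditional mean gives the between-groups term $\tfrac{1}{4}\Var{\mu_{K_n} + \mu_{n - K_n}}$ (the additive constant~$1$ drops out and the factor~$\tfrac{1}{2}$ comes out squared). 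Adding the two yields~(iii). The initial conditions are immediate: $T_1$ is a single vertex, so $B_2(T_1) = 0$ and hence $\mu_1 = s_1 = v_1 = 0$.

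There is no serious obstacle here; the argument is essentially bookkeeping organized around the single identity above. The two points that require a little care are, first, setting up the conditional independence of $T'$ and $T''$ cleanly, so that $\Expec{B'B'' \given K_n} = \mu_{K_n}\,\mu_{n - K_n}$ is rigorous rather than merely plausible; and second, remembering to invoke the symmetry $q_n(k) = q_n(n - k)$ at each step in order to absorb the $n - K_n$ terms into the corresponding $K_n$ terms.
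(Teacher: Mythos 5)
Your proposal is correct and follows essentially the same route as the paper: the identity $B_2(T_n) = \tfrac{1}{2}(B_2(T') + B_2(T'')) + 1$ from Corollary~\ref{corRootSplitRecursion}, conditioning on $K_n$ together with the conditional independence of the two subtrees and the symmetry of $q_n$ for (i) and (ii), and the law of total variance for (iii). The only cosmetic difference is that the paper sets up two independent sequences $(T'_k)$, $(T''_k)$ with $K_n$ independent of both, which is just a concrete way of realizing the conditional independence you invoke.
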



\begin{proof}
Let $(T'_k)$ and $(T''_k)$ be two independent sequences of Markov branching
trees with root-split distributions $\mathbf{q}$ and such that,
for all~$k\geq 1$, $T'_k$ and $T''_k$ have $k$ leaves, $T'_1$~and $T''_1$ being
the tree that consists only of the root. 
Letting $T = T' \oplus T''$ denote the tree obtained by 
creating a new root and making it
point to the roots of $T'$ and $T''$, we thus have
\[
  T_n \;=\; T'_{K_n} \oplus\, T''_{n - K_n},
\]
where $K_n \sim q_n$ is independent of $(T'_k)$ and $(T''_k)$.
As a result, by Corollary~\ref{corRootSplitRecursion},
\begin{equation} \label{eqProofMarkovBranching}
  B_2(T_n) \;=\; \frac{1}{2}\big(B_2(T'_{K_n}) + B_2(T''_{n - K_n})\big) + 1.
\end{equation}
Taking expectations and using that $K_n \sim n - K_n$ and that 
$K_n$ is independent of $(T'_k)$ and $(T''_k)$, we get
\[
  \Expec{B_2(T_n)} \;=\; \Expec*{\normalsize}{B_2(T'_{K_n})} \;+\; 1,
\]
which, since $\Expec*{\normalsize}{B_2(T'_{K_n}) \given K_n = k} = \mu_k$
for all $k$, is point~(i).

Point (ii) is proved similarly: from Equation~\eqref{eqProofMarkovBranching},
we get
\begin{align*}
  \Expec{B_2(T_n)^2} \;=\;
  &\tfrac{1}{4} \Expec{B_2(T'_{K_n})^2} \;+\;
  \tfrac{1}{4} \Expec{B_2(T''_{n - K_n})^2} \;+\; \\
  &\tfrac{1}{2} \Expec{B_2(T'_{K_n})B_2(T''_{n - K_n})} \;+\;
  \Expec{B_2(T'_{K_n})} \;+\;
  \Expec{B_2(T''_{n - K_n})} \;+\; 1,
\end{align*}
which, by the symmetry of $K_n$ and its independence from $(T'_k)$ and
$(T''_k)$, gives
\[
  \Expec{B_2(T_n)^2} \;=\;
  \tfrac{1}{2} \Expec{B_2(T'_{K_n})^2} \;+\;
  \tfrac{1}{2} \Expec{B_2(T'_{K_n})B_2(T''_{n - K_n})} \;+\;
  2\,\Expec{B_2(T'_{K_n})} \;+\; 1.
\]
Writing these expectations as $\Expec{\Expec{ \;\cdot\, \given K_n}}$
and using the notation from the statement of the theorem, this gives point~(ii).
  
Finally, point~(iii) is obtained by applying the law of total variance to
Equation~\eqref{eqProofMarkovBranching}:
\begin{align*}
  \Var{B_2(T_n)} \;=&\; \tfrac{1}{4}\Var{B_2(T'_{K_n}) + B_2(T''_{n - K_n})}\\[1ex]
  \;=& \; \tfrac{1}{4}\Var{\Expec{B_2(T'_{K_n}) + B_2(T''_{n - K_n})\given K_n}}\\
     &\; +\tfrac{1}{4}\,\Expec{\Var{B_2(T'_{K_n}) + B_2(T''_{n - K_n})\VarGiven K_n}}.
\end{align*}
In this last expression,
\[
  \Expec{B_2(T'_{K_n}) + B_2(T''_{n - K_n})\given K_n} \;=\;
  \mu_{K_n} \;+\; \mu_{n-K_n}\,.
\]
Likewise, since $B_2(T'_{K_n})$ and $B_2(T''_{n - K_n})$ are independent conditional
on $K_n$,
\begin{align*}
  \Var{B_2(T'_{K_n}) + B_2(T''_{n - K_n}) \VarGiven K_n}
   \;&=\;
  \Var{B_2(T'_{K_n})\VarGiven K_n} \;+\;
  \Var{B_2(T''_{n - K_n}) \VarGiven K_n} \\
   \;&=\; v_{K_n} \;+\; v_{n - K_n}, 
\end{align*}
and, taking expectations and using the symmetry
of~$K_n$, we get
\[
  \Expec{\Var{B_2(T'_{K_n}) + B_2(T''_{n - K_n})\VarGiven K_n}}
  \;=\; 2\, \Expec{v_{K_n}} \,.
\]
Putting the pieces together, this yields the required expression for $\Var{B_2(T_n)}$
and finishes the proof.
\end{proof}

\begin{remark}
Equation~\eqref{eqProofMarkovBranching} is known as a
random recursive equation. These can be studied with
the so-called \emph{contraction method}, which often makes it possible to 
characterize (the appropriately rescaled) limit of a random sequence as the
solution of a distributional equation -- see e.g.\
\cite{roesler2001contraction}.
This has been done to study the
limiting distribution of the Sackin and Colless indices under the ERM and PDA
model \cite{Blum2006a}. Doing something similar with $B_2$ might be possible
-- although, as the simulations of Section~\ref{secBiological} show, the
situation will be more complex and there
will be no central limit theorem.
\end{remark}

In the rest of this section, we use the recurrence relations of
Theorem~\ref{thmMarkovBranching} to study the expected value and the variance
of $B_2$ under what are undoubtedly the two most important models of random
trees in mathematical phylogenetics: the ERM\,/\,Yule model and the
PDA\,/\,uniform model.

\begin{definition} \label{defERM}
The ERM model is the Markov branching model whose root-split distributions
are given by
\[
  \forall k\in \Set{1, \ldots, n - 1}, \quad
  q_n(k) \;=\; \frac{1}{n - 1} \, . \qedhere
\]
\end{definition}

What makes the ERM so central to mathematical biology is that it generates
trees that have the same shape as the genealogical tree associated to the
Yule process, i.e.\ the pure-birth process where every individual gives birth
at constant rate~1. This is also the random tree shape associated to the
Kingman coalescent (where every pair of lineages coalesces at rate~1;
\cite{Kingman1982}) and to the genealogy of extant individuals in the Moran
model (where at each step an ordered pair of individuals is sampled uniformly
at random for the first one to be replaced by a copy of the
second~\cite{Moran1958}).  It corresponds to the uniform distribution on the
set of ranked binary trees with $n$ labeled leaves.  This variety of
constructions explains why this model arises in a wide range of biological
applications as well as in many mathematical problems.

\begin{theorem}\label{thmYule}
Let $T_n$ be a tree with $n$ leaves sampled under the ERM\,/\,Yule model.
\begin{mathlist}
\item $\displaystyle \Expec{B_2(T_n)} = \sum_{k = 1}^{n-1} \frac{1}{k}$.
\item $\Var{B_2(T_n)} \,\tendsto{n\to\infty}\; 2 - \pi^2/6$.
\end{mathlist}
\end{theorem}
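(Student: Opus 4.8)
The plan is to derive both parts from the Markov‑branching recurrences of Theorem~\ref{thmMarkovBranching}, specialized to the uniform root‑split distribution $q_n(k) = 1/(n-1)$; throughout I write $H_m = \sum_{j=1}^m 1/j$ (with $H_0 = 0$), so that part~(i) reads $\mu_n = H_{n-1}$. For part~(i) I would substitute the uniform $q_n$ into $\mu_n = \Expec{\mu_{K_n}} + 1$ to get
\[
  (n-1)\,\mu_n \;=\; \sum_{k=1}^{n-1} \mu_k \;+\; (n-1).
\]
Writing the same identity with $n$ replaced by $n-1$ and subtracting eliminates the sum and telescopes to the one‑step recursion $\mu_n = \mu_{n-1} + 1/(n-1)$; together with $\mu_1 = 0$ this gives $\mu_n = \sum_{k=1}^{n-1} 1/k$ at once.

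For part~(ii) the starting point is the variance recurrence $v_n = \tfrac14\Var{\mu_{K_n} + \mu_{n-K_n}} + \tfrac12\Expec{v_{K_n}}$. Since part~(i) makes $\mu_k = H_{k-1}$ explicit, the first term is a deterministic quantity depending only on $n$; call it $a_n := \tfrac14\Var{\mu_{K_n}+\mu_{n-K_n}}$. The second term is a Ces\`aro average, $\Expec{v_{K_n}} = \tfrac{1}{n-1}\sum_{k=1}^{n-1} v_k =: \bar v_{n-1}$, so the recurrence takes the clean form $v_n = a_n + \tfrac12 \bar v_{n-1}$. The problem thus splits into (a) identifying $\lim_n a_n$ and (b) transferring that limit to $v_n$.

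Step~(a) is where the $\pi^2/6$ appears and is the main obstacle. Writing $U_n = K_n/n$, uniformity gives $U_n \Rightarrow U \sim \mathrm{Unif}(0,1)$, while $H_{m-1} = \ln m + \gamma + O(1/m)$ turns $\mu_{K_n}+\mu_{n-K_n}$ into $2\ln n + 2\gamma + \ln U_n + \ln(1-U_n) + o(1)$. As variance is insensitive to the deterministic shift $2\ln n + 2\gamma$, one expects
\[
  a_n \;\tendsto{n\to\infty}\; \tfrac14\,\Var{\ln U + \ln(1-U)}.
\]
The delicate point is justifying this passage to the limit: the approximation $\mu_m \approx \ln m + \gamma$ fails for small $m$, and the boundary values $K_n \in \Set{1, n-1}$ contribute terms of size $O(\ln n)$; these are individually large but carry probability $O(1/n)$, so a uniform‑integrability argument (using that $\ln U + \ln(1-U)$ is square‑integrable) is needed to conclude that the second moments converge. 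Granting this, the limit is pure calculus: with $\Expec{\ln U} = -1$, $\Expec{(\ln U)^2} = 2$, and
\[
  \Expec{\ln U\,\ln(1-U)} \;=\; \int_0^1 \ln u \,\ln(1-u)\,du \;=\; 2 - \tfrac{\pi^2}{6}
\]
(obtained by expanding $\ln(1-u) = -\sum_k u^k/k$, integrating term by term, and recognizing $\sum_k 1/(k+1)^2 = \pi^2/6 - 1$), one finds $\Var{\ln U + \ln(1-U)} = 4 - \pi^2/3$ and hence $a_n \to 1 - \pi^2/12$. A fully rigorous alternative that sidesteps the boundary issue is to evaluate $\Expec{\mu_{K_n}+\mu_{n-K_n}}$ and its second moment exactly via harmonic‑sum identities for $\sum_k H_{k-1}^2$ and $\sum_k H_{k-1}H_{n-k-1}$ and then expand asymptotically; the same $\zeta(2) = \pi^2/6$ emerges.

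For step~(b) I would avoid solving the recurrence explicitly and instead use a limsup/liminf squeeze. First, $v_n$ is bounded: since $a_n$ converges it is bounded by some $A$, and an easy induction on $v_n = a_n + \tfrac12 \bar v_{n-1}$ gives $0 \le v_n \le 2A$. Let $L = \limsup_n v_n$ and $\ell = \liminf_n v_n$, both finite. Because a Ces\`aro average satisfies $\liminf v_n \le \liminf \bar v_{n-1}$ and $\limsup \bar v_{n-1} \le \limsup v_n$, taking $\limsup$ and then $\liminf$ in $v_n = a_n + \tfrac12 \bar v_{n-1}$ and using $a_n \to a_\infty := 1 - \pi^2/12$ yields
\[
  L \le a_\infty + \tfrac12 L \quad\text{and}\quad \ell \ge a_\infty + \tfrac12 \ell,
\]
so $L \le 2a_\infty \le \ell$. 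Since $\ell \le L$ always, all three coincide and $v_n \to 2a_\infty = 2 - \pi^2/6$, as claimed. The only genuinely non‑routine ingredient in the whole argument is the control of the boundary contributions in step~(a); everything else is either the telescoping of part~(i) or the soft Ces\`aro squeeze of step~(b).
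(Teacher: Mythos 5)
Your proof is correct, and for part (ii) it takes a genuinely different route from the paper's. For part (i) the two essentially coincide: the paper verifies that $H_{n-1}$ solves the recurrence of Theorem~\ref{thmMarkovBranching} (and also offers a probabilistic derivation via Corollary~\ref{corGraftingCherry} and exchangeability of the leaf probabilities), while you derive the same one-step recursion $\mu_n = \mu_{n-1} + 1/(n-1)$ by differencing -- same content. For part (ii) the paper computes $\alpha_{n-1} = \tfrac12\bigl(2 - H^{(2)}_{n-1} - H_{n-1}/(n-1)\bigr)$ exactly from classical harmonic-sum identities, then converts the recurrence to first order and solves it explicitly in terms of double factorials, extracting the limit from the asymptotics of the summands. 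You replace both steps: the limit of the variance term is identified probabilistically as $\tfrac14\Var{\ln U + \ln(1-U)}$ with $U$ uniform, and the transfer to $v_n$ is done by a $\limsup$/$\liminf$ squeeze exploiting that Ces\`aro averaging cannot spread the limit points. Your step (b) is arguably cleaner than the paper's explicit solution -- the paper itself observes that \emph{if} $v_n$ converges then the limit must be $2\lim_n\alpha_n$, but then resorts to the closed form to establish convergence, whereas your squeeze settles existence directly. The trade-off is in step (a): the paper's exact formula makes the limit $1 - \pi^2/12$ immediate, whereas your weak-convergence argument needs the uniform-integrability/boundary control you flag (the $O(1)$ error in $H_{m-1}\approx\ln m+\gamma$ for small $m$ and the endpoint terms of size $(\ln n)^2/n$ must be shown to vanish in $L^2$); this is routine but not free, and your suggested fallback to the exact identities is precisely what the paper does. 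Both roads lead to the same $\zeta(2)$.
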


\begin{proof}
One way to prove point~(i) is to show that $\mu_n = \sum_{k = 1}^{n - 1}1/k$ 
is indeed the solution of the recurrence for the expected value of
$B_2$ given in Theorem~\ref{thmMarkovBranching}.
  
However, a perhaps more intuitive way to obtain $\Expec{B_2(T_n)}$
it is to note that, in the Yule model,
$T_n$ is obtained by grafting a cherry on a leaf $L \in T_{n - 1}$, sampled
uniformly and independently of the shape of $T_{n - 1}$. Let 
$P_\ell$ denote the  probability of reaching a fixed leaf $\ell \in T_{n - 1}$
(note that this is a random variable). Then, by Corollary~\ref{corGraftingCherry}, 
$B_2(T_n) = B_2(T_{n - 1}) + P_L$ and therefore
\[
  \Expec{B_2(T_n)} \;=\; \Expec{B_2(T_{n - 1})} \;+\; \Expec{P_L}\,.
\]
Since $L$ is independent of $(P_\ell)$, $\Expec{P_L} = \Expec{p_L}$,
where $p_\ell = \Expec{P_\ell} = 1 / (n - 1)$, by exchangeability.
As a result, 
\[
  \Expec{B_2(T_n)} \;=\; \Expec{B_2(T_{n - 1})} \;+\; \frac{1}{n - 1}\,, 
\]
and we use that $\Expec{B_2(T_1)} = 0$ to conclude.

Let us now turn to point~(ii). Perhaps surprisingly given the simplicity of
the derivation of the expected value of $B_2(T_n)$, we could not find a way
to obtain the limit of its variance other than solving the recurrence relations of
Theorem~\ref{thmMarkovBranching} explicitly.

Let $H_n = \sum_{k = 1}^n 1/k$ denote the harmonic numbers, and
$H_n^{(m)} = \sum_{k = 1}^n 1/k^m$ the generalized harmonic numbers of
order~$m$. Letting $v_n = \Var{B_2(T_n)}$, 
point~(iii) of Theorem~\ref{thmMarkovBranching} can be written
\begin{equation}\label{eqProofYuleRec}
  v_n \;=\; \alpha_{n - 1} \;+\; \frac{1}{2(n - 1)}\,\sum_{k = 1}^{n - 1} v_k \,, 
\end{equation}
where $\alpha_{n - 1} = \frac{1}{4}\,\Var{H_{K_n - 1} + H_{n - 1 - K_n}}$, with
$K_n \sim \mathrm{Uniform}(\Set{1, \ldots, n - 1})$. Note that this already
shows that if $v_n$ has a finite limit $\ell$, then
$\ell = 2 \lim_n \alpha_n$. Indeed, in that case
$\frac{1}{n - 1}\sum_{k = 1}^{n - 1} v_k \to \ell$, by Cesàro's lemma, and
therefore $\ell$ satisfies $\ell = \lim_n \alpha_n + \ell / 2$.

Let us begin by computing $\alpha_{n - 1}$ explicitly. First,
since $K_n \sim n - K_n$, 
\begin{align} \label{eqProofYuleAlpha}
  &\;\;\Var{H_{K_n - 1} + H_{n - K_n - 1}} \nonumber \\
  \;=&\quad
  2\mleft(\Expec*{\normalsize}{H_{K_n - 1}^2} \,+\,
  \Expec*{\normalsize}{H_{K_n - 1}\,H_{n - 1 - K_n} } \,-\,
  2\,\Expec*{\normalsize}{H_{K_n - 1}}^2 \mright).
\end{align}
Moreover, the following identities for sums of harmonic numbers are well-known;
see e.g.\ Section~1.2.7 of \cite{Knuth1998}:
\begin{itemize}
  \item $\Expec*{\normalsize}{H_{K_n - 1}^2} = H_{n - 1} H_{n - 2}
    + 2 - 2 H_{n - 1}$.
    \hfill \cite[\S 1.2.7, Eq.~(8)]{Knuth1998}
  \item $\Expec{H_{K_n - 1}}^2 = (H_{n - 1} - 1)^2$.
    \hfill \cite[\S 1.2.7, Exercise~15]{Knuth1998}
  \item $\Expec{H_{K_n - 1}H_{n - 1 - K_n}} = 1 - H_{n - 1}^{(2)} +
    (H_{n - 1} - 1)^2$. \hfill \cite[Theorem~1]{Wei2013}
\end{itemize}
Plugging these in~\eqref{eqProofYuleAlpha}, after some simplifications
we get
\begin{equation} \label{eqProofYuleAlphaExplicit}
  \alpha_{n - 1} \;=\; \frac{1}{2} \mleft(2 \;-\; H^{(2)}_{n - 1} \;-\;
  \frac{H_{n - 1}}{n - 1}\mright)\,.
\end{equation}

Let us now solve the recurrence~\eqref{eqProofYuleRec} in order to get
an explicit expression for $v_n$. We start by rearranging the terms in
order to get a first-order recurrence:
\begin{align*}
  v_{n + 1} \;&=\; \alpha_{n} \;+\; \frac{1}{2n}\,\sum_{k = 1}^{n} v_k \\
      \;&=\; \alpha_{n} \;+\; \frac{v_n}{2n} \;+\;
      \frac{n - 1}{n} \cdot\frac{1}{2(n - 1)} \sum_{k = 1}^{n - 1}v_k \\
      \;&=\; \alpha_{n} \;+\; \frac{v_n}{2n} \;+\;
      \frac{n - 1}{n} \big(v_n - \alpha_{n - 1}\big) \,.
\end{align*}
As a result, letting $\beta_n = \alpha_n - \alpha_{n-1} (n - 1) / n$, we see
that $v_n$ is the solution of
\[
  v_{n+1}  \;=\; \frac{2n - 1}{2n}\, v_n \;+\; \beta_n\,, 
\]
with the initial condition $v_1 = 0$. Solving this first-order recurrence
then yields
\begin{equation} \label{eqProofYuleFinal}
  v_n \;=\; \frac{(2n - 1)!!}{(2n)!!}\, \sum_{k = 2}^{n - 1}
  \beta_k \frac{(2k)!!}{(2k - 1)!!}\,,
\end{equation}
where $n!!$ denotes the double factorial.
Finally, to see that $v_n \to c = 2 - \pi^2/6$, note that
$\beta_k \sim c / (2 k)$ and recall that ${(2k)!!}/{(2k - 1)!!} \sim \sqrt{\pi k}$.
The summands in the expression of $v_n$ therefore
are asymptotically equivalent to $(c \sqrt{\pi}) / 2\sqrt{k}) $, and the
result follows from a standard application of the integral test for convergence,
since $\int\! \frac{1}{\sqrt{x}}\,dx = 2 \sqrt{x}$.
\end{proof}

\begin{definition} \label{defPDA}
The PDA model is the Markov branching model whose root-split distributions
are given by
\[
  \forall k\in \Set{1, \ldots, n - 1}, \quad
  q_n(k) \;=\; \frac{1}{2} \binom{n}{k} \frac{t_k\, t_{n - k}}{t_n} \,,
\]
where $t_n = (2n - 3)!!$\, is the number of 
rooted binary trees with $n$ labeled leaves.
\end{definition}

What makes the PDA model stand out is that it generates trees that are uniformly
distributed on the set of rooted binary trees with $n$ labeled leaves. For this
reason, it is sometimes referred to as the ``uniform model'' and, in
phylogenetics, it is the standard alternative to the Yule model when in need of
a null model.

\begin{theorem}\label{thmUniform}
Let $T_n$\! be a tree with $n$ leaves sampled under the PDA model or,
equivalently, uniformly on the set of rooted binary trees with $n$ labeled leaves.
\begin{mathlist}
\item $\Expec{B_2(T_n)} = 3\, (n - 1) / (n + 1)$.
\item $\Var{B_2(T_n)} \,\tendsto{n\to\infty}\;4/9$.
\end{mathlist}
\end{theorem}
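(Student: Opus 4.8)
The plan is to combine the Markov-branching recurrences of Theorem~\ref{thmMarkovBranching} with generating functions adapted to the PDA root-split weights $q_n(k)=\tfrac12\binom{n}{k}t_k t_{n-k}/t_n$ of Definition~\ref{defPDA}. The central object is the exponential generating function $b(x)=\sum_{n\ge1}\tfrac{t_n}{n!}x^n$ of the numbers $t_n=(2n-3)!!$; since a rooted binary tree is either a leaf or an unordered pair of such trees, $b$ satisfies $b=x+\tfrac12 b^2$, so $b(x)=1-\sqrt{1-2x}$. The point of weighting by $t_n/n!$ is that $q_n(k)\,\tfrac{t_n}{n!}=\tfrac12\,\tfrac{t_k}{k!}\,\tfrac{t_{n-k}}{(n-k)!}$, so every split-expectation $\Expec{f_{K_n}}$ becomes $\tfrac12[x^n]$ of a product with $b$. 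I will also use $t_{n+1}/t_n=2n-1$ and $[x^m]\sqrt{1-2x}=-t_m/m!$ throughout.

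For point~(i), set $\hat M(x)=\sum_{n\ge1}\tfrac{\mu_n t_n}{n!}x^n$. The recurrence $\mu_n=\Expec{\mu_{K_n}}+1$ together with $\mu_1=0$ and $\sum_{n\ge2}\tfrac{t_n}{n!}x^n=b-x$ translates into $\hat M=(b-x)+\tfrac12 b\,\hat M$, hence $\hat M=\tfrac{2(b-x)}{2-b}$. Using $2-b=1+\sqrt{1-2x}$ and rationalising gives the closed form $\hat M(x)=\tfrac{2-3x}{x}-\tfrac{(2-x)\sqrt{1-2x}}{x}$. The first summand only affects the coefficients of $x^{-1}$ and $x^{0}$, so for $n\ge1$ I extract $[x^n]$ from the second summand, obtaining $\mu_n=\tfrac{n!}{t_n}[x^n]\hat M=-1+\tfrac{2t_{n+1}}{(n+1)t_n}=-1+\tfrac{2(2n-1)}{n+1}=\tfrac{3(n-1)}{n+1}$, which is point~(i).

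For point~(ii) I use the same machinery on $\hat V(x)=\sum_{n\ge1}\tfrac{v_n t_n}{n!}x^n$. Writing $\mu_k=3-\tfrac{6}{k+1}$, the quantity $\gamma_n:=\Var{\mu_{K_n}+\mu_{n-K_n}}$ appearing in recurrence (iii) involves the weights $\tfrac1{k+1}$ and $\tfrac1{(k+1)^2}$; these are carried by the successive primitives $c(x)=\sum_{n\ge1}\tfrac{t_n}{(n+1)!}x^n=\tfrac1x\int_0^x b$ and $d(x)=\sum_{n\ge1}\tfrac{t_n}{(n+1)^2 n!}x^n=\tfrac1x\int_0^x c$. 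Recurrence (iii), $v_n=\tfrac14\gamma_n+\tfrac12\Expec{v_{K_n}}$, translates into $\hat V=\tfrac14\hat\Gamma+\tfrac14 b\,\hat V$, i.e.\ $\hat V=\hat\Gamma/(4-b)$ with $\hat\Gamma=\sum_n\tfrac{\gamma_n t_n}{n!}x^n$. Expanding $\gamma_n=36\big(2\Expec{\tfrac1{(K_n+1)^2}}+2\Expec{\tfrac1{(K_n+1)(n-K_n+1)}}-\Expec{W_n}^2\big)$, where $W_n=\tfrac1{K_n+1}+\tfrac1{n-K_n+1}$ and $\Expec{W_n}=\tfrac13+\tfrac2{n+1}$ follows from point~(i), and translating each piece through $b,c,d$, yields the closed form $\hat\Gamma=36\big(db+c^2-\tfrac19 b-\tfrac43 c-4d+\tfrac{16}9 x\big)$, in which the term $\tfrac{16}9 x$ removes the spurious $n=1$ contribution produced by the formula for $\Expec{W_n}^2$.

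Finally I read off $\lim_n v_n$ by singularity analysis at $x=\tfrac12$, where $b=1$ and $\sqrt{1-2x}=0$. Because $b=1-\sqrt{1-2x}$ while $c$ and $d$ are singular only at order $(1-2x)^{3/2}$, near $x=\tfrac12$ one has $\hat\Gamma(x)=\Gamma_0+\Gamma_1\sqrt{1-2x}+\cdots$ with $\Gamma_0=\hat\Gamma(\tfrac12)$ and $\Gamma_1$ the coefficient of $\sqrt{1-2x}$, while $\tfrac1{4-b}=\tfrac13-\tfrac19\sqrt{1-2x}+\cdots$. Since $[x^n]\sqrt{1-2x}=-t_n/n!$, the transfer theorem gives $v_n=\tfrac{n!}{t_n}[x^n]\hat V\to\tfrac19\Gamma_0-\tfrac13\Gamma_1$. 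Both $\Gamma_0=36(-3d(\tfrac12)+\tfrac49)$ and $\Gamma_1=36(-d(\tfrac12)+\tfrac19)$ contain the transcendental value $d(\tfrac12)=-\tfrac79+\tfrac43\ln2$, but its coefficient in $\tfrac19\Gamma_0-\tfrac13\Gamma_1$ cancels, leaving the clean value $\tfrac{16}9-\tfrac43=\tfrac49$. The main obstacle is exactly this last stage: obtaining the singular expansions of $b,c,d$ correctly, retaining the low-order correction $\tfrac{16}9 x$ (which, after division by $4-b$, does feed into the singular part and hence into the limit), and checking that the $\ln2$ terms cancel.
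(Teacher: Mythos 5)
Your proposal is correct — I checked the generating-function identities, the value $d(\tfrac12)=-\tfrac79+\tfrac43\ln 2$, and the final cancellation $\tfrac19\Gamma_0-\tfrac13\Gamma_1=\tfrac49$ independently, and they all hold — but it takes a genuinely different route from the paper, especially for point~(ii). For point~(i) the paper does not solve the recurrence at all: it simply verifies that $\mu_n=3(n-1)/(n+1)$ satisfies $\mu_n=1+\sum_k\mu_k q_n(k)$ via the identity $(1+\mu_k)\,q_n(k)=2\,\tfrac{2n-1}{n+1}\,q_{n+1}(k+1)$, which collapses the sum to $1-q_{n+1}(1)$; your derivation of $\hat M=2(b-x)/(2-b)$ reaches the same formula without having to guess it first. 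For point~(ii) the divergence is more substantial: the paper bypasses the variance recurrence entirely and instead invokes the local weak limit of the uniform binary tree, namely Kesten's size-biased critical binary Galton--Watson tree obtained by grafting independent critical GW trees along an infinite spine. Proposition~\ref{propGrafting} then gives $B_2(\hat T)=\sum_{k\ge1}2^{-k}B_2(\tau_k)$ with independent summands, so the limit variance is $\sum_{k\ge1}4^{-k}\cdot\tfrac43=\tfrac49$ by Proposition~\ref{propVarBinaryGW}. That argument is shorter and more conceptual, but it silently relies on the convergence $\Var{B_2(T_n)}\to\Var{B_2(\hat T)}$, i.e.\ on upgrading local convergence to convergence of second moments (a uniform-integrability point the paper does not spell out). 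Your singularity-analysis computation is heavier — one must track the corrective term $\tfrac{16}{9}x$, keep the $O(s^3)$ and $O(s^5)$ singular parts of $c$ and $d$ out of the linear coefficient, and check that $4-b$ does not vanish on a $\Delta$-domain — but it is self-contained, avoids the moment-convergence issue, and would in principle yield an exact formula or full asymptotic expansion for $\Var{B_2(T_n)}$ rather than just its limit.
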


\begin{proof}
By Theorem~\ref{thmMarkovBranching}, to prove~(i) it suffices to show
that $\mu_n = 3\, (n - 1) / (n + 1)$ is indeed the solution of
$\mu_1 = 0$ and
\begin{equation} \label{eqProofPDA}
  \mu_n \;=\; 1 + \sum_{k = 1}^{n - 1} \mu_k \,q_n(k)\,, 
\end{equation}
where
\[
  q_n(k) \;=\; \frac{1}{2} \binom{n}{k}
  \frac{(2k - 3)!!\,(2(n - k) - 3)!!}{(2n - 3)!!} \,.
\]
Assume that $\mu_k = 3\, (k - 1) / (k + 1)$ for all $k \in \Set{1, \ldots n - 1}$.
Note that \revision{this can be written as} $1 + \mu_k = 2(2k - 1)/(k+1)$, so that
\[
  \big(1 \;+\; \mu_k\big)\, q_n(k) \;=\;
  2\;\frac{2n - 1}{n + 1}\, q_{n + 1}(k + 1)\, .
\]
Plugging this in Equation~\eqref{eqProofPDA}, we get
\begin{align*}
  \mu_n \;=\; \sum_{k = 1}^{n - 1} \big(1 + \mu_k\big)\,q_n(k)
   &=\; 2\,\frac{2n - 1}{n + 1}\,\sum_{k = 1}^{n - 1} q_{n + 1}(k + 1)\, \\
   &=\; 2\,\frac{2n - 1}{n + 1}\,\big(1 - q_{n + 1}(1)\big)
\end{align*}
which, since $q_{n + 1}(1) = \frac{n + 1}{2(2n - 1)}$, yields
$\mu_n = 3(n - 1) / (n + 1)$.

To prove point~(ii), recall that, as $n \to \infty$, the uniform rooted binary
tree with $n$ labeled leaves converges in distribution to the size-biased
Galton--Watson tree~$\hat{T}$ obtained by grafting independent
critical binary Galton--Watson trees on each leaf of the infinite caterpillar,
as illustrated in Figure~\ref{figSizeBiased}.
See~\cite{Janson2012} for a complete introduction to the subject.

\begin{figure}[h!]
  \centering
  \captionsetup{width=0.85\linewidth}
  \includegraphics[width=0.45\linewidth]{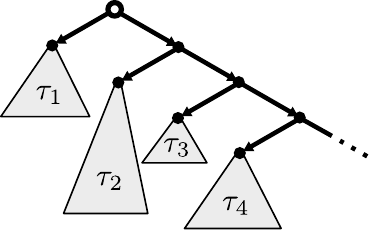}
  \caption{Construction of the size-biased Galton--Watson tree (for
  the critical binary Galton--Watson tree).
  Let the root be the end of a
  one-way infinite path known as the \emph{spine} of the tree,
  and let every vertex from the spine point to the root of an independent
  Galton--Watson tree with $2\,\mathrm{Bernoulli}(2)$ offspring distribution.} 
\label{figSizeBiased}
\end{figure}

Letting $\tau_k$ denote the Galton--Watson tree grafted on the leaf at depth~$k$
of the infinite caterpillar, by Proposition~\ref{propGrafting} we have
\[
  B_2(\hat{T}) \;=\; \sum_{k \geq 1} 2^{-k} B_2(\tau_k) \,.
\]
Because the variables $B_2(\tau_k)$ are independent,
this gives
\[
  \Var[\normalsize]{B_2(\hat{T})} \;=\; \sum_{k\geq 1} 4^{-k} \Var{B_2(\tau_k)}
\]
and, since we have seen in Proposition~\ref{propVarBinaryGW} that
$\Var{B_2(\tau_k)} = 4/3$,
\[
  \Var[\normalsize]{B_2(\hat{T})} = \frac{4}{9}\,.
\]
Since $\Var{B_2(T_n)} \to \Var[\normalsize]{B_2(\hat{T})}$, this concludes the
proof.
\end{proof}

\begin{remark} \label{remAltProofGW} 
Recalling that $T_n$ has the same distribution as the Galton--Watson
tree with $2\,\mathrm{Bernoulli}(p)$ offspring distribution conditioned
on having $n$ leaves, and that the probability that the 
$2\,\mathrm{Bernoulli}(p)$-Galton--Watson tree has $n$ leaves is
\[
  \varpi_n \;=\; 2^{n-1} (2n - 3)!!\, p^{n - 1} (1 - p)^{n} / n! \, , 
\]
point~(i) from Theorem~\ref{thmUniform} can be used to give an alternative
derivation of the expected value of $B_2$ in binary Galton--Watson trees
(which we already have as a special case of Theorem~\ref{thmExpecGW}).
Indeed, it is readily checked that
\[
  \sum_{n \geq 1} 3\, \frac{n - 1}{n + 1} \varpi_n \;=\; \frac{p}{1 - p} \,. 
  \qedhere
\]
\end{remark}

\vspace{1ex} 

\section{Biological relevance: empirical study} \label{secStats}
 
\subsection{Design of the study}

We follow the approach originally developed by Kirkpatrick and Slatkin
(henceforth K\&S)
in~\cite{Kirkpatrick1993} and later extended
by Agapow and Purvis (A\&P) in~\cite{Agapow2002}.
The idea is to compare how good are various balance indices
at distinguishing between trees of different origins.
For this, we pick a null model to generate random trees and
use it to build empirical confidence intervals for each balance index.  We then
consider sets of trees that were not generated by this null model and, for
each balance index, compute the percentage of those trees whose balance
index  does not fall in the confidence interval obtained under the null model.
This percentage is used as a direct measure of the ``power'' of the balance
index: the higher it is, the more efficient the balance index was at
distinguishing the trees of the test set from those generated by the null
model.

Although the general strategies are the same, our analysis differs
from those of K\&S and A\&P in three important respects:
\begin{enumerate}
  \item We use several, varied null models.
  \item We use real-world phylogenies for our test sets.
  \item We consider a wider range of tree sizes.
\end{enumerate}

\paragraph{Null models:}

Both K\&S and A\&P use the ERM model as their null model. This is standard,
but debatable: indeed, this
model is well-known to produce trees that are far more balanced than real-world
phylogenies~\cite{Blum2006} (in fact, A\&P acknowledge this and
consider one-sided confidence intervals as a result).  Thus,
both studies focus on rejecting a single,
relatively unrealistic null hypothesis. 

To address this issue, we consider 5 different null models which,
together, cover a large variety of tree shapes: the ERM model; the PDA model;
the Aldous branching model (AB); and two versions of a
multiplicative fitness landscape
birth process (MFL). In this last model, each lineage has it own, evolving
birth rate. The birth rates do not spontaneously change over time, but when
a birth occurs the birth rate of each of the two newly formed
lineages is inherited from their mother and then multiplied by $(1+s)$ with
probability $p$, independently of each other and of everything else. 
We consider two variants MFL1 and MFL2 of this model, where the parameters are 
$s = 0.25$ and $p = 0.5$ for MFL1, and $s = 0.5$ and $p = 0.5$ for MFL2.
The choice of these parameters is somewhat arbitrary:
our goal is mainly to try less conventional null models. We did check that the
trees generated by these models seem biologically relevant;
in fact for the range of the number of leaves considered, they produce trees
that are either slightly less (MFL1) or slightly more (MFL2) balanced than the
trees produced by the AB model. However, in order not to risk biasing our
study, we did not try to optimize the parameters $s$ and $p$ based on a specific
balance index.

For each null model and each value of the number of leaves,
two-sided (symmetric) 95\% confidence intervals were built empirically using
$10^5$ replicates. The code for doing so and its output can be downloaded
from~\cite{ZENODO}.

\paragraph{Test sets:}

The test sets used by A\&P consist of trees generated by various ad hoc
models of random trees. This is ideal to isolate the effect of a specific
biological mechanism
on the performance of the balance indices. However, this also
leaves the biological relevance of the trees open to debate.

Since we are more interested in assessing the potential relevance of
various balance indices in real-world applications than in precisely
characterizing the situations in which each of them should be used,
we use real phylogenetic trees to constitute our test sets.
Conceptually, these can be seen as samples from ``black box''
models of random trees. Thus, instead of
having models whose inner workings we understand but whose relevance
is questionable, we have  models that we know nothing about
but whose relevance is undeniable.

In total, we used trees from 4 databases: 4378 trees from
TreeBASE~\cite{TreeBASE}; 14509 trees from OrthoMaM~\cite{OrthoMaM}; 77843
trees from PhylomeDB~\cite{PhylomeDB}; and 85581 trees from
HOGENOM~\cite{HOGENOM}. More details on how these trees were obtained
can be found in~\cite{ZENODO}, where the complete list of trees that we used
can also be downloaded in Newick format.

\paragraph{Range of the number of leaves:}

K\&S consider trees with up to 50 leaves, and A\&P with up to 64 leaves. While
these were considered to be large trees at the time, this is not
so much the case today. In fact, 12\% of the phylogenetic trees that we
collected have between 50 and 100 leaves; and 20\% have more than
100 leaves. Moreover, larger trees are bound to become more and more common as
sequencing data and reconstitution methods improve.

In order to see how the size of the trees affects the performance of each
balance index, we grouped the trees of each dataset in categories based on
their number of leaves: very small trees {($6\leq n \leq 12$)}; small trees
{($12 \leq n \leq 24$)}; intermediate trees {($25\leq n \leq 50$)}; large trees
($50 \leq n \leq 100$); and very large trees {($n > 100$)}. 
Note that these size categories are not related to the estimation of the
confidence intervals (which is done for every possible value of $n$):
the point of these categories is to have enough trees
of similar sizes to compute a ``by-size'' average of the proportion of trees
rejected by each balance index.  Also note that the slight overlap between some
of the categories is not a problem (these categories can be thought of as
playing the role of windows in a moving average).

In the case of the OrthoMaM database, only the \emph{large} and \emph{very
large} categories were considered because the other categories did not contain
enough trees to get a reliable estimate of the proportion of trees rejected by
each balance index. All categories that were included in our
analysis contain at least 500 trees -- usually several thousands. Our
estimation of the size-specific proportions of trees rejected by the
balance indices should therefore be reasonably reliable. 

\paragraph{Balance indices:}

The set of balance indices that we compare is very similar to that used by
K\&S and by A\&P, as well as other studies that compare balance
indices \cite{Maia2004,Matsen2006,Hayati2019}. We consider: the Colless index;
the Sackin index; the number of cherries, as suggested in~\cite{McKenzie2000};
$\sigma_N^2$, the variance of the depths of the leaves;
the $B_1$ index; and the $B_2$ index. Let us briefly recall the
definition of the balance indices that we have not yet mentioned in this paper.

The variance of the depths of the leaves was originally suggested as a
measure of phylogenetic balance by Sackin in~\cite{Sackin1972}, but formally
introduced and first used by K\&S in~\cite{Kirkpatrick1993}. It is defined as
\begin{equation} \label{eqSigmaN}
  \sigma_N^2(T) \;=\; \frac{1}{n} \sum_{\ell \in L}(\delta_\ell - \bar{\delta}(T))^2,  
\end{equation}
where the sum runs over the leaves of the tree; $n$ is the number of leaves;
$\delta_\ell$ the depth of leaf~$\ell$; 
and $\bar{\delta}(T) = \frac{1}{n}\sum_\ell \delta_\ell$ the average
leaf depth. Note that $\bar{\delta}(T)$ is a rescaled version of the
Sackin index, and that various rescaled versions are
frequently used instead of
\revision{$\mathrm{Sackin}(T) = \sum_\ell \delta_\ell$}.
Since we estimate our confidence intervals for each value of $n$, such scalings
by a function of $n$ are irrelevant.

The $B_1$ index was introduced by Shao and Sokal in~\cite{Shao1990} and is
defined as
\begin{equation} \label{eqB1}
  B_1(T) \;=\; \sum_{i \in I^*}
  \big(\max\Set{\delta_\ell \suchthat \ell \text{ is subtended by } i}\big)^{-1}, 
\end{equation}
where the sum runs over internal vertices, excluding the root.

\paragraph{Summary:}

Altogether, this gives us 85
(null model; test set; size category)
scenarios under which to compare 6 balance indices.
The complete list of null models, test sets, size categories and
balance indices that we use in our analysis is summarized in
Table~\ref{tabSummaryDesign}.
For each null model, we have the confidence interval of each balance index;
and for each test set and each size category, we have
at least 500 trees. We estimate
the proportion of trees rejected by each balance index in each specific
scenario and use it as a direct measure of the ``statistical power'' of
that index in that scenario. The results are presented in
Section~\ref{secResults}.

\vspace{2ex} 

\begin{table}[h!]
\centering
\captionsetup{width=0.87\linewidth}
\begin{tabular}{llll}
\toprule
Null models    & Test sets                 & Range of $n$   & Balance indices     \\
\midrule
ERM            & TreeBASE \cite{TreeBASE}  & $n\in$  6--12  & Sackin index \cite{Shao1990} \\
(Yule model)   & 4378 trees                &                & see Eq.~\eqref{eqSackin} here\\
               &                           & $n\in$ 12--24  &                     \\
PDA            & OrthoMaM \cite{OrthoMaM}  &                & Colless index \cite{Colless1982} \\
(uniform model)& 14509 trees               & $n\in$  25--50 & see Eq.~\eqref{eqColless}\\
               &                           &                &                     \\
AB             & PhylomeDB \cite{PhylomeDB}& $n\in$  50--100& \# of cherries \cite{McKenzie2000} \\
($\beta$-splitting & 77843 trees           &                &                     \\
with $\beta = -1$) &                       & $n > 100$     & $\sigma_N^2$ index \cite{Sackin1972,Kirkpatrick1993} \\
               &  HOGENOM \cite{HOGENOM}   &               & see Eq. \eqref{eqB1}     \\
MFL1           &  85581 trees              &               &                      \\
(see main text)&                           &               & $B_1$ index \cite{Shao1990} \\
$p=0.5, s=0.25$&                           &               & see Eq. \eqref{eqB1}   \\
               &                           &               &                      \\
MFL2           &                           &               & $B_2$ index \cite{Shao1990} \\
$p=0.5, s=0.5$ &                           &               & see Def.~\ref{defB2}  \\
\bottomrule
\end{tabular}
\caption{Summary of the null models, test sets, size categories and
balance indices used in this study. Note that the columns are independent (that
is, there is no correspondence between the lines of different columns).}
\label{tabSummaryDesign}
\end{table}

In addition to this, we also test whether some pairs of statistics
work better than others. The motivation for this is that, although
biologists often restrict themselves to the Colless and Sackin indices,
it is well documented that these indices are extremely correlated -- both under
theoretical models and in practice
(see e.g.~\cite{Blum2005} as well as Figure~\ref{figConfidenceRegions} below).
As a result, to some extent they contain the same information and it is unclear
whether using them jointly provides a real advantage over using only one of them
with another balance index such as $B_2$.

To assess this, we estimate the statistical power of each
pair of balance indices. The methodology for doing this is exactly the same as
when testing the power of a single balance index -- except that
this time we need 2D confidence regions to decide whether to keep or to reject 
each tree. Because there is no standard way to build such 2D confidence regions
for non-gaussian data (in particular when the joint distribution of
the data is not symmetric, as is the case here; see
Figure~\ref{figConfidenceRegions}),
we had to choose one such method somewhat arbitrarily.
We chose to decompose the observations into convex layers and
pick the largest layer that contains less than 95\% of the observations.
Compared to fitting a density and using its level sets, this
method has the advantage of being less computationally intensive and more
robust when used on data that take discrete values (when fitting a density,
e.g.\ with a kernel density estimation, the smoothing can unpredictably impact
the shape of the resulting level sets).

The convex layers of a set of points
are the nested convex polygons obtained through the following procedure: let
$S_0$ be the complete set of points and $H_0$ be the vertices of its
convex hull. Let then $S_{i + 1} = S_{i} \setminus H_i$ and $H_{i + 1}$ be
the vertices of the convex hull of $S_{i + 1}$, and iterate until
$S_{i+1}$ is empty. This construction is illustrated in
Figure~\ref{figConvexLayers}. See e.g.~\cite{Chazelle1985} for more on 2D
convex layers and how to compute them.

\enlargethispage{3ex} 
Note that because a null model can generate the same tree several times,
and also because different trees can have the same balance indices,
the points that we consider are not necessarily distinct; and this relevant
information should be taken into account.
We thus treat the set of sampled points
as a multiset. When several points are superposed (and therefore
correspond to the same vertex of a convex hull $H_i$), we only remove one of
them from~$S_i$. Our convex layers
can therefore overlap at their vertices, and
each point that is present $k$ times in the data is the vertex of $k$
convex layers.

\pagebreak 

\begin{figure}[h!]
  \centering
  \captionsetup{width=0.95\linewidth}
  \includegraphics[width=0.95\linewidth]{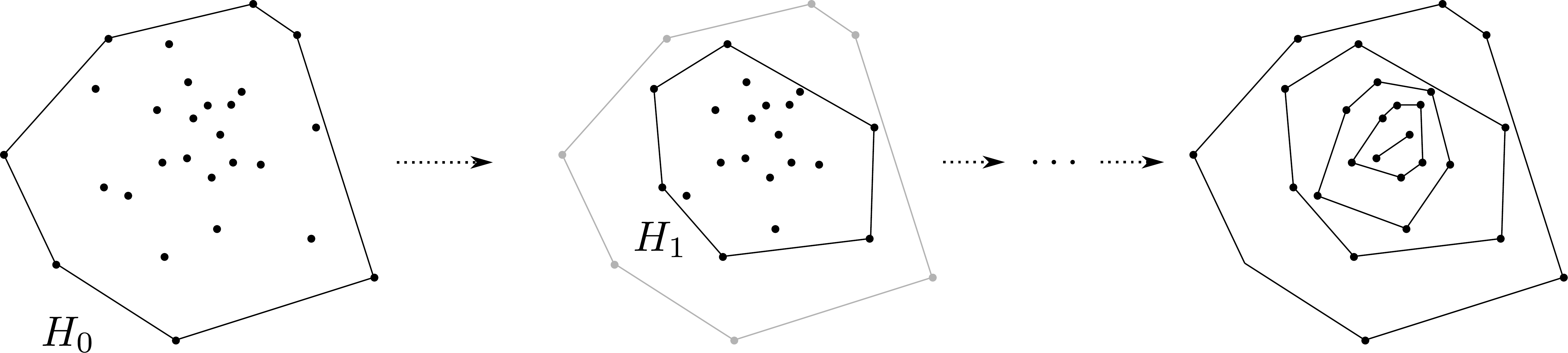}
  \caption{Example of construction of the convex layers decomposition of
  a set of points (here the innermost convex layer is a degenerate polygon with
  two vertices).} 
\label{figConvexLayers}
\end{figure}

Finally, since we stop the construction as soon as a convex layer contains
less than 95\% of the points, our confidence regions do not contain
exactly 95\% of the points generated by the null model. However, because of the
large ($=10^5$) number of points that we use, they always contain between~95\%
and 94.5\% of the points.  Examples of the joint distributions of some balance
statistics and of the corresponding 95\% confidence regions are given in
Figure~\ref{figConfidenceRegions}.
As before, the code used to compute the confidence regions and its output can 
be found in~\cite{ZENODO}.

\begin{figure}[h!]
  \centering
  \captionsetup{width=\linewidth}
  \includegraphics[width=\linewidth]{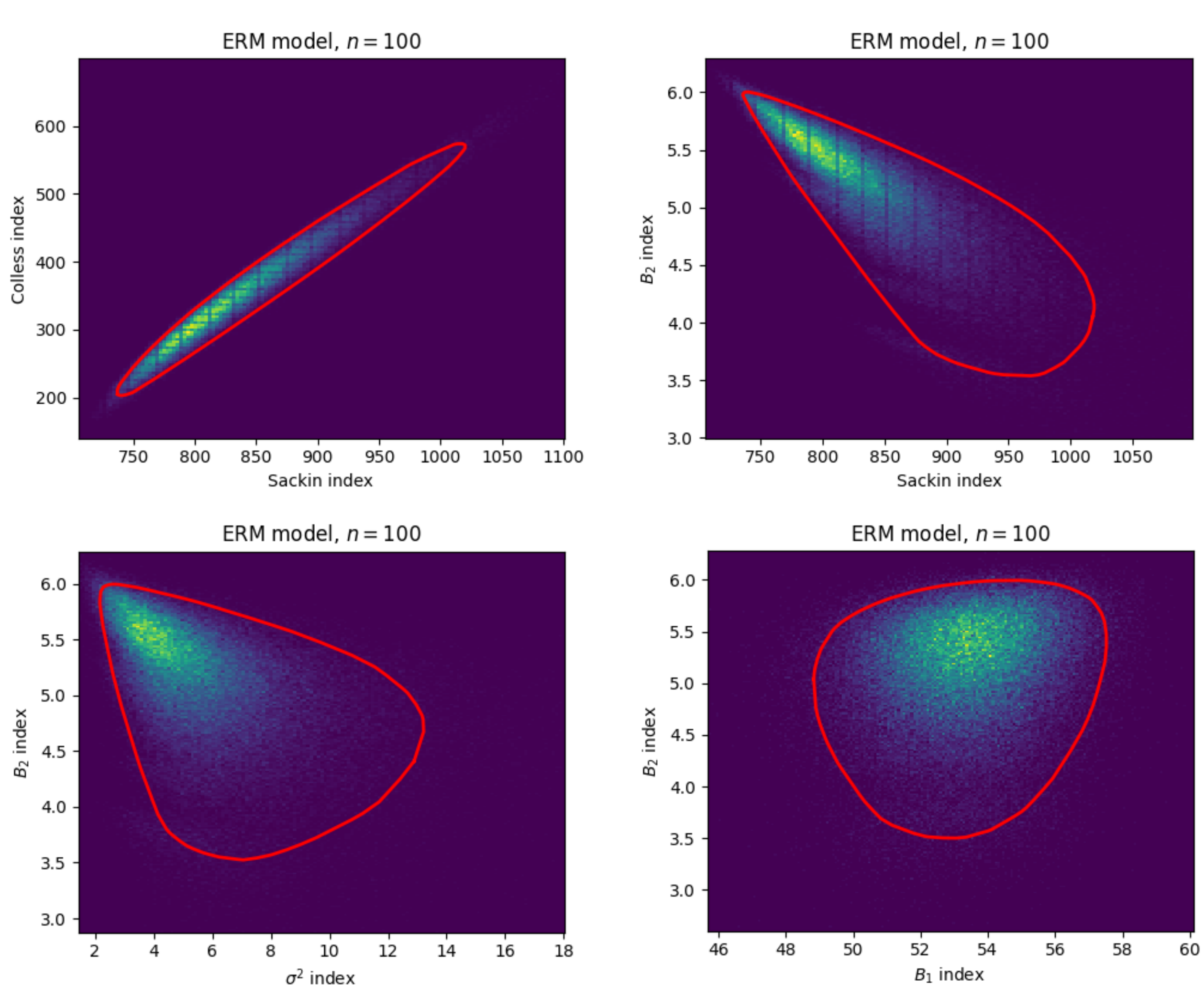}
  \caption{Examples of 2D histograms depicting the joint distributions
  of several balance indices, here under the ERM model for $n = 100$ leaves.
  The color scale is not given because it is not relevant.
  The 95\% confidence regions based on the convex layers
  are represented in red. These histograms and confidence regions were
  build using the same $10^5$ independent realizations of the ERM model.
  Note that the vertical (resp.\ horizontal) stripes visible in the distribution
  of the Sackin (resp.\ Colless) index are artifacts that result from the
  binning used to plot the histograms. This occurs only for the Sackin and Colless
  indices because they take fewer distinct values than the $\sigma_N^2$, $B_1$
  or $B_2$ indices.} 
\label{figConfidenceRegions}
\end{figure}


\subsection{Results} \label{secResults}

Table~\ref{tabRawPower} is an excerpt from the full output of our analysis.
The complete table of results (where the performance of each balance index and
each pair of balance indices is given for each of the 85 scenarios that we
considered) can be downloaded from~\cite{ZENODO}. 
Here we have selected a handful of relevant scenarios to illustrate
specific points; a more systematic analysis will follow.

\begin{table}[h!]
\centering
\captionsetup{width=\linewidth}
\centerline{\small
\begin{tabular}{llcrrrrrrr}
\toprule
 Model & Test set & Range $n$ & Trees&\!\! Sackin &\!\!\!\! Colless &\!\!\!Cherries & $\sigma_N^2$ & $B_1$ & $B_2$ \\
\midrule
  ERM   & HOGENOM    &  6--12   & 39763 & \NS{4.51} & \NS{4.77} & \NS{1.59} &\HIGH 5.59 & \NS{3.40} & \NS{4.60} \\
  ERM   & HOGENOM    &  12--24  & 24493 & 17.64 & 17.91   &  5.35  & \HIGH 18.09 & 16.05 & 13.71 \\
  ERM   & HOGENOM    &  25--50  & 14371 &\HIGH  35.57 & 35.11   &  11.33 & 34.37 & 29.27 & 18.55 \\
  ERM   & HOGENOM    &  50--100 & 7020  &\HIGH  53.13 & 52.45   &  20.98 & 50.84 & 46.85 & 23.80 \\
  ERM   & HOGENOM    &  >100    & 3305  &\HIGH  73.31 & 72.28   &  33.77 & 71.07 & 65.93 & 35.46 \\
  ERM   & PhylomeDB  &  6--12   & 11632 & 31.04 & 30.75   &  9.13  & 26.68 & 19.11 &\HIGH  32.25 \\
  ERM   & PhylomeDB  &  12--24  & 20022 & 71.97 & 69.87   &  12.87 & 58.95 & 41.93 &\HIGH  77.07 \\
  ERM   & PhylomeDB  &  25--50  & 15523 & 94.00    & 92.66   &  21.81 & 82.30  & 59.06 &\HIGH  97.04 \\
  ERM   & PhylomeDB  &  50--100 & 11302 & 98.76 & 98.37   &  29.07 & 90.88 & 66.71 &\HIGH  99.85 \\
  ERM   & PhylomeDB  &  >100    & 21643 & 99.81 & 99.72   &  39.02 & 95.01 & 70.59 &\HIGH  100 \\
  PDA   & OrthoMaM   & 50-100 & 2700 & 35.48 & 41.07 & 18.52 & 66.00 &\HIGH  67.19 & \NS{4.26} \\
  PDA   & OrthoMaM   & >100 & 11741 & 51.40 & 63.09 & 32.92 & 84.43 &\HIGH  91.76 &\NS{ 2.84} \\
  AB    & OrthoMaM   & 50-100 & 2700 & 8.26 & 7.44 & \NS{4.04} & \NS{1.22} & \NS{4.85} &\HIGH  13.85 \\
  AB    & OrthoMaM   & >100 & 11741 & \NS{4.64} & \NS{3.88} & \NS{1.88} & \NS{0.49} & \NS{1.57} &\HIGH  10.33 \\
  MFL1  & OrthoMaM   & 50-100 &2700 & \NS{2.89} &\NS{2.44}  & 20.70 &\NS{0.30} & \HIGH 23.30 &16.96 \\
  MFL1  & OrthoMaM   & >100  & 11741 &\NS{0.68} &\NS{0.56}  & \HIGH 15.19 &\NS{0.06} &12.61& 11.50 \\
  MFL2  & OrthoMaM   & 50-100 & 2700 & \NS{3.15} & 5.70 & 6.89 &\HIGH  16.00 & \NS{1.63} & \NS{4.48} \\
  MFL2  & OrthoMaM   & >100 & 11741 & \NS{4.34} & 7.69 & \NS{3.58} &\HIGH  38.11 & \NS{0.54} & \NS{2.5} \\
\bottomrule
\end{tabular}
}
\caption{Estimated statistical power of each balance index, as measured by the
  percentage of trees of the test set rejected against the null model. The
  scenarios given in this table were hand-picked to illustrate specific points.
 The highlighted values correspond to the best index for a given scenario
 and the greyed out ones to indices with a power $\leq$ 5\%.}
\label{tabRawPower}
\end{table}

First, one can ask whether some balance indices are
consistently better when working with a specific null model. In particular,
in light of the study of Agapow and Purvis we might expect the Sackin and
Colless indices to be markedly more powerful than other balance indices
when testing against the ERM model.
While this general trend is somewhat confirmed by our study,
this is by no means a golden rule: as shown in Table~\ref{tabRawPower},
$B_2$ is better at distinguishing the trees of
PhylomeDB from ERM trees than any other balance index.


Second, one can wonder whether some balance indices are consistently better
with some test sets. Here, comparing the performance of the 
$\sigma_N^2$, $B_1$ and $B_2$ indices on OrthoMaM is
informative: indeed, these indices perform
completely differently depending on the null model used. For instance,
$B_2$ is the only index that fails to distinguish the trees of
OrthoMaM from the PDA model; but it is also the only index that is somewhat able
to distinguish those same trees from the AB model. Similarly, $\sigma_N^2$ is
the only index able to distinguish OrthoMaM from MFL2, and $B_1$ is
the best index to distinguish OrthoMaM from PDA or MFL1. 
The Sackin and the Colless index stand out by their poor
performance when testing OrthoMaM against the AB, MFL1 or MFL2 models.

Table~\ref{tabRawPower} shows that there are no absolute rules when it comes to
ranking the performance of the various balance indices, and looking at the
full output of our analysis only reinforces this conclusion. However, general
trends might still emerge when looking across the 85 scenarios that we
considered.

In order to assess this, we rank the indices from best (1) to worst (6) for
each scenario.  We then we average these ranks over all scenarios that share a
common factor: over all scenarios where the null model is the ERM model; over
all scenarios where the test set is TreeBASE; etc.  These average ranks are
given in Table~\ref{tabResultsUnivariate}.  We use average ranks rather than
average powers because when averaging the powers an excellent performance in
a single scenario might compensate poor performances in several other
scenarios. Using ranks ensures that each scenario gets the same importance.

\begin{table}[h!]
\centering
\captionsetup{width=\linewidth}
\begin{tabular}{lccccccc}
\toprule
Factor & Scenarios & Sackin & Colless & Cherries &
                                  ~~$\sigma^2_N$~~~ & ~~$B_1$~~~ & ~~$B_2$~~~ \\
\midrule
      ERM  &    17  & \HIGH 1.71  &  2.29  &  6.00  &  3.59  &  4.65  &  2.76 \\ 
      PDA  &    17  &  4.47  &  3.38  &  4.18  &  2.74  & \HIGH 1.35  &  4.88 \\
       AB  &    17  &  2.76  &  2.88  &  5.47  &  4.00  &  3.53  & \HIGH 2.35 \\
     MFL1  &    17  &  3.41  &  3.82  &  4.29  &  4.47  &  2.65  & \HIGH 2.35 \\
     MFL2  &    17  &  3.47  &  3.00  &  4.97  &  2.88  &  3.88  & \HIGH 2.79 \\
\midrule
 TreeBASE  &    25  &  3.34  &  3.44  &  4.58  &  3.68  & \HIGH 2.80  &  3.16 \\
 OrthoMaM  &    10  & \HIGH 3.00  &  3.10  &  4.00  &  4.00  &  3.40  &  3.50 \\
PhylomeDB  &    25  &  2.94  &  3.46  &  5.12  &  4.14  &  3.68  & \HIGH 1.66 \\
  HOGENOM  &    25  &  3.28  & \HIGH 2.32  &  5.64  &  2.60  &  3.08  &  4.08 \\
\midrule
$n\in$  6--12  &  15  &  3.17  &  2.77  &  6.00  & \HIGH 2.17  &  4.33  &  2.57 \\
$n\in$ 12--24  &  15  &  3.27  &  2.87  &  5.73  &  3.33  &  3.60  & \HIGH 2.20 \\
$n\in$ 25--50  &  15  & \HIGH 2.93  &  3.07  &  5.07  &  4.00  & \HIGH 2.93  &  3.00 \\
$n\in$ 50--100 &  20  &  3.22  &  3.35  &  4.47  &  4.00  & \HIGH 2.65  &  3.30 \\
$n > 100$      &  20  &  3.20  &  3.20  &  4.10  &  3.90  & \HIGH 2.85  &  3.75 \\
\midrule
      All  &    85  &  3.16  &  3.08  &  4.98  &  3.54  &  3.21  & \HIGH 3.03 \\
\bottomrule
\end{tabular}
  \caption{Average ranks of the balance indices. For each scenario, the
  indices are ranked from 1 (the best) to 6 (the worse). The ranks are then
  averaged over all scenarios that share the factor indicated in the
  left-most column.}
\label{tabResultsUnivariate}
\end{table}

Although a few general trends can be identified (such as the fact
that the Sackin index seems to be consistently better when testing
against the ERM model; or that $B_1$ seems to be the best index when 
testing against the PDA model or working with large trees), the main
takeaway from Table~\ref{tabResultsUnivariate} is that
there does not seem to be a ``silver bullet'' index that would work well
in every situation: with the exception of the number of cherries,
each index works well in some situations and
poorly in some others. This is made apparent by the fact that, when the
ranks are averaged over all 85 scenarios,
they all come out roughly equal to~3.5 --
which is what we expect in the absence of any difference of performance
between the indices.

All things considered, the only reasonably certain conclusions that emerge from
our analysis are the following:
\begin{itemize}
  \item The number of cherries is consistently
    worse than the other balances indices.
  \item Neither the Colless nor the Sackin index are consistently 
    better than other balance indices.
  \item The $B_2$ index is not consistently worse than other
    indices. In fact, its overall performance seems comparable
    to that of the Colless and of the Sackin index.
\end{itemize}

Let us now turn to the results of our bivariate analysis. Recall that,
since the Colless and the Sackin index are currently the standard measures
of phylogenetic balance and are frequently used together,
our specific goal is to test whether other balance indices might
complement the Colless / Sackin indices better than they complement each
other. Table~\ref{tabResultsBivariateColless}
(resp.~\ref{tabResultsBivariateSackin}) gives the average rank of each of the
pairs of indices that include the
Colless (resp.\ Sackin) index, using the same methodology as previously
(note however that this time the ranks go from 1 to 5, so the expected rank
when all indices have the same overall performance is~3).

\begin{table}[h!]
\centering
\captionsetup{width=0.9\linewidth}
\begin{tabular}{lcccccc}
\toprule
  Factor  & \!Scenarios\! & \!(C,\,Sackin)\! & \!(C,\,cherries)\!
  &\,(C,\,$\sigma_N^2$)\, &\,(C,\,$B_1$)\, &\,(C,\,$B_2$)\, \\
\midrule
      ERM  &      17  &   3.79  &   3.24  & \HIGH 2.41  &    2.97  &   2.59 \\  
      PDA  &      17  &   4.41  &   2.91  &  2.74  & \HIGH   1.44  &   3.50 \\
       AB  &      17  &   3.91  &   3.00  & \HIGH 2.00  &    2.85  &   3.24 \\
     MFL1  &      17  &   4.35  &   3.50  & \HIGH 2.00  &    2.74  &   2.41 \\
     MFL2  &      17  &   3.74  &   3.18  & \HIGH 2.06  &    3.11  &   2.91 \\
\midrule
 TreeBASE\!\!\!\!  &      25  &   4.40  &   3.56  & \HIGH 1.94  &    2.54  &   2.56 \\
 OrthoMaM\!\!\!\!  &      10  &   2.65  &   4.00  & \HIGH 2.20  &    3.45  &   2.70 \\
PhylomeDB\!\!\!\!  &      25  &   4.28  &   3.36  & \HIGH 1.52  &    3.56  &   2.28 \\
  HOGENOM\!\!\!\!  &      25  &   4.00  &   2.24  &  3.28  & \HIGH   1.44  &   4.04 \\
\midrule
$n\in$  6--12  &  15  &   4.40  &   3.07  & \HIGH 1.07  &    3.00  &   3.47 \\
$n\in$ 12--24  &  15  &   4.00  &   3.80  &  2.23  &    2.83  & \HIGH  2.13 \\
$n\in$ 25--50  &  15  &   4.20  &   2.80  &  2.73  & \HIGH   2.40  &   2.87 \\
$n\in$ 50--100\!\!\!\! &  20  &   3.92  &   3.15  & \HIGH 2.25  &    2.52  &   3.15 \\
$n > 100$      &  20  &   3.80  &   3.05  &  2.75  &  \HIGH  2.45  &   2.95 \\
\midrule
   All         &  85  &   4.04  &   3.16  & \HIGH 2.24  &    2.62  &   2.93 \\
\bottomrule
\end{tabular}
\caption{Average ranks of the pairs of balance indices that include the
  Colless index.}
\label{tabResultsBivariateColless}
\end{table}

\begin{table}[h!]
\centering
\captionsetup{width=0.9\linewidth}
\begin{tabular}{lcccccc}
\toprule
  Factor  & \!Scenarios\! & \!(S,\,Colless)\! & \!(S,\,cherries)\!
  &\,(S,\,$\sigma_N^2$)\, &\,(S,\,$B_1$)\, &\,(S,\,$B_2$)\, \\
\midrule
      ERM  &     17   &   4.62    &   2.94   &  \HIGH  2.35 &  2.59  & 2.50 \\ 
      PDA  &     17   &   4.29    &   2.82   &    2.82 & \HIGH 1.35  & 3.71 \\
       AB  &     17   &   4.12    &   3.03   &  \HIGH  2.21 &  2.65  & 3.00 \\
     MFL1  &     17   &   4.53    &   3.59   &    2.12 &  2.70  & \HIGH 2.06 \\
     MFL2  &     17   &   3.94    &   3.35   &  \HIGH   2.21 &  2.79  & 2.70 \\
\midrule
 TreeBASE\!\!\!\!  &     25   &   4.70    &   3.48   &    2.38 &  2.24  & \HIGH 2.20 \\
 OrthoMaM\!\!\!\!  &     10   &   2.80    &   3.80   &   \HIGH  1.80 &  3.30  & 3.30 \\
PhylomeDB\!\!\!\!  &     25   &   4.76    &   3.26   &    1.94 &  3.30  & \HIGH 1.74 \\
  HOGENOM\!\!\!\!  &     25   &   4.04    &   2.44   &    2.92 &  \HIGH 1.36  & 4.24 \\
\midrule
$n\in$  6--12  & 15   &   4.63    &   3.53   &  \HIGH   1.47 &  2.40  & 2.97 \\
$n\in$ 12--24  & 15   &   4.53    &   3.67   &    2.27 &  2.60  & \HIGH 1.93 \\
$n\in$ 25--50  & 15   &   4.40    &   2.73   &    2.87 &  \HIGH 2.33  & 2.67 \\
$n\in$ 50--100\!\!\!\! & 20   &   4.15    &   3.02   &  \HIGH   2.40 &  2.42  & 3.00 \\
$n > 100$      & 20   &   3.95    &   2.90   &   2.60 & \HIGH   2.35  & 3.20 \\
\midrule
   All         & 85   &   4.30    &   3.15   &   \HIGH  2.34 &  2.42  & 2.79 \\
\bottomrule
\end{tabular}
\caption{Average ranks of the pairs of balance indices that include the
  Sackin index.}
\label{tabResultsBivariateSackin}
\end{table}

One clear conclusion emerges from Tables~\ref{tabResultsBivariateColless}
and~\ref{tabResultsBivariateSackin}: that the Sackin index is the
worst possible balance index to complement the Colless index, and vice versa.
In fact, even complementing them with the number of cherries (which was the
only index to perform significantly worse than the others when considered
alone) almost invariably gives better results.
As mentioned above, this result is not surprising given the strong correlation
of the Sackin and Colless indices.
It is nevertheless of significant importance, considering how these indices
are used in practice.

Another, more unexpected conclusion to be drawn from
Tables~\ref{tabResultsBivariateColless} and~\ref{tabResultsBivariateSackin}
is that $\sigma_N^2$ seems to be consistently better than any
other index at complementing the Colless index (with the possible exception of
the $B_1$ index).  This also seems to be the case when it comes to
complementing the Sackin index, even though in that case things are a bit more
nuanced and both the $B_1$ and $B_2$ index can be useful.

\pagebreak

Lastly, although this is not the aim of this study, having
at our disposal a measure of the power of every pair of
balance indices also makes it tempting to adopt a  more exploratory
approach and assess whether some more exotic combinations of balance indices
might be useful. It turns out that this is the case: out of the 15
pairs of indices (expected rank under uniformity:~8),
{$(B_1$, $B_2)$} and {$(\sigma_N^2, B_2)$} stand out
for having an average rank that is slightly better than other pairs.
Moreover, the fact that $B_2$ is part of these seemingly optimal pairs
of balance indices reinforces the idea that its relevance may have been
underestimated. Finally, let us point out that, strikingly,
{(Sackin, Colless)} turns out to have the worst performance of all possible
pairs of indices.

\vspace{1ex} 

\begin{table}[h!]
\centering
\captionsetup{width=0.7\linewidth}
\begin{tabular}{r|cccccc}
\toprule
  & Sackin & Colless & Cherries & $\sigma_N^2$ & $B_1$ & $B_2$ \\
\midrule
  Sackin     & --- & 11.44 & 8.77 & 6.52 & 6.26 & 7.59 \\
  Colless    & & --- &  9.02 & 6.71 & 7.26 & 8.76  \\
  Cherries   & & & --- & 9.91 & 11.12& 7.01  \\
$\sigma_N^2$ & & & & --- & 7.93 & \HIGH 6.08 \\
$B_1$        & & & & & --- & \HIGH 5.63 \\
\bottomrule
\end{tabular}
  \caption{Average rank (over all 85 scenarios) of each pair of balance indices.
  The two best performing pairs are highlighted.}
\label{tabOverallBivariate}
\end{table}

\section{Concluding comments} \label{secConclusion}

Given its intuitive definition, it is legitimate to wonder why $B_2$ is not
more prominent in phylogenetics; or why it has in fact not been studied
in its own right before. To conclude this article, we speculate as to why
this might be the case.

One first reason could be that the intuition behind the definition
of $B_2$ is not always well understood. Indeed, although its probabilistic
interpretation is very clearly laid out in Shao and Sokal's original paper,
it is almost never mentioned in subsequent works that use $B_2$.
For instance, to motivate its definition Kirkpatrick and Slatkin
merely say that \emph{``The statistic $B_2$ was suggested by the
Shannon--Wiener statistic. The Shannon--Wiener statistic was developed as an
index of information content, and so a measure of tree shape related to it
might be a useful statistic for detecting patterns.''} \cite{Kirkpatrick1993}.
In most sources, $B_2$ is only described, somewhat vaguely, either as
an information theoretic measure of balance or as a weighted variant of
the Sackin index. In our opinion this is unfortunate, because the
probabilistic interpretation of $B_2$ is precisely what sets it apart
from other balance indices. For instance, it has already been pointed out
in the literature that it is not entirely clear why the Sackin index happens to
correlate with our intuition of phylogenetic (im)balance -- as illustrated by
the fact that Sackin's original idea had little to do with the index that came
to bear his name~\cite{Coronado2020b}.

A second possible reason for the lack of popularity of $B_2$ is that it
might be perceived as a mathematically unwieldy quantity. For instance, in
the only mention of $B_2$ that we could find in the mathematical literature \cite{Blum2006a},
after studying the asymptotic distribution of the Colless and Sackin indices 
under the ERM model \revision{Blum, François and Janson} conclude by saying that
\emph{``In the same spirit, we believe that the $B_1$ index of Shao and Sokal
could be studied without difficulties. Studying the remaining statistics
($B_2$ and $\sigma_N^2$) would nevertheless require considerably more
effort.''} The results of Sections~\ref{secExtrem} and~\ref{secRandomTrees},
as well as the simplicity of their proofs, show that the idea that $B_2$ is
untractable is unjustified. In fact, in some respects $B_2$ seems
\emph{more} tractable than other classic balance indices
(compare for instance its expected value under the ERM model to that of the
Colless index~\cite{Heard1992}, and consider the fact that the expected value
of the Colless index under the PDA model is currently not known).

Finally, the third -- and most likely main -- reason for the current status
of $B_2$ is probably the reputation of being less useful than other
balance indices that it earned from Agapow and Purvis's 2002 study, which they
conclude by saying: \emph{``$B_2$ never performs well and should not be used.''}
\cite{Agapow2002}. While this conclusion is justified in their particular
setting, it does not hold when their hypotheses are relaxed -- in
particular
when using other null models than the ERM model, or when working with some
specific types of phylogenies that occur in the real world. In
fact, that $B_2$ could be useful in some contexts could already be observed in
some other studies comparing the performance of various balance indices -- in
particular in~\cite{Maia2004}, where $B_2$ is found to be the second most
powerful statistic considered, beating the Colless index. The study by Agapow
and Purvis also does not take into account the possibility of using balance
indices jointly. As shown in Section~\ref{secStats}, this completely changes
the relevance of the indices, as some that perform well on
their own, such as the Sackin and Colless indices, can perform very poorly when
used jointly.

\enlargethispage{1ex} 

In conclusion, none of the reasons that currently make $B_2$ a
``second-rate'' balance index seems justified. Our work
calls for a reevaluation of the status of $B_2$ in phylogenetics
-- in particular in the age of phylogenetic networks, where alternatives
will have to be found to the classical
measures of phylogenetic balance that are used on trees.

\section*{Acknowledgements}

The authors thank Simon Penel for his help with the HOGENOM database
and Roberto Bacilieri for helpful discussions.

FB and CS were funded by grants ANR-16-CE27-0013 and
ANR-19-CE45-0012 from the
Agence Nationale de la Recherche, respectively;
GC was funded by FEDER\,/\,Ministerio de Ciencia, Innovación y
Universidades\,/\,Agencia Estatal de Investigación project
PGC2018-096956-B-C43.


\phantomsection
\addcontentsline{toc}{section}{References}
\bibliographystyle{abbrvnat}
\bibliography{biblio} 

\appendix
\setcounter{equation}{0}
\renewcommand{\thesection}{A}
\renewcommand{\thesubsection}{A.\arabic{subsection}}
\renewcommand{\theequation}{A.\arabic{equation}}
\renewcommand{\thetheorem}{\thesubsection.\arabic{theorem}}

\pagebreak

\section*{\centering Appendix}
\addcontentsline{toc}{section}{Appendix}

\subsection{Variance of $B_2$ in binary Galton--Watson trees}%
\label{appProofVarBinaryGW}

In this section, we prove Proposition~\ref{propVarBinaryGW} concerning
the variance of $B_2$ in binary Galton--Watson trees. Let us start with
a standard result, which we recall and prove for the sake of completeness.

\begin{lemma} \label{lemmaSecondMomentGW}
Let $(Z_t)$ be a Galton--Watson process with offspring distribution
$Y \sim 2\, \mathrm{Bernoulli(1/2)}$. Then, $\Expec{Z_t^2} = t + 1$.
\end{lemma}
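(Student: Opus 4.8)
The plan is to derive a one-step recursion for the second moment by conditioning on the natural filtration $\mathcal{F}_t$ and exploiting the branching structure $Z_{t+1} = \sum_{i=1}^{Z_t} Y_t(i)$, where the $Y_t(i)$ are i.i.d.\ copies of $Y$ that are independent of $\mathcal{F}_t$. First I would record the relevant moments of the offspring distribution: since $Y$ equals $2$ with probability $1/2$ and $0$ with probability $1/2$, we have $\Expec{Y} = 1$ and $\Expec{Y^2} = 2$, hence $\Var{Y} = 1$. In particular the process is critical, so $\Expec{Z_{t+1} \given \mathcal{F}_t} = Z_t\,\Expec{Y} = Z_t$ and, taking expectations, $\Expec{Z_t} = 1$ for every~$t$ (starting from $Z_0 = 1$).

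Next, conditioning on $\mathcal{F}_t$ and using that a sum of $Z_t$ i.i.d.\ terms has conditional variance $Z_t\,\Var{Y} = Z_t$, I would combine the conditional mean and variance into $\Expec{Z_{t+1}^2 \given \mathcal{F}_t} = \Var{Z_{t+1} \VarGiven \mathcal{F}_t} + \Expec{Z_{t+1} \given \mathcal{F}_t}^2 = Z_t + Z_t^2$. Taking expectations and using $\Expec{Z_t} = 1$ then yields the clean recursion $\Expec{Z_{t+1}^2} = \Expec{Z_t^2} + 1$.

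Finally, since $Z_0 = 1$ gives $\Expec{Z_0^2} = 1$, an immediate induction produces $\Expec{Z_t^2} = t + 1$, as claimed. There is no genuine obstacle here; the only step that deserves a word of justification is the conditional variance identity, which rests on the independence of the offspring counts $(Y_t(i))$ from $\mathcal{F}_t$ --- a standard feature of Galton--Watson processes that is built into the model described in Section~\ref{secGW}.
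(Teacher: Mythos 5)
Your proposal is correct and follows essentially the same route as the paper: both condition on $\mathcal{F}_t$, compute the conditional second moment of $Z_{t+1}$ as $Z_t + Z_t^2$, and use criticality ($\Expec{Z_t} = 1$) to obtain the recursion $\Expec{Z_{t+1}^2} = \Expec{Z_t^2} + 1$. The only cosmetic difference is that you invoke the conditional variance decomposition where the paper expands the square $\bigl(\sum_i Y_t(i)\bigr)^2$ directly into diagonal and off-diagonal terms.
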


\begin{proof}
Since $Z_{t + 1} = \sum_{i = 1}^{Z_t} Y_t(i)$, we have
\[
  Z_{t + 1}^2 \;=\; \sum_{i = 1}^{Z_t} Y_t(i)^2 \;+\;
  \sum_{i = 1}^{Z_t} \sum_{j \neq i}^{Z_t} Y_t(i)\, Y_t(j)\,.
\]
Letting $\mathcal{F}_t$ be the natural filtration of the process,
we thus have
\[
  \Expec{Z_{t + 1}^2 \given \mathcal{F}_t} \;=\;
  \Expec*{\normalsize}{Y^2} \,Z_t \;+\;
  \Expec{Y}^2 \, Z_t(Z_t - 1)
\]
and, as a result,
\[
  \Expec{Z_{t + 1}^2} \;=\;
  \Expec*{\normalsize}{Y^2} \,\Expec{Z_t} \;+\;
  \Expec{Y}^2 \, \Expec*{\normalsize}{Z_t^2} \;-\;
  \Expec{Y}^2 \, \Expec{Z_t}\,.
\]
Since here $\Expec{Y} = 1$, $\Expec{Y^2} = 2$ and $\Expec{Z_t} = 1$, this
simplifies to
\[
  \Expec{Z_{t + 1}^2} \;=\;
  \Expec*{\normalsize}{Z_t^2} \;+\; 1\,.
\]
The lemma then follows by induction, since $\Expec{Z_0^2} = 1$.
\end{proof}

Let us now recall Proposition~\ref{propVarBinaryGW} and prove it.

\begin{repproposition}{propVarBinaryGW}
Let $T$\! be a critical binary Galton--Watson tree, that is, assume that
the offspring distribution is $\Prob{Y = 0} = \Prob{Y = 2} = 1/2$. Then,
\[
  \Var{B_2(\Ball{T}{t})} \;=\;
  \frac{4}{3} - 2^{-t+2} + 4^{-t}\mleft(t + \frac{8}{3}\mright)\,.
\]
\end{repproposition}

\begin{proof}
To alleviate the notation, let us write
$B_t = B_2(\Ball{T}{t})$. With this notation, since 
in the case of a binary Galton--Watson tree
$\Indic{Y_t > 0} = Y_t / 2$,
Equation~\eqref{eqProofThmExpecGW} from the proof
of Theorem~\ref{thmExpecGW} becomes
\begin{equation} \label{eqProofPropVarBinaryGW01}
  B_{t + 1} \;=\; B_t \;+\; 2^{-(t + 1)} \sum_{i = 1}^{Z_t} Y_t(i) \,.
\end{equation}
Therefore, letting $\mathcal{F}_t$ denote the natural filtration of the process
and using that $\Expec{Y} = 1$, we have
\[
  \Expec{B_{t + 1}^2 \given \mathcal{F}_t} \;=\;
  B_t^2 \;+\; 2^{-t} B_t\, Z_t \;+\;
  2^{-2(t + 1)} \, \Expec{Z_{t + 1}^2 \given \mathcal{F}_t}.
\]
As a result,
\begin{equation}\label{eqProofPropVarBinaryGW02}
  \Expec*{\normalsize}{B_{t + 1}^2} \;=\;
  \Expec*{\normalsize}{B_t^2\,} \;+\; 2^{-t} \,\Expec{B_t Z_t} \;+\;
  2^{-2(t + 1)} \, \Expec*{\normalsize}{Z_{t + 1}^2}.
\end{equation}
Let us now turn our attention to $\Expec{B_t Z_t}$. Using
Equation~\eqref{eqProofPropVarBinaryGW01}, we get
\begin{align*}
  B_{t+1} Z_{t + 1}
  \;&=\; \mleft(B_t + 2^{-(t + 1)} Z_{t + 1}\mright) Z_{t+1} \\
  \;&=\; B_t \sum_{i = 1}^{Z_t} Y_t(i) \;+\; 2^{-(t + 1)} Z_{t + 1}^2 \,, 
\end{align*}
and since
$\Expec*{\normalsize}{B_t \sum_{i = 1}^{Z_t} Y_t(i) \given \mathcal{F}_t} =
  B_t Z_t$ this yields
\[
  \Expec{B_{t+1} Z_{t + 1}}
  \;=\; \Expec{B_t Z_t} \;+\; 2^{-(t + 1)} \Expec*{\normalsize}{Z_{t + 1}^2} \,.
\]
By Lemma~\ref{lemmaSecondMomentGW}, $\Expec*{\normalsize}{Z_{t + 1}^2} = t + 2$.
Since $\Expec{B_0 Z_0} = 0$, we thus have
\[
  \Expec{B_t Z_t} \;=\; \sum_{s = 0}^{t - 1} 2^{-(s + 1)} (s + 2) \;=\;
  3 - 2^{-t}(t + 3)\,.
\]
Plugging this and $\Expec*{\normalsize}{Z_{t + 1}^2} = t + 2$ in
Equation~\eqref{eqProofPropVarBinaryGW02}, we get the following closed
recurrence relation for $\Expec{B_t^2}$:
\[
  \Expec*{\normalsize}{B_{t + 1}^2} \;=\;
  \Expec*{\normalsize}{B_t^2\,} \;+\; 2^{-t} \,\big(3 - 2^{-t}(t + 3)\big) \;+\;
  2^{-2(t + 1)} \, (t + 2).
\]
Solving this recurrence relation with the initial condition
$\Expec*{\normalsize}{B_0^2} = 0$ then yields
\[
  \Expec*{\normalsize}{B_t^2\,} \;=\;
  \frac{7}{3} - 6\cdot2^{-t} + 4^{-t}\mleft(t + \frac{11}{3}\mright)\,.
\]
Finally, since by Theorem~\ref{thmExpecGW}, $\Expec{B_t} = 1 - 2^{-t}$, we have
\[
  \Var{B_t} \;=\;
  \frac{4}{3} - 4\cdot2^{-t} + 4^{-t}\mleft(t + \frac{8}{3}\mright)\,,
\]
concluding the proof.
\end{proof}

\subsection{Bounds on the number of distinct values of $B_2$}
\label{appDistinctValuesB2}

\begin{proposition} \label{propDistinctValuesB2}
Let $\mathscr{T}_n$ denote the set of rooted binary trees with $n$ leaves,
labeled or unlabeled. Then,
\[
  2^{\Floor{n/2} - 1} \;\leq\; \# B_2(\mathscr{T}_n) \;\leq\; a_n
\]
where $a_n$ is sequence \href{https://oeis.org/A002572}{A002572} in the Online
Encyclopedia of Integer Sequences~\cite{OEIS} and satisfies
$a_n \sim K \rho^n$, with $\rho \approx 1.7941$
and $K \approx 0.2545$ the Flajolet--Prodinger constant.
\end{proposition}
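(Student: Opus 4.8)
The plan is to handle the two bounds separately, reducing both to elementary facts about the multiset of leaf-depths of a binary tree. For the upper bound, the starting point is Proposition~\ref{propB2BinaryTrees}: in a binary tree $B_2(T) = \sum_{\ell} \delta_\ell\, 2^{-\delta_\ell}$ depends only on the multiset $\{\delta_\ell\}$ of leaf-depths, and not on the labelling nor on any finer structure. I would therefore bound $\# B_2(\mathscr{T}_n)$ by the number of distinct depth-multisets arising from binary trees with $n$ leaves. By Kraft's equality, a multiset $\{d_1,\dots,d_n\}$ of positive integers is the leaf-depth multiset of some full rooted binary tree if and only if $\sum_{i} 2^{-d_i} = 1$; that is, the admissible multisets are exactly the ways of writing $1$ as a sum of $n$ powers of $1/2$. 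The number of such partitions of unity is, by definition, the sequence $a_n$ (A002572), whose asymptotics $a_n \sim K\rho^n$ are classical. Since $B_2$ is a function of the depth-multiset, $\# B_2(\mathscr{T}_n) \le a_n$. (It is irrelevant here whether trees are labelled, as $B_2$ ignores labels and the set of achievable depth-multisets is the same in both cases.)

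For the lower bound I would exhibit $2^{\lfloor n/2\rfloor - 1}$ binary trees with $n$ leaves whose $B_2$ values are pairwise distinct. Set $m = \lfloor n/2\rfloor - 1$. For each $\epsilon = (\epsilon_1,\dots,\epsilon_m) \in \{0,1\}^m$, build a tree $T_\epsilon$ along a caterpillar spine $s_0,\dots,s_m$, with $s_i$ at depth $i$: at the spine vertex $s_{i-1}$ hang a single pendant leaf if $\epsilon_i = 0$, and a pendant cherry if $\epsilon_i = 1$; then graft a caterpillar tail of the appropriate size on $s_m$ so that $T_\epsilon$ has exactly $n$ leaves. The point of the two pendant options is that, by Corollary~\ref{corGraftingCherry}, replacing the leaf at depth~$i$ by a cherry increases the contribution of that pendant to $B_2$ by exactly $2^{-i}$ (while adding one leaf). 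Unwinding $B_2(T_\epsilon)$ with Proposition~\ref{propGrafting} gives
\[
  B_2(T_\epsilon) \;=\; c \;+\; \sum_{i : \epsilon_i = 1} 2^{-i} \;-\; 2^{\,|\epsilon| - n + 2},
\]
where $c$ is a constant independent of $\epsilon$ and $|\epsilon|$ is the number of ones; the last term is the contribution of the length-adjusting tail, which depends on $\epsilon$ only through $|\epsilon|$.

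The distinctness of these $2^m$ values is then a clean dyadic argument, and this is the step requiring the most care. The ``signal'' term $\sum_{i:\epsilon_i=1} 2^{-i}$ is precisely the number with binary expansion $0.\epsilon_1\cdots\epsilon_m$, so two distinct strings produce signal terms differing by at least $2^{-m}$. The ``correction'' term $2^{|\epsilon|-n+2}$ is a single power of $2$ at binary position $n-2-|\epsilon| \ge \lceil n/2\rceil - 1 \ge m$, i.e.\ at or below the last signal bit; away from the boundary configuration $\epsilon = (1,\dots,1)$ its position is strictly deeper than $m$, so a difference of two corrections has absolute value below $2^{-m}$ and cannot cancel a nonzero difference of signals. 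Checking the single boundary case separately then shows $\epsilon \mapsto B_2(T_\epsilon)$ is injective, yielding the bound.

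I expect the main obstacle to be exactly this interplay between the variable-length tail — which is what lets us keep the leaf count pinned at $n$ while still buying one free bit per roughly two leaves — and the need for that tail never to perturb $B_2$ enough to collide two signal values; once the tail contribution is identified as a single dyadic correction depending only on $|\epsilon|$, the remainder is bookkeeping (including verifying $2 \le n - m - |\epsilon|$ so the tail is always well defined). It is worth flagging that the resulting bound is far from tight: more elaborate grafting constructions push the growth rate toward $\rho^n$, but the caterpillar-with-pendants family has the decisive advantage of a transparent injectivity proof valid for every~$n$, and it already matches the truth for small cases such as $n=4$, where it recovers both the complete binary tree and the caterpillar.
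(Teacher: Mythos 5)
Your proposal is correct and follows essentially the same route as the paper: the upper bound by observing that $B_2$ is determined by the multiset of leaf depths (counted by A002572 via the Kraft equality), and the lower bound by encoding a string $\epsilon\in\{0,1\}^m$ with $m=\Floor{n/2}-1$ into dyadic contributions $2^{-i}$ to $B_2$ and padding with a caterpillar whose perturbation, being a single positive power of $2$ at position at least $m$, is too small to collide two signal values. The only differences are cosmetic — your spine-with-pendants family versus the paper's ``graft one or two cherries per level'' family, and your injectivity check is in fact slightly cleaner than you suggest, since $0<2^{\,|\epsilon|-n+2}\leq 2^{-m}$ already forces any difference of two corrections to be strictly below $2^{-m}$, with no boundary case needed.
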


\begin{proof}
The upper bound is obtained by noting that $B_2(T)$ is a function of the
multiset of the depths of the leaves of the binary tree $T$. Therefore,
$B_2$ cannot take more values than the number $a_n$ of such multisets, whose
asymptotics were characterized by Flajolet and Prodinger in \cite{Flajolet1987}.

\pagebreak
To obtain the lower bound, we exhibit $2^{\Floor{n/2} - 1}$ rooted binary trees
with $n$ leaves whose $B_2$ indices are different. For any integer~$m$
and any $\mathbf{x} \in \Set{0, 1}^m$, let $T(\mathbf{x})$ denote the
ordered (that is, embedded in the plane) rooted binary tree obtained by the
following sequential construction: starting from the binary tree with two
leaves, for $k = 1, \ldots, m$,
\begin{itemize}
  \item If $x_k = 0$, graft a cherry on the left-most leaf with depth~$k$.
  \item If $x_k = 1$, graft a cherry on each of the two left-most leaves
    with depth~$k$.
\end{itemize}
This construction is illustrated in Figure~\ref{figDistinctB2}.

\begin{figure}[h!]
  \centering
  \captionsetup{width=0.95\linewidth}
  \includegraphics[width=0.95\linewidth]{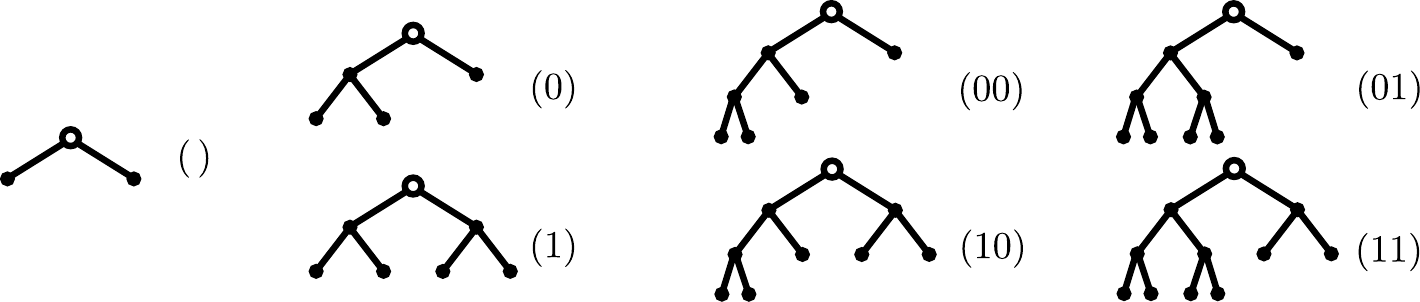}
  \caption{The trees $T(\mathbf{x})$ for $m \in \Set{0, 1, 2}$. Each tree
  is represented with the corresponding vector $\mathbf{x} \in \Set{0, 1}^m$ on the right.} 
\label{figDistinctB2}
\end{figure}

Clearly, $T(\mathbf{x})$ has $2 + \sum_{k = 1}^{m} (x_k + 1)$ leaves and, by
Corollary~\ref{corGraftingCherry},
\[
  B_2(T(\mathbf{x})) \;=\; 1 \;+\; \sum_{k = 1}^{m} (x_k + 1)\, 2^{-k} \,.
\]
As a result,
\[
  \Set*[\big]{B_2(T(\mathbf{x})) \suchthat \mathbf{x} \in \Set{0, 1}^m}
  \;=\; \Set{u_m + i\, 2^{-m} : i = 0, \ldots, 2^m - 1}
\]
(to see this, note that $B_2(T(\mathbf{x})) = u_m + \underline{\mathbf{x}}_{(2)}$,
where $\underline{\mathbf{x}}_{(2)} = \sum_{k} x_k 2^{-k}$ denotes
the dyadic rational of $\COInterval{0, 1}$ whose binary expansion is $\mathbf{x}$).
Thus, this construction generates $2^m$ trees whose
$B_2$ indices differ by at least $2^{-m}$. However, these trees do not have
the same number of leaves.

Let us first assume that $n$ is even. Then,
with $m = n/2 - 1$, $T(1\cdots1)$ has~$n$ leaves and every other tree
$T(\mathbf{x})$ with $\mathbf{x} \in \Set{0, 1}^m$ has:
\begin{mathlist}
\item $n - k_\mathbf{x}$ leaves, with $1 \leq k_\mathbf{x} \leq m$;
\item its left-most leaf at depth $m + 1$.
\end{mathlist}
Now, if we graft a caterpillar with $k_\mathbf{x} + 1$ leaves on the left-most
leaf at depth $m + 1$ and let $T'(\mathbf{x})$
denote the resulting tree, then:
\begin{mathlist}
\item[($\mathrm{i}^\prime$)] $T'(\mathbf{x})$ has $n$ leaves;
\item[($\mathrm{ii}^\prime$)] by Proposition~\ref{propGrafting},
  $B_2(T'(\mathbf{x})) - B_2(T(\mathbf{x})) = 2^{-(m+1)}(2 - 2^{-k+1}) \in
  \OpenInterval{0, 2^{-m}}$.
\end{mathlist}
Since the $B_2$ indices of the trees $T(\mathbf{x})$ differ by at
least $2^{-m}$, by point~($\mathrm{ii}^\prime$) the $B_2$ indices of
the trees $T'(\mathbf{x})$ are all different, thereby
proving the proposition in the case where $n$ is even.
 
If $n$ is odd, do the same construction, again with
$m = \lfloor n / 2\rfloor - 1$, to get $2^m$ trees with $n - 1$ leaves. Then, for
each of these trees, graft a cherry on the sibling of the left-most vertex at
depth $m + 1$ (which exists and is always a leaf). This extra step increases
$B_2$ by the same amount $2^{-(m + 1)}$ for every tree, concluding the proof.
\end{proof}

\end{document}